\documentclass[11pt,a4paper]{article}

\usepackage[utf8]{inputenc}
\usepackage{smallstyle}

\hypersetup{
  pdftitle={Tight Bounds for Repeated Balls-into-Bins},
  pdfauthor={Dimitrios Los, Thomas Sauerwald}
  }

\allowdisplaybreaks

\begin{document}

\title{Tight Bounds for Repeated Balls-into-Bins}

\author[]{Dimitrios Los\thanks{\texttt{dimitrios.los@cl.cam.ac.uk}} }
\author[]{Thomas Sauerwald\thanks{\texttt{thomas.sauerwald@cl.cam.ac.uk}}}
\affil[]{Department of Computer Science \& Technology, University of Cambridge}

\maketitle

\begin{abstract}
We study the \emph{repeated balls-into-bins} process introduced by Becchetti, Clementi, Natale, Pasquale and Posta~\cite{BCNPP19}. 
This process starts with $m$ balls arbitrarily distributed across $n$ bins. At each round $t=1,2,\ldots$, one ball is selected from each non-empty bin, and then placed it into a bin chosen independently and uniformly at random.  We prove the following results:
\begin{itemize}
    \item For any $n \leq m \leq \mathrm{poly}(n)$, we prove a lower bound of $\Omega(m/n \cdot \log n)$ on the maximum load. For the special case $m=n$, this matches the upper bound of $\Oh(\log n)$, as shown in~\cite{BCNPP19}. It also provides a positive answer to the conjecture in~\cite{BCNPP19} that for $m=n$ the maximum load is $\omega(\log n/ \log \log n)$ at least once in a polynomially large time interval. For $m \in [\omega(n), n \log n]$, our new lower bound disproves the conjecture in~\cite{BCNPP19} that the maximum load remains $\Oh(\log n)$.
    \item For any $n \leq m \leq \mathrm{poly}(n)$, we prove an upper bound of $\Oh(m/n \cdot \log n)$ on the maximum load for all steps of a polynomially large time interval. This matches our lower bound up to multiplicative constants.
    \item For any $m \geq n$, our analysis also implies an $\Oh( m^2 / n)$ waiting time to reach a configuration with a $\Oh(m/n \cdot \log m)$ maximum load, even for worst-case initial distributions.
    \item For any $m \geq n$, we show that every ball visits every bin in $\Oh(m \log m)$ rounds. For $m = n$, this improves the previous upper bound of $\Oh(n \log^2 n)$ in~\cite{BCNPP19}. We also prove that the upper bound is tight up to multiplicative constants for any $n \leq m \leq \mathrm{poly}(n)$.
\end{itemize}
\end{abstract}

\maketitle

\clearpage

\clearpage
\tableofcontents
~
\clearpage

\section{Introduction}

We consider the allocation processes involving $m$ balls (jobs or data items) to $n$ bins (servers or memory cells), by allowing each ball to choose from a set of randomly chosen bins. The goal is to allocate (or re-allocate) balls efficiently, while also keeping the load distribution balanced. The balls-into-bins framework has found numerous applications in 
hashing, load balancing, routing (we refer to the surveys~\cite{MR01} and~\cite{W17} for more details).

A classical sequential allocation algorithm is the \DChoice process introduced by Azar, Broder, Karlin and Upfal~\cite{ABKU99} and Karp, Richard, Luby, and Meyer auf der Heide~\cite{KLM96}, where for each ball to be allocated, we sample $d \geq 1$ bins uniformly and then place the ball in the least loaded of the $d$ sampled bins. It is well-known that for the \OneChoice process ($d=1$), the maximum load is \Whp\footnote{In general, with high probability refers to probability of at least $1 - n^{-c}$ for some constant $c > 0$.}~$\Theta(\log n/\log \log n)$ for $m = n$ and $m/n +\Theta\bigr( \sqrt{ m/n \cdot \log n } \bigr)$ for $m = \Omega(n \log n)$. In particular, this gap between maximum and average load grows significantly as $m/n \rightarrow \infty$, which is called the {\em heavily loaded case}. For $d=2$,~\cite{ABKU99} proved that the maximum load is only $m/n + \log_2 \log n+\Oh(1)$ for $m=n$. This result was generalized by Berenbrink, Czumaj, Steger and V\"{o}cking~\cite{BCSV06} who proved that the same guarantee also holds for $m \geq n$, in other words, even as $m/n \rightarrow \infty$, the difference between the maximum and average load remains a slowly growing function in $n$ that is independent of $m$.
This improvement of \TwoChoice over \OneChoice has been widely known as the ``power of two choices''.

In this work, we investigate the \textit{repeated balls-into-bins (RBB)} process, introduced by Becchetti, Clementi, Natale, Pasquale and Posta~\cite{BCNPP19}. In this process,  there are $m$ balls initially allocated arbitrarily across $n$ bins. In each round, one ball is removed from each non-empty bin and then each of these balls is allocated to one bin sampled uniformly at random (see \cref{fig:rbb_example}). This setting differs from the classical balls into bins setting in that the number of balls is fixed and the amount of balls we re-allocate in each round varies from $1$ to $n$. Unlike \TwoChoice (or \DChoice), this re-allocation is performed without inspecting the load of any bin or taking additional samples.

Becchetti et al.~\cite{BCNPP19} proved that for $m = n$, starting from an arbitrary configuration, \Whp~after $\Oh(n)$ rounds, the process reaches a maximum load of $\Oh(\log n)$ and remains in such a configuration for $\poly(n)$ rounds. Thus, the RBB process is a natural instance of a self-stabilizing system, and falls into a long line of research on random-walk based algorithms for stabilization and consensus~\cite{BCNPT16,C11,HP01,IJ90,PU89}. More recently, Cancrini and Posta~\cite{CP20} proved that the mixing time is $\Oh(L)$ where $L$ is the maximum load at the initial configuration. 

\textbf{Our Results.} In this work, we settle two conjectures stated in~\cite{BCNPP19} and prove tight bounds for the more general case with $n \leq m \leq \poly(n)$.

Becchetti et al.~\cite{BCNPP19} conjectured that the $\Oh(\log n)$ upper bound holds for all $m = \Oh(n \log n)$. They also conjectured that for $m = n$, the maximum load is $\omega(\log  n/\log \log n)$. We resolve both conjectures, proving an $\Omega(m/n \cdot \log n)$ lower bound on the maximum load \Whp~in any interval of length $\Omega(m^2/n^2 \cdot \log^4 n)$ and for any $n \leq m \leq \poly(n)$ (\cref{lem:lower_bound}). This disproves the first conjecture, but confirms the second one, showing that for $m = n$, the maximum load is \Whp~$\Theta(\log n)$.

For the case $m \geq n$, we also prove that starting from an arbitrary configuration after $\Oh(m^2/n)$ rounds, \Whp~we reach a configuration with a maximum load of $\Oh(m/n \cdot \log m)$ (\cref{lem:convergence}). For $n \leq m \leq \poly(n)$, we show that the process stabilizes in such a configuration there for at least $m^2$ rounds (\cref{thm:stabilization}). 

Becchetti et al.~\cite{BCNPP19} also studied the \textit{cover time} (or \textit{traversal time}) of a ball, which is the time required to visit all $n$ bins. For $m = n$, they proved an $\Oh(n \log^2 n)$ bound on the traversal time. For any $m \geq n$, we improve this to $\Oh(n \log m)$, and also show that it is tight up to constant factors for any $m=\poly(n)$ (\cref{pro:traversal}).

\textbf{Intuition and Techniques.} For the upper bound we use an exponential potential $\Phi$ with smoothing parameter $\Theta(n/m)$. Provided that $\Phi$ is $\poly(m)$, we immediately obtain the $\Oh(m/n \cdot \log m)$ bound on the maximum load. Our analysis exploits that after only $\Oh((m/n)^2)$ rounds, sufficiently many bins will become empty, which in turn will reduce the number of balls being re-allocated. This then helps to reduce the load of any non-empty bin, since these are guaranteed to lose one ball per round, but only receive in expectation less than one ball in total from the other non-empty bins. As we will prove, the actual equilibrium will have most bins being empty roughly every $\Oh(m/n)$ rounds. To establish this, we employ some martingale and drift-arguments to first prove that any bin which starts at load $\Oh(m/n)$, becomes empty after $\Oh( (m/n)^2 )$ rounds with constant probability $>0$. Secondly, we prove that if this happens to a fixed bin, the empty load state will be revisited $\Omega(m/n)$ times during the next $\Oh( (m/n)^2)$ rounds. In some sense, this is a generalization of the approach in~\cite{BCNPP19}, where they also bounded the fraction of empty bins for the case $m = n$.

A kind of reversed  argument is used for the lower bound. Here, the goal is to prove that each bin is only empty every $\Oh(m/n)$ rounds on average. This shows that the RBB process can be approximated by a \OneChoice process where at least an $1 - \Oh(n/m)$ fraction of the balls are allocated. For $t = \Omega(m^2/n^2 \cdot \log n)$, this yields a maximum load of $\Omega(m/n \cdot \log n)$. To prove that bins are not empty ``too often'', we establish a link between a quadratic potential and the number of empty bins, similar to that in \cite[Lemma 6.2]{LSS21}. This connection essentially implies that whenever the fraction of empty bins is $\omega(n/m)$, then the quadratic potential decreases. By aggregating sufficiently over many rounds, we can conclude that, on average, the number of empty bins cannot be too large.

\textbf{Further Related Work.} Cancrini and Posta investigated the behavior of the RBB process for a large number of rounds, 
and established ``propagation of chaos''~\cite{CP19}, meaning that under some conditions on the initial load distribution, the load of the bins become eventually independent. In~\cite{CP19}, the authors prove results for the RBB process considered here, while~\cite{CP21} considered more general re-allocation rules.
Another variant of the RBB setting was studied in~\cite{BFKMNW18}, where in each round one ball is deleted from each bin and an expected $\lambda n$ new balls arrive and are distributed in parallel to the bins. In contrast to the RBB model, this means that the number of balls in the system is not fixed.

The RBB is an instance of a discrete time closed Jackson network~\cite{J04,K76}. However, in RBB, updates are happening synchronously and in parallel, while in most queuing models updates occur asynchronously based on independent point processes. As also pointed out in~\cite{CP19,CP21}, this leads to a non-reversible Markov Chain, which seems to make the computation of the stationary distribution intractable. Furthermore, formal methods have been used to prove guarantees for RBB with $m = n$~\cite{BGHT22}. The RBB setting has also been applied to analyze protocols in short packet communications~\cite{YYWJ21}.

Czumaj, Riley and Scheideler~\cite{CRS03} studied a similar re-allocation process where in each round one random ball is allocated to a random of $d$ bin choices. These are also related to randomized rerouting protocols studied in~\cite{BFGGHM07,BHT11}. In another parallel allocation processes, Berenbrink, Czumaj, Englert, Friedetzky and Nagel~\cite{BCE12} proved an $\Oh(\log n)$ gap for the \TwoChoice process where balls are allocated in batches of $n$ balls and was recently improved to $\Oh(\log n / \log \log n)$ in~\cite{LS22noise}.

\textbf{Organization.} In \cref{sec:notation} we introduce some standard balls-into-bins notations and define the processes.  In \cref{sec:lower_bounds}, we prove our lower bound on the maximum load. In \cref{sec:upper_bounds}, we prove an upper bound on maximum load and also analyze the time until such configuration is reached and preserved (convergence time). In \cref{sec:traversal}, we analyze the traversal time. In \cref{sec:experiments}, we present some empirical results on the RBB process. We conclude the paper with a summary and a few open problems in \cref{sec:conclusions}.

\section{Notation and Definitions} \label{sec:notation}

We consider a set of $n$ bins labeled $[n]:=\left\{1,2,\ldots,n \right\}$.
By $x^{t}$ we denote the $n$-dimensional \textit{load vector} after $t$ rounds, and $x^{0}$ is the initial load vector. In our processes, no balls are added or removed, and the existing $m$ balls are only re-allocated; hence, $\sum_{i=1}^n x_i^t = m$ for all $t \geq 0$. 

By $F^t := \left| \left\{ i \in [n] \colon x_i^{t} = 0 \right\} \right|$ we denote the number of \textit{empty (free) bins} and by $f^t := \frac{1}{n} \cdot F^t$ the fraction of empty bins. Similarly $\kappa^t := n - F^t$ is the number of \textit{non-empty bins}. Since it will be important to track the number of empty bins over a time interval, we also define $F_{t_0}^{t_1}$ as the total number of pairs of empty bins and rounds in the entire interval $[t_0, t_1]$, i.e.,
\[
F_{t_0}^{t_1} := \sum_{t = t_0}^{t_1} F^t.
\]

\newcommand{\bin}[2]
{
 \draw [thick, rounded corners=6pt] (#1-0.35,#2+5) to ++(0,-5) to ++(0.7,0) to ++(0,5); 
}

\newcommand{\HBall}[4]
{
 \node (#4) [fill=yellow,minimum size=.6cm,circle,draw=red,ultra thick] at (#1,#2-0.45) {#3};
}

\newcommand{\Ball}[4]
{
 \node (#4) [fill=yellow,minimum size=.6cm,circle,draw=black,thin] at (#1,#2-0.45) {#3};
}

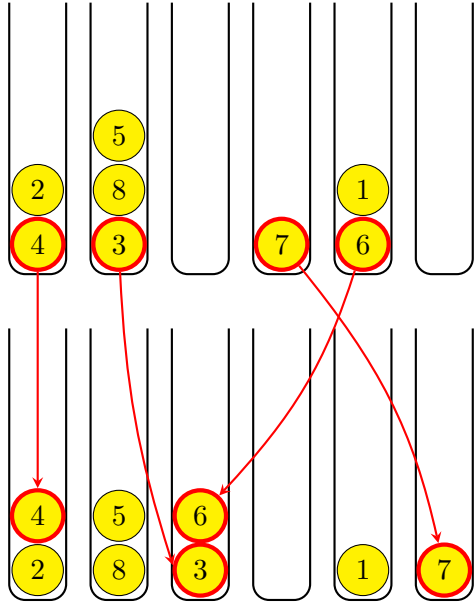
\begin{figure}
\begin{center}
\begin{tikzpicture}[xscale=1.07,yscale=0.72]

\bin{0}{0}
\bin{1}{0}
\bin{2}{0}
\bin{3}{0}
\bin{4}{0}
\bin{5}{0}

\HBall{0}{1}{4}{4}
\Ball{0}{2}{2}{2}
\HBall{1}{1}{3}{3}
\Ball{1}{2}{8}{8}
\Ball{1}{3}{5}{5}
\HBall{3}{1}{7}{7}
\HBall{4}{1}{6}{6}
\Ball{4}{2}{1}{1}

\begin{scope}[yshift=-6cm]

\bin{0}{0}
\bin{1}{0}
\bin{2}{0}
\bin{3}{0}
\bin{4}{0}
\bin{5}{0}

\HBall{0}{2}{4}{4n}
\HBall{2}{1}{3}{3n}
\HBall{5}{1}{7}{7n}
\HBall{2}{2}{6}{6n}

\Ball{0}{1}{2}{2}

\Ball{1}{1}{8}{8}
\Ball{1}{2}{5}{5}

\Ball{4}{1}{1}{1}

\end{scope}

\draw[-stealth, red, thick] (4) to (4n);
\draw[-stealth, red, thick, bend right=-10] (6) to (6n);
\draw[-stealth, red, thick, bend right=5] (3) to (3n.170);
\draw[-stealth, red, thick, bend right=-10] (7) to (7n);

\end{tikzpicture}
\caption{Illustration of one step of RBB with $m=8$ balls and $n=6$ bins. The balls highlighted in red are re-allocated to bins chosen randomly among $\{1,2,\ldots,6\}$.}
\label{fig:rbb_example}
\end{center}
\end{figure}

\begin{framed}
\vspace{-.45em} \noindent
\underline{RBB (Repeated Balls-into-Bins Process):} \\
\textsf{Iteration:} At each round $t = 1,2,\ldots$
\begin{itemize}[topsep=0pt]
    \item For each of the $\kappa^t = n-F^t$ non-empty bins, remove one ball and re-allocate it to a bin chosen independently and uniformly at random among $[n]$.
\end{itemize}
\end{framed}

More specifically, in each round we choose $\kappa^t$ bins $z_{1}^t, \ldots , z_{\kappa^t}^t \in [n]$ uniformly at random and the load vector at step $t+1$ is given by
\[
  x_i^{t+1} := x_i^t - \mathbf{1}_{x_i^t > 0}  + \sum_{j = 1}^{\kappa^t} \mathbf{1}_{z_j^t = i}, \quad \text{for each }i \in [n].
\]

Hence, we can express the marginal load distribution of an arbitrary bin $i \in [n]$ at round $t \geq 0$ (i.e., having completed $t$ iterations before), as
\begin{align}
 x_i^{t+1} = x_i^{t}  - \mathbf{1}_{x_i^t > 0} +  \Bin( \kappa^t,1/n), \label{eq:marginal}
\end{align}
where with slight abuse of notation, we write $\Bin( \kappa^t,1/n)$ as a placeholder for a random variable (independent of $\mathfrak{F}^t$, the entire history of the process up to round $t$) which has distribution $\Bin( \kappa^t,1/n)$.

Similarly, assuming each bin acts as a FIFO queue on the incoming and departing balls, we can follow the trajectory of an arbitrary single ball. Only if the ball is at the front of its queue, it will be re-allocated to a bin chosen randomly from $[n]$ in the next round. A natural question is the so-called \emph{cover time} (or \emph{traversal time}), the expected time until every ball has been allocated to each bin~\cite{BCNPP19}. This is related to the well-studied \emph{cover time} of parallel random walks on graphs, but with the constraint that only one walk can leave each vertex (=bin) at a time.

\section{Lower Bound on the Maximum Load for \texorpdfstring{$n \leq m \leq \poly(n)$}{n <= m <= poly(n)}} \label{sec:lower_bounds}

In this section we prove a lower bound on the maximum load of $\Omega(m/n \cdot \log n)$ which holds \Whp~every $\Oh(m^2/n^2 \cdot \log^4 n)$ steps for any $n \leq m \leq \poly(n)$. This matches the upper bound of \cref{sec:stabilization} up to multiplicative constants.

\subsection{Quadratic Potential and Empty Bins}

We now define the \textit{quadratic potential function}, as
\[
\Upsilon^t := \sum_{i = 1}^n (x_i^t)^2.
\]

We prove an important relation between the quadratic potential $\Upsilon^t$ and the number of empty bins $F^t$. These relations are similar to the ones used in~\cite{LSS21} to show that the absolute value potential is small in a constant fraction of the rounds. The key insight is that the quadratic potentials drops in expectation as soon as the fraction of empty bins is of order $\Omega(n/m)$. This will be crucial in the derivation of our lower bounds.

\begin{lem} \label{lem:quadratic_empty} \label{lem:quadratic_drop}
Consider the RBB setting with any $m \geq 1$. Then, for any round $t \geq 0$,
\begin{align*}
\Ex{\left. \Upsilon^{t+1} \,\right|\, \mathfrak{F}^t}
 & \leq \Upsilon^t - 2 \cdot \frac{m}{n} \cdot F^t + 2n.
\end{align*}
\end{lem}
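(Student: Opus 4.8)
The plan is to compute $\Ex{\Upsilon^{t+1} \mid \mathfrak{F}^t}$ directly from the marginal update rule \eqref{eq:marginal}, namely $x_i^{t+1} = x_i^t - \mathbf{1}_{x_i^t>0} + Y_i$, where the $Y_i$ jointly have the multinomial distribution coming from throwing $\kappa^t = n - F^t$ balls uniformly into $n$ bins. First I would expand the square:
\[
(x_i^{t+1})^2 = \bigl(x_i^t - \mathbf{1}_{x_i^t>0}\bigr)^2 + 2\bigl(x_i^t - \mathbf{1}_{x_i^t>0}\bigr) Y_i + Y_i^2,
\]
and sum over $i$. The first term contributes $\sum_i (x_i^t)^2 - 2 x_i^t \mathbf{1}_{x_i^t>0} + \mathbf{1}_{x_i^t>0} = \Upsilon^t - 2m + \kappa^t$ (using $\sum_i x_i^t = m$ and $\sum_i \mathbf{1}_{x_i^t>0} = \kappa^t$). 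So the deterministic part already produces the crucial $-2m$ and a spurious $+\kappa^t \le n$.

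Next I would take expectations of the stochastic terms. Each $Y_i \sim \Bin(\kappa^t, 1/n)$ so $\Ex{Y_i} = \kappa^t/n$ and $\Ex{Y_i^2} = \kappa^t/n \cdot (1-1/n) + (\kappa^t/n)^2$. The cross term gives $2\sum_i (x_i^t - \mathbf{1}_{x_i^t>0})\cdot \kappa^t/n = \tfrac{2\kappa^t}{n}(m - \kappa^t)$, and $\sum_i \Ex{Y_i^2} = \kappa^t(1-1/n) + (\kappa^t)^2/n \le \kappa^t + (\kappa^t)^2/n$. Collecting everything:
\[
\Ex{\Upsilon^{t+1}\mid\mathfrak{F}^t} \le \Upsilon^t - 2m + \kappa^t + \tfrac{2\kappa^t}{n}(m-\kappa^t) + \kappa^t + \tfrac{(\kappa^t)^2}{n}.
\]
The remaining work is purely algebraic bookkeeping in $\kappa^t$ and $F^t = n - \kappa^t$: the $-2m + \tfrac{2\kappa^t}{n}m = -2m\cdot\tfrac{F^t}{n}$ piece is exactly the claimed $-2\tfrac{m}{n}F^t$ term, and one checks that the leftover $2\kappa^t - \tfrac{2(\kappa^t)^2}{n} + \tfrac{(\kappa^t)^2}{n} = 2\kappa^t - \tfrac{(\kappa^t)^2}{n}$ is at most $2\kappa^t \le 2n$, since $\kappa^t \le n$. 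This yields the bound.

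The main obstacle is not conceptual but a matter of keeping the variance/second-moment terms straight: the $Y_i$ are not independent (they sum to $\kappa^t$), so one must be careful that only $\Ex{Y_i^2}$ and $\Ex{Y_i}$ enter — there are no cross-covariance terms $\Ex{Y_iY_j}$ because the expansion of $\sum_i(x_i^{t+1})^2$ only involves squares $Y_i^2$, never products $Y_iY_j$. Once that is recognized, the computation goes through cleanly with only the marginal Binomial moments, and the final step is a one-line estimate using $\kappa^t \le n$. An alternative, slightly cleaner route is to write $x_i^{t+1} - x_i^t = -\mathbf{1}_{x_i^t>0} + Y_i$ and use $\Ex{\Upsilon^{t+1} - \Upsilon^t\mid\mathfrak{F}^t} = \sum_i \bigl(2 x_i^t \Ex{x_i^{t+1}-x_i^t\mid\mathfrak{F}^t} + \Ex{(x_i^{t+1}-x_i^t)^2\mid\mathfrak{F}^t}\bigr)$, which isolates the drift term $2\sum_i x_i^t(\tfrac{\kappa^t}{n} - \mathbf{1}_{x_i^t>0}) = \tfrac{2\kappa^t}{n}m - 2m = -\tfrac{2m}{n}F^t$ immediately, leaving only the bounded second-moment remainder to control by $2n$.
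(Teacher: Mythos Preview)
Your proposal is correct and follows essentially the same approach as the paper: both compute $\Ex{(x_i^{t+1})^2\mid\mathfrak{F}^t}$ directly via the marginal $\Bin(\kappa^t,1/n)$ moments, then aggregate and simplify using $\sum_i x_i^t = m$ and $\kappa^t \le n$. The only cosmetic difference is that the paper splits explicitly into the cases $x_i^t \geq 1$ and $x_i^t = 0$ before computing, whereas you handle both at once through the indicator $\mathbf{1}_{x_i^t>0}$; your observation that no cross terms $Y_iY_j$ appear (so only the marginal binomial moments are needed) is exactly what makes both arguments go through.
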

\begin{proof}
Let us define the binomial random variable $Z \sim \mathrm{Bin}(\kappa^t,  \frac{1}{n})$. For any bin $i \in [n]$ with load $x_i^t \geq 1$,
\begin{align*}
\Ex{\left. \Upsilon_i^{t+1} \,\right|\, \mathfrak{F}^t}
 & = \sum_{z = 0}^{\kappa^t} (x_i^t + z - 1)^2 \cdot \binom{\kappa^t}{z} \cdot \frac{1}{n^z} \cdot \Big(1 - \frac{1}{n} \Big)^{\kappa^t - z} \\
 & = (x_i^t)^2 \cdot \sum_{z = 0}^{\kappa^t} \binom{\kappa^t}{z} \cdot \frac{1}{n^z} \cdot \Big(1 - \frac{1}{n} \Big)^{\kappa^t - z} \\
 & \quad \quad + 2 \cdot x_i^t \cdot \sum_{z = 0}^{\kappa^t} (z - 1) \cdot \binom{\kappa^t}{z} \cdot \frac{1}{n^z} \cdot \Big(1 - \frac{1}{n} \Big)^{\kappa^t - z} \\
 & \quad \quad + \sum_{z = 0}^{\kappa^t} (z - 1)^2 \cdot \binom{\kappa^t}{z} \cdot \frac{1}{n^z} \cdot \Big(1 - \frac{1}{n} \Big)^{\kappa^t - z} \\
 & = (x_i^t)^2 \cdot \Ex{Z} + 2 \cdot x_i^t \cdot \Ex{Z - 1} + \Ex{(Z - 1)^2} \\
 & \stackrel{(a)}{=} (x_i^t)^2 + 2 \cdot x_i^t \cdot \Big(\frac{\kappa^t}{n} - 1\Big) + \kappa^t \cdot (\kappa^t - 1)\cdot \frac{1}{n^2} - \frac{\kappa^t}{n} + 1 \\
 & \leq (x_i^t)^2 + 2 \cdot x_i^t \cdot \Big(\frac{\kappa^t}{n} - 1\Big) + 2,
\end{align*}
having used in $(a)$ that $\Ex{Z} = \frac{\kappa^t}{n}$ and $\Ex{Z^2} = \kappa^t \cdot \frac{1}{n} \cdot (1-\frac{1}{n}) + (\kappa^t)^2 \cdot ( \frac{1}{n})^2$, and thus 
\[
\Ex{(Z-1)^2} =\kappa^t \cdot  \frac{1}{n} \cdot \left(1- \frac{1}{n} \right) + (\kappa^t)^2 \cdot \left( \frac{1}{n} \right)^2 - 2 \cdot  \frac{\kappa^t}{n} + 1 = \kappa^t \cdot (\kappa^t - 1) \cdot \frac{1}{n^2} - \frac{\kappa^t}{n} + 1.
\]
Similarly for an empty bin $i \in [n]$ with $x_i^t=0$, the contribution is
\begin{align*}
\Ex{\left. \Upsilon_i^{t+1} \,\right|\, \mathfrak{F}^t}
 & = \sum_{z = 0}^{\kappa^t} z^2 \cdot \binom{\kappa^t}{z} \cdot \frac{1}{n^z} \cdot \Big(1 - \frac{1}{n} \Big)^{\kappa^t - z} = \frac{\kappa^t}{n} + \frac{\kappa^t \cdot (\kappa^t -1 )}{n^2}.
\end{align*}
Hence, by aggregating the contributions of the $\kappa^t$ bins non-empty bins and the $n - \kappa^t$ empty bins we obtain
\begin{align*}
\Ex{\Upsilon^{t+1} \mid \mathfrak{F}^t}
 &\leq \Upsilon^t + \sum_{i \in [n] \colon x_i^t \geq  1 } \left( 2 \cdot x_i^t \cdot \left( \frac{\kappa^t}{n} -1 \right) + 2 \right) 
 + \sum_{i \in [n] \colon x_i^t = 0 }  \left( \frac{\kappa^t}{n} + \frac{\kappa^t \cdot (\kappa^t -1 )}{n^2} \right)
 \\
 &\leq \Upsilon^t + \left( \frac{\kappa^t}{n} - 1 \right) \cdot 2 \cdot m + 2 \kappa^t 
 + (n-\kappa^t) \cdot 2 \\
 & = \Upsilon^t - 2 \cdot \frac{m}{n} \cdot F^t + 2n,
\end{align*}
where in the last inequality we used that $\kappa^t \leq n$, which finishes the proof.
\end{proof}

\subsection{Upper Bounding the Number of Empty Bins}

The key insight is that the quadratic potential drops in expectation as soon as the fraction of empty bins is of order $\Omega(n/m)$. This is crucial to upper bound the number of empty bins in an interval. This relation is similar to the ones used in~\cite{LS22noise,LSS21}, where an interplay between the quadratic potential and the absolute value potential was used to show that the absolute value potential is small in a constant fraction of the rounds.

The next lemma shows that for any sufficiently long interval, either there is a maximum load that is $\Omega(m/n \cdot \log n)$ or the fraction of empty bins in the interval is $\Oh(n/m)$. Note that we indeed need the interval to be long enough as starting with the perfectly balanced load vector may require several rounds to reach a gap of $\Omega(m / n \cdot \log n)$ even for the \OneChoice process. 

\begin{lem} \label{lem:quadratic_eps}
Consider the RBB process with any $n \leq m \leq n^k$ for some constant $k \geq 1$ and any $1 \leq \hat{c} \leq n$. Then, for any round $t_0 \geq 0$ and  $t_1 := t_0 + \hat{c} \cdot \big( \frac{m}{n} \cdot \log n \big)^2$,
\begin{align*}
\Pro{ \left. \left\lbrace F_{t_0}^{t_1} < \frac{n^2}{4m} \cdot (t_1 - t_0 + 1) \right\rbrace \cup \bigcup_{t \in [t_0, t_1]} \left\lbrace \max_{i \in [n]} x_i^t > \frac{m}{n} \cdot \log n\right\rbrace ~\right|~ \mathfrak{F}^{t_0}} 
&\geq 1 - e^{ - \frac{\hat{c}}{18} }.
\end{align*}
\end{lem}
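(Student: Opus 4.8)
The plan is to use \cref{lem:quadratic_drop} to turn a running lower bound on the number of empty bins into a supermartingale-type statement about the quadratic potential $\Upsilon^t$, and then derive a contradiction whenever the maximum load stays small throughout the interval. Suppose, working on the complement event, that throughout $[t_0,t_1]$ the maximum load never exceeds $\frac{m}{n}\cdot\log n$ \emph{and} the total count of empty-bin slots satisfies $F_{t_0}^{t_1}\geq \frac{n^2}{4m}\cdot(t_1-t_0+1)$. The bound on the maximum load gives $\Upsilon^t\leq \max_i x_i^t\cdot\sum_i x_i^t\leq \frac{m}{n}\cdot\log n\cdot m = \frac{m^2}{n}\cdot\log n$ for every $t$ in the interval, so $\Upsilon^t$ is trapped in a polynomially bounded window; in particular $\Upsilon^{t_1}-\Upsilon^{t_0}\geq -\frac{m^2}{n}\cdot\log n$. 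On the other hand, summing the one-step drift inequality from \cref{lem:quadratic_drop} over $t\in[t_0,t_1-1]$ and taking expectations, the expected change is at most $-\frac{2m}{n}\cdot F_{t_0}^{t_1-1} + 2n\cdot(t_1-t_0)$, which, once $F_{t_0}^{t_1}$ is large, is a substantial negative quantity. Quantitatively, $\frac{n^2}{4m}$ empty slots per round times $\frac{2m}{n}$ gives a drift of about $-\frac{n}{2}$ per round, beating the $+2n$ error term would require a sharper split; instead the right move is to run the argument in reverse — either the empty-bin count is small (the first event in the union) or $\Upsilon^t$ is forced downward fast enough that, combined with the lower barrier $\Upsilon^t\geq 0$ (or the polynomial upper bound on its starting value), it cannot have been large throughout, contradicting that the max load stayed at $\frac{m}{n}\log n$ for the whole interval. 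The length $t_1-t_0 = \hat c\cdot(\frac{m}{n}\log n)^2$ is chosen exactly so that the accumulated negative drift, of order $\hat c\cdot\frac{m}{n}\cdot\frac{n^2}{4m}\cdot(\frac{m}{n}\log n)^2 = \Theta(\hat c\cdot n\cdot(\frac mn)^2\log^2 n)$, dominates the polynomial ceiling $\frac{m^2}{n}\log n$ on $\Upsilon$ once $\hat c$ is at least an absolute constant, giving the needed contradiction in expectation.

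To upgrade the in-expectation contradiction to the claimed high-probability bound $1-e^{-\hat c/18}$, I would introduce the stopped/auxiliary process $Y^t := \Upsilon^t + \frac{2m}{n}\sum_{s=t_0}^{t-1}F^s - 2n\cdot(t-t_0)$, which by \cref{lem:quadratic_drop} is a supermartingale with respect to $(\mathfrak F^t)$. The one-step differences of $Y^t$ are bounded: $\Upsilon^{t+1}-\Upsilon^t$ changes by at most $\Oh(\max_i x_i^t + \text{(balls moved)})$, and on the good event the max load is $\Oh(\frac mn\log n)$ while at most $n$ balls move, so $|Y^{t+1}-Y^t|$ is bounded by a quantity of order $\frac mn\cdot n\cdot\log n$ plus lower-order terms — or, more cleanly, one can bound the relevant increment in terms of the per-round contribution and apply an Azuma–Hoeffding inequality to $Y^{t}$ over the $\hat c\cdot(\frac mn\log n)^2$ steps. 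Then $\Pro{Y^{t_1} - Y^{t_0} \geq \lambda}\leq \exp(-\lambda^2 / (2\sum (\text{increment bounds})^2))$; plugging in $\lambda$ of the order of the expected drift and the increment bound gives, after the algebra is arranged with the constant $18$, the tail $e^{-\hat c/18}$. On the complement of this tail event, $Y^{t_1}-Y^{t_0}$ is at most a modest positive quantity, i.e. $\Upsilon^{t_1}-\Upsilon^{t_0}\leq -\frac{2m}{n}F_{t_0}^{t_1-1} + 2n(t_1-t_0) + \lambda$; feeding in $F_{t_0}^{t_1}\geq\frac{n^2}{4m}(t_1-t_0+1)$ forces $\Upsilon^{t_1}$ to be negative, an impossibility. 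Hence on that high-probability event, \emph{either} the empty-bin count is below $\frac{n^2}{4m}(t_1-t_0+1)$ \emph{or} the max load exceeded $\frac mn\log n$ at some step — which is exactly the event whose probability we are bounding below.

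The main obstacle I anticipate is controlling the martingale increments of $Y^t$ tightly enough to land the constant $18$: the naive bound $|\Upsilon^{t+1}-\Upsilon^t|\le\Oh(\frac mn\log n\cdot n)$ per step is too lossy because $\Upsilon$ can jump by $\Theta((\max_i x_i^t)\cdot\kappa^t)$ in the worst single round, so one needs either (i) to work with the conditional variance $\Var{\Upsilon^{t+1}\mid\mathfrak F^t}$ — which is much smaller, of order $\Oh(m)$ per step using that only $\kappa^t\le n$ independent Bernoulli$(1/n)$ increments enter and $\sum_i x_i^t = m$ — and apply a Freedman-type (bounded-variance) martingale concentration inequality rather than vanilla Azuma, or (ii) to condition on the good (small max-load) event to get a clean absolute increment bound and handle the boundary of that event by a stopping-time argument. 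I expect route (i) via a Freedman/Bernstein inequality for supermartingales to be the cleanest: the sum of conditional variances over the interval is $\Oh(m\cdot(\frac mn\log n)^2)$, and matching this against the target deviation $\Theta(\hat c\cdot n(\frac mn)^2\log^2 n)$ yields a tail of the form $\exp(-\Theta(\hat c))$ with the explicit constant coming out as $1/18$ after the arithmetic. Everything else — the deterministic $\Upsilon\le\frac{m^2}{n}\log n$ ceiling on the good event, the telescoping of the drift, and the final sign contradiction — is routine bookkeeping.
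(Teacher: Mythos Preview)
Your proposal is essentially the paper's proof: the auxiliary process $Y^t=\Upsilon^t+\frac{2m}{n}\sum_{s=t_0}^{t-1}F^s-2n(t-t_0)$ is exactly the paper's $Z^t$, and the endgame contradiction (plug in the lower bound on $F_{t_0}^{t_1}$, use $\Upsilon^{t_0}\le n\cdot(\frac{m}{n}\log n)^2$, force $\Upsilon^{t_1+1}<0$) is identical.

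Where you diverge is the concentration step, and here you have two small misjudgments. First, the ``naive'' increment bound $|\Upsilon^{t+1}-\Upsilon^t|=\Oh(m\log n)$ is \emph{not} too lossy: the paper stops $Z^t$ at $\tau=\min\{t:\max_i x_i^t>\frac{m}{n}\log n\}$, shows (via \cref{lem:bound_upsilon_change}) that on $\{t<\tau\}$ the increment is at most $3m\log n$ with probability $1-n^{-\omega(1)}$, and applies Azuma with a bad event (\cref{thm:chu_lu_thm_8_3}) with $\lambda=\hat c\cdot\frac{m^2}{n}\log^2 n$; the arithmetic $\frac{\lambda^2}{2\cdot(t_1-t_0)\cdot(3m\log n)^2}=\frac{\hat c}{18}$ lands the constant exactly. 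This is precisely your ``route~(ii)'', so you already had the right path and need not detour through Freedman. Second, your variance claim in route~(i) is off: the conditional variance of $\Upsilon^{t+1}-\Upsilon^t$ is not $\Oh(m)$ but $\Theta(\Upsilon^t)$ (each of the $\kappa^t$ re-allocated balls lands in bin $i$ with probability $1/n$ and perturbs $\Upsilon$ by $\Theta(x_i^t)$, so the per-ball variance is $\frac{1}{n}\sum_i(x_i^t)^2$), which means you would still need the max-load bound---hence the stopping time---to control it. So route~(i) buys nothing over route~(ii) here.
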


\begin{proof}
We define for any $t \geq t_0$ the sequence
\[
 Z^{t} := \Upsilon^{t} - 2 \cdot (t-t_0) \cdot n + 2 \cdot \frac{m}{n} \cdot F_{t_0}^{t-1},
\]
where $F_{t_0}^{t_0-1}=0$.
This sequence forms a super-martingale since by \cref{lem:quadratic_drop},
\begin{align*}
\Ex{\left. Z^{t+1} \,\right|\, \mathfrak{F}^{t}} 
  & = \Ex{\Upsilon^{t+1} - 2 \cdot (t-t_0+1) \cdot n + 2 \cdot \frac{m}{n} \cdot F_{t_0}^{t} ~\Big|~ \mathfrak{F}^{t}} \\
  & = \Ex{\left. \Upsilon^{t+1} \,\right|\, \mathfrak{F}^t} - 2 \cdot (t-t_0+1) \cdot n + 2 \cdot \frac{m}{n} \cdot F_{t_0}^{t} \\
  & \leq \Upsilon^{t} + 2 \cdot n - 2 \cdot \frac{m}{n} \cdot F^t - 2 \cdot (t-t_0+1) \cdot n + 2 \cdot \frac{m}{n} \cdot F_{t_0}^{t} \\
  & = \Upsilon^t - 2 \cdot (t-t_0 ) \cdot n + 2 \cdot \frac{m}{n} \cdot F_{t_0}^{t-1} \\
  &= Z^{t}.
\end{align*}

Further, let $\tau:=\min\{ t \geq t_0 \colon \max_{i \in [n]} x_i^{t} > \frac{m}{n} \cdot \log n \}$ and consider the stopped random variable
\[
 \tilde{Z}^{t} := Z^{t \wedge \tau},
\]
which is then also a super-martingale.

To prove concentration of $\tilde{Z}^{t}$, we will now derive an upper bound on $\left| \tilde{Z}^{t+1} - \tilde{Z}^{t} \right|$ conditional on $\mathfrak{F}^{t}$.
\medskip 

\noindent\textbf{Case 1} [$t \geq \tau$].
In this case, $\tilde{Z}^{t+1} = Z^{(t+1) \wedge \tau} = Z^{\tau}$, and similarly, $\tilde{Z}^{t} = Z^{t \wedge \tau} = Z^{\tau}$, so $| \tilde{Z}^{t+1} - \tilde{Z}^{t}|=0$.
\medskip

\noindent\textbf{Case 2} [$t < \tau$]. Hence for $t$ we have $\max_{i \in [n]} x_i^t \leq \frac{m}{n} \cdot \log n$ and thus \cref{lem:bound_upsilon_change} implies that the biggest change in the quadratic potential is \Whp~at most $2 \cdot m \cdot \log n$ and under this condition,
\[
|\tilde{Z}^{t+1} - \tilde{Z}^t| \leq 2 \cdot m \cdot \log n + 2 n + 2 \cdot \frac{m}{n} \cdot n \leq 3 \cdot m \cdot \log n.
\]
Combining the two cases above, we conclude,
\[
\Pro{\bigcap_{t \in [t_0, t_1 - 1]} \left\{ |\tilde{Z}^{t+1} - \tilde{Z}^t| \leq 3 \cdot m \cdot \log n \right\} } \geq 1 - n^{-\omega(1)} \cdot (t_1 - t_0) \geq 1 - n^{-\omega(1)},
\]
since $t_1 - t_0 \leq \poly(n)$.

Using the concentration inequality \cref{thm:chu_lu_thm_8_3} with bad event,
$\mathcal{B}^t := \neg \bigcap_{t \in [t_0, t]} \{ |\tilde{Z}^{t+1} - \tilde{Z}^t| \leq 3 \cdot m \cdot \log n \}
$ and $\lambda = \hat{c} \cdot \frac{m^2}{n} \cdot \log^2 n$, we get
\begin{align*}
 \Pro{ \tilde{Z}^{t_1+1} - \tilde{Z}^{t_0} > \lambda} 
 &\leq \exp\left( - \frac{ \lambda^2 }{ 2 \cdot \sum_{t=t_0}^{t_1} (3 \cdot m \cdot \log n)^2  } \right) + \Pro{\mathcal{B}} \\ 
 &= \exp\left( - \frac{ \hat{c}^2 \cdot \Big(\frac{m^2}{n} \cdot \log^2 n \Big)^2 }{ 18 \cdot \hat{c} \cdot \big( \frac{m}{n} \cdot \log n \big)^2 \cdot (m \cdot \log n)^2 } \right) + \Pro{\mathcal{B}}\\
 &\leq e^{ - \frac{\hat{c}}{18} } + n^{-\omega(1)} \leq 2 \cdot e^{ - \frac{\hat{c}}{18} },
\end{align*}
Thus,  \[\Pro{ \left\lbrace Z^{t_1+1}  \leq Z^{t_0} + \lambda \right\rbrace \cup \bigcup_{t \in [t_0, t_1]} \left\lbrace \max_{i \in [n]} x_i^t \geq \frac{m}{n} \cdot \log n\right\rbrace} < 1 - 2 \cdot e^{ - \frac{\hat{c}}{18} }.
\]
For the sake of a contradiction, assume now that
\[
  F_{t_0}^{t_1} \geq \frac{4n^2}{m} \cdot (t_1 - t_0 + 1).
\]
If $Z^{t_1+1} \leq Z^{t_0} + \lambda$ holds, then we have
\[
\Upsilon^{t_1+1}  - 2 \cdot (t_1-t_0+1) \cdot n + 2 \cdot \frac{m}{n} \cdot F_{t_0}^t \leq \Upsilon^{t_0} + \lambda .\]Rearranging the inequality above gives
\begin{align} \Upsilon^{t_1+1} & \leq \Upsilon^{t_0} + \lambda + 2 \cdot (t_1-t_0+1) \cdot n - 2 \cdot \frac{m}{n} \cdot F_{t_0}^t \notag  \\
  & \leq \Upsilon^{t_0} + \lambda + 2 \cdot (t_1-t_0+1) \cdot n - 8 \cdot n \cdot (t_1 - t_0+1)\notag \\
  & \leq \Upsilon^{t_0} + \lambda - 6 \cdot (t_1 - t_0 +1) \cdot n.\label{eq:upsilon_contradiction}
\end{align} 
Recall that we start from a round $t_0$ where $\max_{i \in [n]} x_i^{t_0} \leq \frac{m}{n} \cdot \log n$, and therefore also $\Upsilon^{t_0} \leq n \cdot \big( \frac{m}{n} \cdot \log n \big)^2$. Thus, by \eqref{eq:upsilon_contradiction} we have   
\[\Upsilon^{t_1+1} \leq n \cdot \Big( \frac{m}{n} \cdot \log n \Big)^2 + \hat{c} \cdot n \cdot \Big( \frac{m}{n} \cdot \log n \Big)^2 - 6 \cdot \hat{c} \cdot \Big( \frac{m}{n} \cdot \log n \Big)^2 \cdot n < 0\]
which is a contradiction for large $n$ since $\hat{c} \geq 1$. We conclude that if $Z^{t_1+1} \leq Z^{t_0} + \lambda$, then $F_{t_0}^{t_1} < \frac{n^2}{4m} \cdot (t_1 - t_0 + 1)$ or the stopping time was reached, i.e.
\begin{align*}
 & \Pro{ \left\lbrace F_{t_0}^{t_1} < \frac{n^2}{4m} \cdot (t_1 - t_0 + 1) \right\rbrace \cup \bigcup_{t \in [t_0, t_1]} \left\lbrace \max_{i \in [n]} x_i^t \geq \frac{m}{n} \cdot \log n\right\rbrace } \geq 1 - 2 \cdot e^{ - \frac{\hat{c}}{18} }. \qedhere
\end{align*}
\end{proof}

\subsection{Completing the Proof of the Lower Bound}

To complete the derivation of the lower bound we need to show that in an interval of length $T = \Theta((m/n \cdot \log n)^2)$ with an $\Oh(n/m)$ fraction of empty bins, the maximum load is $\Omega(m/n \cdot \log n)$. This follows by coupling the allocations of the RBB process in the interval with a \OneChoice process with $T \cdot (1 - \Oh(n/m))$ balls. By the following standard expression, for the maximum load, setting $c := \frac{(1 - \gamma)^2}{200} \cdot \frac{1}{\gamma^2}$ (for $\gamma = \Theta(n/m)$), we get the $\Omega(m/n \cdot \log n)$ lower bound on the maximum load for the RBB setting.

\newcommand{\OneChoiceCnlogn}{
Consider the \OneChoice process for $m = c n \log n$ balls where $c \geq 1/\log n$.
Then, we have
\[
\Pro{ \max_{i \in [n]} y_i^{m} \geq \Big(c + \frac{\sqrt{c}}{10}\Big) \cdot \log n} \geq 1 - n^{-2}.
\]}
{\renewcommand{\thelem}{\ref{lem:one_choice_cnlogn}}
	\begin{lem}[Restated, page~\pageref{lem:one_choice_cnlogn}]
\OneChoiceCnlogn
	\end{lem}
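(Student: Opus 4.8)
The plan is to prove the standard anti-concentration lower bound for the maximum load in \OneChoice via a second-moment / Chebyshev argument on the number of bins that receive at least a certain threshold of balls. Fix bin $i$ and let $Y_i := y_i^m \sim \Bin(m, 1/n)$ with $m = cn\log n$, so $\Ex{Y_i} = c\log n$. Set the target threshold $k := \lceil (c + \sqrt{c}/10)\log n \rceil$. The first step is to lower bound $p := \Pro{Y_i \geq k}$. Since $k - \Ex{Y_i} = \Theta(\sqrt{c}\log n)$ which is within a constant factor of $\sqrt{\Var{Y_i}} \cdot \sqrt{\log n}$ (as $\Var{Y_i} \approx c\log n$), this is a moderate-deviation regime and one expects $p = n^{-\Theta(1)}$ with the exponent strictly less than $1$; concretely, using a sharp Poisson/binomial tail estimate (e.g. a Stirling-based bound on $\Pro{\Bin(m,1/n) = k}$, or the reverse Chernoff bound in the form $\Pro{\Bin(m,1/n)\geq k} \geq e^{-k \ln(k/(mp')) - \ldots}$ with $p' = 1/n$), one checks that for the chosen constant $1/10$ we get $p \geq n^{-1 + \delta}$ for some $\delta > 0$, so that $np \to \infty$.

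**The main calculation** is then a variance bound. Let $N := \sum_{i=1}^n \mathbf{1}_{Y_i \geq k}$ be the number of bins with load at least $k$; then $\Ex{N} = np \to \infty$. The loads $Y_1,\dots,Y_n$ are negatively associated (they are the cell counts of a multinomial), so $\Var{N} \leq \Ex{N}$, and Chebyshev gives $\Pro{N = 0} \leq \Var{N}/(\Ex{N})^2 \leq 1/\Ex{N} = 1/(np) \leq n^{-\delta}$. Since we want the cleaner bound $1 - n^{-2}$, I would instead either (i) pick the constants so that $np \geq n^{2}$ — which does hold here since $p \geq n^{-1+\delta}$ forces $np \geq n^{\delta}$, and one may need $\delta$ bounded below, or more carefully re-examine whether the exponent is small enough; or (ii) more robustly, partition the $n$ bins into $n/\log n$ groups, observe that within disjoint groups the relevant events are independent after conditioning appropriately, or simply note $\Pro{N=0} \leq \prod_i (1 - p) \cdot(\text{correction for negative correlation})$. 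The cleanest route: by negative association, $\Pro{\bigcap_i \{Y_i < k\}} \leq \prod_{i} \Pro{Y_i < k} = (1-p)^n \leq e^{-np} \leq e^{-n^{\delta}} \leq n^{-2}$ for large $n$. This directly yields $\Pro{\max_i y_i^m \geq k} \geq 1 - n^{-2}$.

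**The main obstacle** is the first step: getting a sufficiently sharp lower bound on the single-bin tail probability $p = \Pro{\Bin(cn\log n, 1/n) \geq (c+\sqrt{c}/10)\log n}$, and in particular verifying that the exponent in $p = n^{-\Theta(1)}$ is strictly less than $1$ with the specific constant $\sqrt{c}/10$ (uniformly over the allowed range $c \geq 1/\log n$, where for small $c$ the deviation $\sqrt{c}/10 \cdot \log n$ relative to the mean $c \log n$ is actually large, a different regime that also needs checking). I would handle this by splitting into the case $c$ at least some absolute constant (moderate-deviation Poisson tail, where $\Pro{\mathrm{Po}(\mu) \geq \mu + a} \geq e^{-a^2/\mu \cdot(1+o(1))}$ and $a^2/\mu = (\sqrt c \log n /10)^2/(c\log n) = \log n/100$, giving $p \gtrsim n^{-1/100}$) and the case $c = O(1)$ small (large-deviation regime, $p \geq \Pro{\Bin = k} \geq$ a Stirling estimate that is again comfortably $n^{-o(1)} \cdot n^{-c\cdot O(1)}$, still below $n^{-1}$ after multiplying by the correction). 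In both regimes $np \to \infty$ faster than any fixed power's reciprocal shrinks, so the product bound $(1-p)^n \leq e^{-np} \leq n^{-2}$ closes the argument. The remaining details — the precise Stirling/Poisson tail constants — are routine but is where the real care is needed.
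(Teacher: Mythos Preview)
Your approach is essentially the paper's: lower-bound the single-bin tail $p = \Pro{Y_i \geq k}$ and then use a product bound to get $\Pro{\max_i Y_i < k} \leq (1-p)^n \leq e^{-np}$. The paper carries out the tail estimate via a Stirling computation on the Poisson pmf, arriving at $p \geq n^{-1/99}$, which is exactly what your moderate-deviation heuristic $a^2/\mu = (\log n)/100$ predicts; your initial target $p \geq n^{-1+\delta}$ is far too pessimistic, but harmless since even that suffices for $e^{-np} \leq n^{-2}$.

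The one genuine technical difference is in how the product bound is justified. You work directly with the multinomial counts and invoke negative association to get $\Pro{\bigcap_i \{Y_i < k\}} \leq \prod_i \Pro{Y_i < k}$; the paper instead passes to i.i.d.\ Poisson variables (where the product bound is immediate by independence) and then transfers the resulting bound back to the exact model via the standard Poisson-approximation lemma for monotone events. Both routes are textbook; yours avoids the transfer step, while the paper's makes the Stirling calculation slightly cleaner since the Poisson pmf has a simpler form than the binomial. Either way, the real work is the tail estimate, which you correctly flag and sketch but do not carry out in detail.
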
 }
	\addtocounter{lem}{-1}

Putting the lemmas together, we get the desired lower bound.

\begin{lem} \label{lem:lower_bound}
Consider the RBB process with any $n \leq m \leq n^k$ for some constant $k \geq 1$ and let $\gamma := \frac{n}{4m}$. Then, for any round $t_0 \geq 0$ and for $t_1 := t_0 + \frac{1 - \gamma}{200} \cdot \frac{1}{\gamma^2} \cdot \log^4 n$,
\[
\Pro{\bigcup_{t \in [t_0, t_1]} \left\lbrace \max_{i \in [n]} x_i^t \geq 0.008 \cdot \frac{m}{n} \cdot \log n \right\rbrace} \geq 1 - n^{-1}.
\]
\end{lem}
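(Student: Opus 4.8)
The plan is to derive the lower bound by combining \cref{lem:quadratic_eps} with the \OneChoice maximum-load estimate \cref{lem:one_choice_cnlogn}, via a coupling argument. First I would partition the interval $[t_0, t_1]$ into $\Theta(\log^2 n)$ consecutive blocks, each of length $T := \hat{c} \cdot (\tfrac{m}{n} \log n)^2$ for an appropriately chosen constant $\hat{c}$ (here $\hat c = \Theta(1)$ is fine since $t_1 - t_0 = \tfrac{1-\gamma}{200\gamma^2}\log^4 n = \Theta((\tfrac{m}{n})^2 \log^4 n)$, which is $\Theta(\log^2 n)$ blocks of length $T$). For any fixed block, \cref{lem:quadratic_eps} tells us that either the maximum load already exceeds $\tfrac{m}{n}\log n$ somewhere in that block (in which case we are done, since $0.008 \le 1$), or the aggregate number of empty-bin-rounds $F$ over the block is at most $\tfrac{n^2}{4m}\cdot(T+1)$. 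In the latter "good" case, the total number of balls re-allocated in the block is $\sum_t \kappa^t = \sum_t (n - F^t) = nT - F \ge nT(1 - \tfrac{n}{4m}) = nT(1-\gamma)$.

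Next I would set up the coupling: conditioned on the good event for a block, the RBB process in that block allocates $\sum_t \kappa^t \ge nT(1-\gamma)$ balls, each into a uniformly random bin, independently. I can therefore couple these allocations with a \OneChoice process that allocates exactly $M := \lceil nT(1-\gamma) \rceil$ balls — RBB dominates it coordinate-wise on the "incremental" load (the load gained from incoming balls), so the maximum over bins of the number of incoming balls in the RBB block is at least the \OneChoice maximum load with $M$ balls. Now $M = nT(1-\gamma) = \hat{c}(1-\gamma) \cdot \tfrac{m^2}{n}\log^2 n$; writing $M = c' n \log n$ with $c' = \hat{c}(1-\gamma)\tfrac{m^2}{n^2}\log n = \Theta((\tfrac{m}{n})^2 \log n)$, \cref{lem:one_choice_cnlogn} gives that with probability $\ge 1 - n^{-2}$ the \OneChoice maximum load is at least $(c' + \tfrac{\sqrt{c'}}{10})\log n \ge \tfrac{\sqrt{c'}}{10}\log n = \Theta(\tfrac{m}{n}\log^{3/2}n)$, which is $\gg \tfrac{m}{n}\log n$. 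But these are \emph{incoming} balls accumulated over $T$ rounds, and each bin loses at most one ball per round, so after $T$ rounds the actual load of the heaviest such bin is at least (incoming count) $- T$. Since the incoming count is $\Theta(\tfrac{m}{n}\log^{3/2}n)$ while $T = \hat{c}(\tfrac{m}{n}\log n)^2$... this does not immediately work because $T$ is too large. So instead I would apply the coupling over a single round, or a short window: the cleaner route is to observe that within a block, at the \emph{specific round} where some bin first receives its $k$-th incoming ball, that bin has load at least (something); more carefully, pick the sub-window of $\Theta((\tfrac{m}{n})\log n)$ rounds out of the block during which, by a pigeonhole/averaging argument on $F$, the allocation rate stays $\ge n(1-O(\gamma))$, and apply \OneChoice there so that $M = \Theta(\tfrac{m}{n}\log n)\cdot n = \Theta(m \log n)$ balls give a max incoming count of $\approx \tfrac{m}{n}\log n + \Theta(\sqrt{\tfrac{m}{n}\log n}\cdot\sqrt{\log n})$, and the $-T$ correction with $T = \Theta(\tfrac{m}{n}\log n)$ is lower-order, leaving $\Omega(\tfrac{m}{n}\log n)$.

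Finally I would take a union bound over the $\Theta(\log^2 n)$ blocks: the probability that \emph{every} block is "good" yet \emph{no} block produces a \OneChoice-induced overload is at most $\Theta(\log^2 n)\cdot(n^{-2} + e^{-\hat{c}/18})$; choosing $\hat{c}$ a large enough constant (absorbed into the $\tfrac{1-\gamma}{200\gamma^2}$ factor) makes $e^{-\hat c/18}$ negligible, and $\Theta(\log^2 n)\cdot n^{-2} \le n^{-1}$ for large $n$, giving the claimed $1 - n^{-1}$. The constant $0.008$ presumably comes from tracking $\tfrac{\sqrt{c'}}{10}$-type factors minus the $-T$ and $-O(\gamma)$ corrections through the averaging step.

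\textbf{Main obstacle.} The delicate point is the conversion from "many balls allocated over a window" to "a bin with high \emph{current} load": because each bin drains one ball per round, accumulating incoming balls over a window of length $T$ only certifies current load $\ge (\text{incoming}) - T$, so the window length, the number of balls allocated, and the \OneChoice surplus term $\tfrac{\sqrt{c}}{10}\log n$ must be balanced so that the surplus dominates the drainage. Getting this bookkeeping right — in particular choosing the window length to be $\Theta(\tfrac{m}{n}\log n)$ rather than the full block, and verifying that $F$ being small in aggregate over the block forces the allocation rate to be high on a sub-window of this length — is where the real work lies; the rest is coupling boilerplate and a union bound.
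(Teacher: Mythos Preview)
Your overall architecture matches the paper's: bound the aggregate number of empty bins via \cref{lem:quadratic_eps}, couple the re-allocations with a \OneChoice process via \cref{lem:one_choice_cnlogn}, and convert the \OneChoice surplus into a lower bound on the RBB load after subtracting drainage. You also correctly flag the drainage-versus-surplus balance as the crux. But both of your quantitative choices are off.

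\textbf{Window length.} On a window of length $L$ with allocation rate $\geq n(1-\gamma)$, the coupled \OneChoice process has $M = (1-\gamma)nL$ balls, hence $c = (1-\gamma)L/\log n$ in \cref{lem:one_choice_cnlogn}, and maximum $(c + \tfrac{\sqrt c}{10})\log n = (1-\gamma)L + \tfrac{1}{10}\sqrt{(1-\gamma)L\log n}$. Subtracting the drainage $L$ leaves $-\gamma L + \tfrac{1}{10}\sqrt{(1-\gamma)L\log n}$. Your proposed $L = \Theta(\tfrac{m}{n}\log n)$ gives $-\Theta(\log n) + \Theta\bigl(\sqrt{m/n}\,\log n\bigr)$, which is only $\Theta\bigl(\sqrt{m/n}\,\log n\bigr)$ --- not the claimed $\Omega(\tfrac{m}{n}\log n)$. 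The correct scale is $L = \Theta\bigl((m/n)^2\log n\bigr)$, where both terms are $\Theta(\tfrac{m}{n}\log n)$ and the constant must be tuned so the surplus wins. The paper takes $\Delta = \tfrac{1-\gamma}{200\gamma^2}\log n$; the factor $200$ is what makes $\tfrac{1}{10\sqrt{200}} > \tfrac{1}{200}$ and hence the net positive, of order $\tfrac{1}{\gamma}\log n = \tfrac{4m}{n}\log n$, whence the constant $0.008$.

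\textbf{Probability accounting.} Applying \cref{lem:quadratic_eps} to each block with a \emph{constant} $\hat c$ gives per-block failure probability $e^{-\hat c/18} = \Theta(1)$; a \emph{union} bound over $\Theta(\log^2 n)$ blocks then yields nothing useful. (A product of conditional failure probabilities would work, since \cref{lem:quadratic_eps} holds conditionally on $\mathfrak{F}^{t_0}$ --- but that is not what you wrote, and it is not what ``union bound'' means.) The paper instead applies \cref{lem:quadratic_eps} \emph{once} over the entire interval $[t_0,t_1]$ with $\hat c = \Theta(\log^2 n)$, so its failure probability is already $e^{-\Theta(\log^2 n)}$. On that event, pigeonhole over $\log^3 n$ sub-intervals of length $\Delta$ locates one with $F < \tfrac{n^2}{4m}\Delta$; a union bound over the $\log^3 n$ corresponding \OneChoice events (each failing with probability $\leq n^{-2}$) then delivers the final $1 - n^{-1}$.
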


\begin{proof}
Using \cref{lem:quadratic_eps} (for $k := \frac{1 - \gamma}{200} \cdot 16 \cdot \log^2 n \geq 3 \cdot 18 \cdot \log n$), we have for $t_1 = t_0 + \frac{1 - \gamma}{200} \cdot \frac{1}{\gamma^2} \cdot \log^4 n$,
\begin{align} 
 \Pro{ \left\lbrace F_{t_0}^{t_1} < \frac{n^2}{4m} \cdot (t_1 - t_0 + 1) \right\rbrace \cup \bigcup_{t \in [t_0, t_1]} \left\lbrace \max_{i \in [n]} x_i^t \geq \frac{m}{n} \cdot \log n\right\rbrace } & \geq 1 - n^{-2} \label{eq:lb_union_bound_one}.
\end{align}
Consider the $\log^3 n$ sub-intervals $\mathcal{I}_1, \ldots , \mathcal{I}_{\log^3 n}$ of length $\Delta = \frac{1 - \gamma}{200} \cdot \frac{1}{\gamma^2} \cdot \log n$ with starting points $s_j := t_0 + \Delta \cdot (j-1)$. We also define the events for $j \in [\log^3 n]$,
\[
\mathcal{C}_j := \left\lbrace F_{s_j}^{s_{j} + \Delta} < \frac{n^2}{4m} \cdot \Delta \right\rbrace.
\]

Running the RBB process over the interval $[s_j,s_j+\Delta]$ involves reallocating $\Delta \cdot n - F_{s_j}^{s_j+\Delta}$ balls, meaning that we take at least $\Delta \cdot n - F_{s_j}^{s_j+\Delta}$ samples uniformly at random. So if the event $\mathcal{C}_j$ holds, then we sample in total
\[
 \Delta \cdot n  - \frac{n^2}{4m} \cdot \Delta = (1-\gamma) \cdot \Delta \cdot n =: \overline{m}
\]
bins. Hence these re-allocations correspond to a \OneChoice process with $\overline{m}$ balls into $n$ bins; let us denote its load vector by $y^t$ for any round $t \geq 0$ starting from the empty load configuration.
By \cref{lem:one_choice_cnlogn} with $c := \frac{(1 - \gamma)^2}{200} \cdot \frac{1}{\gamma^2}$, we obtain 
\[
 \Pro{ \max_{i \in [n]} y_i^{\overline{m}} \geq \left( c+ \frac{\sqrt{c}}{10} \right) \cdot \log n } \geq 1-n^{-2}.
\]
Further, note that
\begin{align*}
\max_{i \in [n]} y_i^{\overline{m}} 
   & \geq \left( \frac{(1 - \gamma)^2}{200\gamma^2} + \frac{1}{10} \cdot \sqrt{\frac{(1-\gamma)^2}{200 \gamma^2}}\right) \cdot \log n \\ 
 & \geq \frac{1- \gamma}{200 \gamma^2} \cdot \log n + 0.002 \cdot \frac{\log n}{\gamma}
   = \Delta + 0.002 \cdot \frac{\log n}{\gamma}.
\end{align*}

In $\Delta$ rounds, at most $\Delta$ balls can be removed from any single bin, so for any bin $i \in [n]$,
$
 x_i^{s_j+\Delta} \geq x_i^{s_j} + y_{i}^{\overline{m}} - \Delta \geq y_{i}^{\overline{m}} - \Delta.
$
and hence
\[
 \max_{i \in [n]} x_i^{s_j+ \Delta}  \geq \max_{i \in [n]} y_i^{\overline{m}} - \Delta \geq 0.002 \cdot \frac{\log n}{\gamma}.
\]

Next, we define for any round $t \geq 0$
\[
 \mathcal{E}^{t} := \left\lbrace \max_{i \in [n]} x_i^{t} \geq 0.008 \cdot \frac{m}{n} \cdot \log n \right\rbrace.
\]
We have shown that for any $1 \leq j \leq \log^3 n$,
\[
\Pro{ \mathcal{E}^{s_j + \Delta} \cup \neg \mathcal{C}_j} \geq 1 - n^{-2}.
\]
By taking the union bound over the $\log^3 n$ sub-intervals, we conclude that
\begin{align} 
\Pro{\bigcap_{j \in [\log^3 n]} \left( \left\lbrace \max_{i \in [n]} x_i^{s_j + \Delta} \geq 0.008 \cdot \frac{m}{n} \cdot \log n  \right\rbrace \cup \neg \mathcal{C}_j \right)} &\geq 1 - (\log^3 n) \cdot n^{-2} \label{eq:lb_union_bound_two}.
\end{align}

Assuming that $\big\{ F_{t_0}^{t_1} < \frac{n^2}{4m} \cdot (t_1 - t_0 + 1) \big\}$ holds, then using the pigeonhole principle, at least one of these intervals $j$ satisfies $\mathcal{C}_j$, i.e., $\big\{ \cup_{j \in [\log^3 n]} \mathcal{C}_j \big\}$ holds. Hence, by the union bound of \cref{eq:lb_union_bound_one} and \cref{eq:lb_union_bound_two} we conclude that
\begin{align*}
\Pro{ \bigcup_{t \in [t_0, t_1]} \mathcal{E}^t }
& \geq \Pro{ \bigcup_{j \in [\log^3 n]} \mathcal{E}^{s_j+\Delta} } \\
& \geq 1 - \Pro{\neg \bigcap_{j \in [\log^3 n]} \left( \mathcal{E}^{s_j + \Delta} \cup \neg \mathcal{C}_j \right) \cup \bigcap_{j \in [\log^3 n]} \neg \mathcal{C}_j } \\
& \geq \Pro{ \bigcup_{j \in [\log^3 n]} \left( \mathcal{E}^{s_j + \Delta} \cup \neg \mathcal{C}_j \right)} - \Pro{\bigcap_{j \in [\log^3 n]} \neg \mathcal{C}_j } \\
& \geq 1 - n^{-2} - (\log^3 n) \cdot n^{-2} \geq 1 - n^{-1}. \qedhere
\end{align*}
\end{proof}

\section{Upper Bounds on the Maximum Load and Convergence Time} \label{sec:upper_bounds}

In this section, we present the upper bounds on the maximum load and on the converge time for the RBB process. In \cref{sec:exponential_potential}, we introduce the exponential potential function for the RBB process and prove an upper bound on its expected change. In \cref{sec:very_lightly}, we demonstrate its application to the simpler setting where $m < n$, where deterministically there are many empty bins in every round. In \cref{sec:number_of_empty_bins_lb}, we establish that \Whp~for $m \geq n$, an $\Omega(n/m)$ fraction of the bins are empty. In \cref{sec:convergence_time}, we combine this with refined upper bounds on the exponential potential to show that the RBB converges \Whp~in $\Oh(m^2/n)$ rounds to a configuration with an $\Oh(m/n \cdot \log m)$ maximum load. In \cref{sec:stabilization}, for $n \leq m \leq \poly(n)$, we show that \Whp~it remains in a configuration with an $\Oh(m/n \cdot \log n)$ maximum load for at least $m^2$ rounds. This matches the lower of \cref{sec:lower_bounds} up to multiplicative constants.

\subsection{Exponential Potential} \label{sec:exponential_potential}

In this subsection, we introduce the exponential potential function for the RBB process and prove an upper bound on its expected change, relating to the number of non-empty bins $\kappa^t$ at round $t$. This bound is sufficient to obtain upper bounds for the maximum load in the case $m < n$ (\cref{sec:very_lightly}). In \cref{sec:convergence_time}, we extend this to analyze the case $m \geq n$.

The \textit{exponential potential function} with smoothing parameter $\alpha > 0$, is defined as
\[
 \Phi^t := \Phi^t(\alpha) := \sum_{i = 1}^n \Phi_i^t := \sum_{i=1}^n e^{ \alpha x_i^t},
\]
where $x_i^t$ is the load of bin $i$ at round $t$. Note that when $\Phi^t = \poly(n)$, then $\Gap(t) = \Oh\big( \frac{\log n}{\alpha} \big)$. We will now relate the expected change of $\Phi^t$ over one round with the number of non-empty bins $\kappa^t$ in that round.
\begin{restatable}{lem}{ExpChangeGeneral} \label{lem:exp_change_general}
Consider the RBB process with any $m \geq n$. For the potential $\Phi := \Phi(\alpha)$ with any $\alpha > 0$, it holds for any round $t \geq 0$,
\[
\Ex{\left. \Phi^{t+1} \,\right|\, \mathfrak{F}^t} \leq \Phi^t \cdot e^{-\alpha} \cdot e^{\frac{e^{\alpha} - 1}{n} \cdot \kappa^t} + (n - \kappa^t) \cdot e^{\frac{e^{\alpha} - 1}{n} \cdot \kappa^t}.
\]
\end{restatable}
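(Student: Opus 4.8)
The plan is to compute $\Ex{\Phi_i^{t+1} \mid \mathfrak{F}^t}$ separately for non-empty and empty bins, using the marginal recurrence \eqref{eq:marginal}, and then sum over all $n$ bins. For a non-empty bin $i$ (with $x_i^t \geq 1$), we have $x_i^{t+1} = x_i^t - 1 + \Bin(\kappa^t, 1/n)$, where the binomial is independent of $\mathfrak{F}^t$. Hence $\Ex{e^{\alpha x_i^{t+1}} \mid \mathfrak{F}^t} = e^{\alpha x_i^t} \cdot e^{-\alpha} \cdot \Ex{e^{\alpha \Bin(\kappa^t, 1/n)}}$. The moment generating function of a binomial is $\Ex{e^{\alpha \Bin(\kappa^t, 1/n)}} = \bigl(1 - \tfrac{1}{n} + \tfrac{e^\alpha}{n}\bigr)^{\kappa^t} = \bigl(1 + \tfrac{e^\alpha - 1}{n}\bigr)^{\kappa^t}$, and using $1 + z \leq e^z$ this is at most $e^{\frac{e^\alpha - 1}{n} \cdot \kappa^t}$. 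So each non-empty bin contributes at most $\Phi_i^t \cdot e^{-\alpha} \cdot e^{\frac{e^\alpha - 1}{n}\kappa^t}$.

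For an empty bin $i$ (with $x_i^t = 0$), there is no $-1$ term: $x_i^{t+1} = \Bin(\kappa^t, 1/n)$, so $\Ex{e^{\alpha x_i^{t+1}} \mid \mathfrak{F}^t} = \Ex{e^{\alpha \Bin(\kappa^t, 1/n)}} \leq e^{\frac{e^\alpha - 1}{n}\kappa^t}$. Summing the contributions over the $\kappa^t$ non-empty bins gives $\bigl(\sum_{i : x_i^t \geq 1} \Phi_i^t\bigr) \cdot e^{-\alpha} \cdot e^{\frac{e^\alpha - 1}{n}\kappa^t} \leq \Phi^t \cdot e^{-\alpha} \cdot e^{\frac{e^\alpha - 1}{n}\kappa^t}$ (bounding the partial sum by the full $\Phi^t$, which is valid since all terms are positive), and summing over the $n - \kappa^t$ empty bins gives $(n - \kappa^t) \cdot e^{\frac{e^\alpha - 1}{n}\kappa^t}$. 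Adding these two yields the claimed bound.

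The only subtlety — and the one place to be a little careful — is the step where the partial sum over non-empty bins is replaced by $\Phi^t$: for the empty bins $\Phi_i^t = 1$, so strictly $\sum_{i : x_i^t \geq 1}\Phi_i^t = \Phi^t - F^t$, and one could keep the sharper bound, but $\Phi^t$ suffices and matches the statement. Apart from that, the proof is essentially just the binomial mgf identity plus the inequality $1 + z \leq e^z$; there is no real obstacle. One should just make sure the independence of the $\Bin(\kappa^t, 1/n)$ term from $\mathfrak{F}^t$ (given $\kappa^t$, which is $\mathfrak{F}^t$-measurable) is invoked cleanly so that the expectation factors as stated.
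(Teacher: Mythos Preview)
Your proposal is correct and follows essentially the same approach as the paper: split into non-empty and empty bins, compute the binomial moment generating function $(1 + (e^\alpha - 1)/n)^{\kappa^t}$, bound it via $1+z \leq e^z$, and then aggregate, replacing the partial sum over non-empty bins by the full $\Phi^t$. The paper makes the same slight over-bound (using $\Phi^t$ rather than $\Phi^t - F^t$) without comment, so your treatment is if anything a touch more careful.
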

\begin{proof}
Consider the expected contribution of a bin $i \in [n]$ with $x_i^t \geq 1$,
\begin{align*}
\Ex{\left. \Phi_i^{t+1} \,\right|\, \mathfrak{F}^t}
 & = \sum_{z = 0}^{\kappa^t} e^{\alpha (x_i^t + z - 1)} \cdot \binom{\kappa^t}{z} \cdot \left(\frac{1}{n}\right)^z \cdot \left(1 - \frac{1}{n} \right)^{\kappa^t - z} \\
 & = \Phi_i^t \cdot e^{-\alpha} \cdot \sum_{z = 0}^{\kappa^t}  \binom{\kappa^t}{z} \cdot \left(\frac{e^{\alpha}}{n}\right)^z \cdot \left(1 - \frac{1}{n} \right)^{\kappa^t - z} \\
 & \stackrel{(a)}{=} \Phi_i^t \cdot e^{-\alpha} \cdot \left(1 - \frac{1}{n} + \frac{e^{\alpha}}{n}\right)^{\kappa^t} \\
 & \stackrel{(b)}{\leq} \Phi_i^t \cdot e^{-\alpha} \cdot e^{\frac{e^{\alpha} - 1}{n} \cdot \kappa^t},
\end{align*}
using in $(a)$ the binomial identity $\sum_{z = 0}^k \binom{k}{z} p^z q^{k-z} = (p + q)^k$ and in $(b)$ that $1 + z \leq e^z$ for any $z \geq 0$.

For an empty bin $i \in [n]$, its expected contribution is 
\[
\Ex{\left. \Phi_i^{t+1} \,\right|\, \mathfrak{F}^t}
 = \sum_{z = 0}^{\kappa^t} \binom{\kappa^t}{z} \cdot e^{\alpha z} \cdot \left(\frac{1}{n}\right)^z \cdot \left(1 - \frac{1}{n} \right)^{\kappa^t - z} = \left(1 - \frac{1}{n} + \frac{e^{\alpha}}{n}\right)^{\kappa^t} \leq e^{\frac{e^{\alpha} - 1}{n} \cdot \kappa^t}.
\]
Aggregating over all bins, we have
\begin{align*}
\Ex{\left. \Phi^{t+1} \,\right|\, \mathfrak{F}^t} 
 & = \sum_{i : x_i^t \geq 1} \Ex{\Phi_i^{t+1} \mid \mathfrak{F}^t} + \sum_{i : x_i^t = 0} \Ex{\left. \Phi_i^{t+1} \,\right|\, \mathfrak{F}^t}\\
 & \leq \Phi^t \cdot e^{-\alpha} \cdot e^{\frac{e^{\alpha} - 1}{n} \cdot \kappa^t} + (n - \kappa^t) \cdot e^{\frac{e^{\alpha} - 1}{n} \cdot \kappa^t}. \qedhere
\end{align*}
\end{proof}

\subsection{Maximum Load Upper Bound for \texorpdfstring{$m < n$}{m < n}} \label{sec:very_lightly}

In this subsection, we will investigate the simpler setting where $m$ is much smaller than $n$, to demonstrate the use of the exponential potential function. This implies that in each round, deterministically at least $n - m$ bins are empty. As we prove below, this implies for example, that for $m =  \frac{n}{\log n}$ we get \Whp~a maximum load of $\Oh(\frac{\log n}{\log \log n})$ after $\Oh(\frac{n}{\log n})$ rounds.

\begin{lem}
Consider the RBB process with  $m \leq \frac{1}{e^2} n$. Then for any round $t \geq 2m$,
\[
\Pro{\max_{i \in [n]} x_i^t \leq 4\cdot \frac{\log n}{\log\left( \frac{n}{em} \right)} } \geq 1 - n^{-2}.
\]
\end{lem}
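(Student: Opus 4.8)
The plan is to apply the exponential potential function $\Phi^t = \Phi^t(\alpha)$ with a suitable smoothing parameter $\alpha$ and exploit the fact that $\kappa^t \le m$ deterministically, since at most $m$ bins can be non-empty when there are only $m$ balls. Concretely, I would choose $\alpha := \log\!\big(\frac{n}{em}\big)$, so that $\frac{e^\alpha - 1}{n}\cdot \kappa^t \le \frac{e^\alpha}{n}\cdot m = \frac{1}{e}$, which is a small constant; and since $m \le n/e^2$ we have $\alpha \ge 2 > 0$ so the potential is genuinely exponential. Plugging $\kappa^t \le m$ into \cref{lem:exp_change_general} gives
\[
\Ex{\Phi^{t+1} \mid \mathfrak{F}^t} \le \Phi^t \cdot e^{-\alpha} \cdot e^{1/e} + n \cdot e^{1/e} \le \Phi^t \cdot \frac{e^{1/e}\cdot em}{n} + 2n.
\]
Since $e^{1/e}\cdot e m/n \le e^{1/e}\cdot e/e^2 = e^{1/e - 1} < 1$, writing $\rho := e^{1/e-1} < 1$ we get a contraction of the form $\Ex{\Phi^{t+1}\mid \mathfrak{F}^t} \le \rho\,\Phi^t + 2n$.

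From this one-step contraction I would unroll the recursion to get, for any $t \ge 0$,
\[
\Ex{\Phi^t} \le \rho^t \cdot \Phi^0 + \frac{2n}{1-\rho}.
\]
The initial potential satisfies $\Phi^0 \le n \cdot e^{\alpha m}$ trivially (any single bin has load at most $m$), and $\alpha m \le m\log(n/(em)) \le$ (something like) $\Oh(m \log n)$, so $\Phi^0 \le n^{\Oh(m)} \le e^{\Oh(m\log n)}$. Choosing $t \ge 2m \cdot \frac{\log n}{\log(1/\rho)}$ — which is $\Oh(m\log n)$ rounds — makes $\rho^t \Phi^0 \le 1$, hence $\Ex{\Phi^t} \le 1 + \frac{2n}{1-\rho} = \Oh(n)$. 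Actually the statement claims $t \ge 2m$ suffices, so I would need to be more careful: since $\Phi^0 \le n \cdot e^{\alpha m}$ and the target is $t = 2m$, I would instead bound $\Phi^0$ more cleverly using the actual initial configuration, or absorb the $\log n$ factor — in any case the mechanism is the same and the precise constant in the round-count is a routine optimization. Once $\Ex{\Phi^t} = \poly(n)$, Markov's inequality gives $\Pro{\Phi^t \le n^3} \ge 1 - n^{-2}$ (adjusting the polynomial power to hit the desired $n^{-2}$), and on this event every bin $i$ satisfies $e^{\alpha x_i^t} \le \Phi^t \le n^3$, i.e. $x_i^t \le \frac{3\log n}{\alpha} = \frac{3\log n}{\log(n/(em))}$, which is at most $4\log n / \log(n/(em))$ for large $n$.

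The main obstacle is matching the claimed round count $t \ge 2m$ rather than the $\Oh(m\log n)$ that a crude unrolling yields. The resolution should be that the relevant quantity is not $\Phi^0$ with its worst-case $e^{\alpha m}$ bound, but rather the fact that the "droplet'' term $\frac{2n}{1-\rho}$ is reached geometrically fast relative to how concentrated the mass can be; alternatively one tracks that after each round a constant fraction of the excess potential above $\Theta(n)$ is shaved off, and since the total potential is bounded and $\alpha$ is only a constant factor larger than $\log(n/(em))$, the number of rounds needed is governed by $m$ (the maximum possible load, hence the maximum number of rounds any ball-pile needs to drain) rather than $m\log n$. I would also need the routine check that $1+z \le e^z$ was applied legitimately (it is, for $z \ge 0$) and that all the constants in \cref{lem:exp_change_general} are being bounded in the right direction; these are mechanical. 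No concentration inequality beyond Markov is needed here because we are not tracking the process over an interval, just a single round $t$.
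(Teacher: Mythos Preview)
Your approach matches the paper's, and the computations are on track. The gap you flag at the end---getting $t \geq 2m$ rather than $\Oh(m\log n)$---is real, but the resolution is simpler than your speculation about ``droplet'' terms: do not replace the contraction factor $e^{-\alpha}\cdot e^{1/e}$ by the universal constant $\rho = e^{1/e-1}$. By substituting $e^{-\alpha} = em/n$ and then bounding $em/n \leq 1/e$, you discard the dependence on $\alpha$, which is precisely what you need to cancel the $e^{\alpha m}$ in $\Phi^0$. Instead, keep the factor as is: since $\alpha \geq 1$ (not $\alpha \geq 2$ as you wrote; $m \leq n/e^2$ gives $n/(em) \geq e$, hence $\alpha \geq 1$), one has $e^{-\alpha}\cdot e^{1/e} \leq e^{-\alpha/2}$. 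The recursion becomes
\[
\Ex{\Phi^{t+1}\mid \mathfrak{F}^t} \;\leq\; \Phi^t \cdot e^{-\alpha/2} + e\cdot n,
\]
and unrolling (your \cref{lem:geometric_arithmetic}) gives $\Ex{\Phi^t} \leq \Phi^0 \cdot e^{-\alpha t/2} + \frac{e\cdot n}{1-e^{-\alpha/2}}$. With $\Phi^0 \leq e^{\alpha m}$ (worst case: all balls in one bin, up to an additive $n$ that is absorbed), the first term at $t=2m$ is $e^{\alpha m}\cdot e^{-\alpha m}=1$, and $\Ex{\Phi^t} \leq 1 + \frac{e\cdot n}{1-e^{-1/2}} \leq 3e\cdot n$. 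Markov then gives $\Pro{\Phi^t \leq 3e\cdot n^3} \geq 1-n^{-2}$, and on that event $x_i^t \leq \alpha^{-1}(\log(3e)+3\log n) \leq 4\log n / \log(n/(em))$.

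The moral: the contraction rate must scale with $\alpha$ so that the $\alpha$ in the exponent of $\Phi^0$ and the $\alpha$ in the exponent of the decay cancel, making the required round count depend only on $m$.
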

\begin{proof}
We will use the potential $\Phi := \Phi(\alpha)$ with $\alpha := \log\left(\frac{n}{em}\right) \geq 1$ (since $m \leq \frac{1}{e^2} n$). Note that for $m = o(n)$ the potential is super-exponential, as in~\cite{LS21}. Since $\kappa^t \leq m$, we have
\[
\frac{e^{\alpha} - 1}{n} \cdot \kappa^t \leq e^{\alpha} \cdot \frac{m}{n} = \frac{1}{e}. 
\]
Hence, by using \cref{lem:exp_change_general},
\begin{align*}
\Ex{\Phi^{t+1} \,\left|\, \mathfrak{F}^t \right.} 
 & \leq \Phi^t \cdot e^{-\alpha} \cdot e^{\frac{e^{\alpha} - 1}{n} \cdot \kappa^t} + (n - \kappa^t) \cdot e^{\frac{e^{\alpha} - 1}{n} \cdot \kappa^t} \\
& \leq \Phi^t \cdot e^{-\alpha} \cdot e^{1/e} + e \cdot n \\
& \leq \Phi^t \cdot e^{-\frac{\alpha}{2}} + e \cdot n, 
\end{align*}
using in the last inequality that $\alpha \geq 1$.

At round $t = 0$ we have $\Phi^0 \leq e^{\alpha m}$. Hence applying \cref{lem:geometric_arithmetic}, we have for any $t \geq 2m$,
\[
\Ex{\Phi^t} \leq e^{\alpha m} \cdot e^{- \frac{1}{2} \cdot \alpha t} + \frac{e \cdot n}{1 - e^{-\frac{\alpha}{2}}} \leq 1 + \frac{e \cdot n}{1 - e^{-1/2}} \leq 3e \cdot n.
\]
By applying Markov's inequality for any $t \geq 2m$,
\[
\Pro{\Phi^t \leq 3e \cdot n^{3}} \geq 1 - n^{-2}.
\]
When $\{ \Phi^t \leq 3e \cdot n^{3} \}$ holds, we have for any bin $i \in [n]$,
\[
x_i^t \leq \frac{1}{\alpha} \cdot \left( \log(3e) + 3\log n\right) \leq  4\cdot \frac{\log n}{\log\left( \frac{n}{em} \right)},
\]
completing the proof.
\end{proof}

\subsection{Lower Bounding the Number of Empty Bins for \texorpdfstring{$m \geq n$}{m >= n}} \label{sec:number_of_empty_bins_lb}

We now proceed to the more natural and challenging case where $m \geq n$. %

Central to our analysis is an invariant about the number of empty bins. The basic idea is inspired by~\cite[Lemma~19]{BCNPP19}, who proved that in case of $m=n$, for each round, a constant fraction of the bins are empty with very high probability. This is useful, as it implies a constant additive drift for the load of each non-empty bin, which will drop by a constant term $>0$ in expectation.

However, for general $m \gg n$, there will be starting configurations, in which all bins remain non-empty for several rounds. Only if the process runs for a sufficiently long time, a small fraction of bins will become (and, to some extent, remain) empty. The following lemma quantifies this behavior and proves that, after a waiting time of $\Oh((m/n)^2)$ (the square of the average load), a fraction of $\Oh(n/m)$ of the bins will be empty per round on average. Hence for a time interval of length $(m/n)^2$, the aggregated ``empty bin/round pairs'' will be $\approx (m/n)^2 \cdot n \cdot (n/m) =m$.

\begin{lem} \label{lem:many_empty}
Consider the RBB process with $m$ balls, where $m \geq n$ and any round $t_0 \geq 0$. Then, for round $t_3 := t_0 + 744 (m/n)^2$ it holds that
\begin{align*}
     \Pro{ \left. F_{t_0}^{t_3} \geq \frac{1}{384} \cdot m ~\right|~ \mathfrak{F}^{t_0} } \geq 1-e^{-\Omega(n)}.
\end{align*}
\end{lem}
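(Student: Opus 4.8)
The plan is to set up a potential-type argument using the quadratic potential $\Upsilon^t$ together with \cref{lem:quadratic_drop}, which already tells us that $\Ex{\Upsilon^{t+1}\mid\mathfrak{F}^t}\leq\Upsilon^t-2\frac{m}{n}F^t+2n$. Summing this telescoping bound over $t\in[t_0,t_3-1]$ and taking expectations gives
\[
\Ex{\Upsilon^{t_3}}\leq\Ex{\Upsilon^{t_0}}-2\cdot\frac{m}{n}\cdot\Ex{F_{t_0}^{t_3-1}}+2n\cdot(t_3-t_0),
\]
so that, since $\Upsilon^{t_3}\geq 0$,
\[
\Ex{F_{t_0}^{t_3-1}}\leq\frac{n}{2m}\Big(\Ex{\Upsilon^{t_0}}+2n(t_3-t_0)\Big).
\]
This is an \emph{upper} bound on the number of empty-bin/round pairs, but the lemma asserts a \emph{lower} bound, so a direct application of \cref{lem:quadratic_drop} alone cannot suffice. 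The key extra ingredient must be that $\Upsilon^{t_0}$ cannot be assumed small — the initial configuration is worst case — so I would instead run the drift argument in the other direction: use \cref{lem:quadratic_drop} to show the potential must have \emph{decreased substantially} over the interval unless many empty bins were seen, and then argue that if $F_{t_0}^{t_3}$ were too small the potential would have to remain large, which combined with a matching lower bound on how fast $\Upsilon$ can drop yields a contradiction.

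Concretely, I would argue as follows. First, suppose for contradiction that $F_{t_0}^{t_3}<\frac{1}{384}m$ on some event. On this event, for ``most'' rounds $t\in[t_0,t_3]$ we have $F^t$ small, hence $\kappa^t=n-F^t$ close to $n$, hence the average load of non-empty bins is close to $m/n$ but — crucially — each non-empty bin loses exactly one ball and gains $\Bin(\kappa^t,1/n)$ with mean $\kappa^t/n<1$, so the quadratic potential has a genuine negative drift of order $m/n$ per empty-bin deficit. The cleanest route is: from \cref{lem:quadratic_drop}, on the event $\{F_{t_0}^{t_3}<\frac{1}{384}m\}$,
\[
\Ex{\Upsilon^{t_3}\mid\mathfrak{F}^{t_0}}\leq\Upsilon^{t_0}-2\cdot\frac{m}{n}\cdot F_{t_0}^{t_3-1}+2n(t_3-t_0)\leq\Upsilon^{t_0}-0+2n\cdot 744(m/n)^2,
\]
which is an upper bound that does not contradict anything by itself; the real work is a \emph{lower} bound on $\Upsilon^{t_3}$ showing $\Upsilon^{t_3}\geq\Upsilon^{t_0}-o(\Upsilon^{t_0})$ whenever few bins were empty, because with $\kappa^t$ close to $n$ the load vector can barely change. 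I would make this precise by observing that in any single round the change $|\Upsilon^{t+1}-\Upsilon^t|$ is controlled (cf.\ the change bound \cref{lem:bound_upsilon_change} used in \cref{lem:quadratic_eps}), and more importantly that $\sum_i x_i^t=m$ is conserved, so a large $\Upsilon$ forces a heavily loaded bin, which over $744(m/n)^2$ rounds with $\Omega(n)$ re-allocations each cannot be dissipated without creating empty bins elsewhere by a conservation/counting argument. The martingale concentration (Azuma-type, via \cref{thm:chu_lu_thm_8_3} as in \cref{lem:quadratic_eps}) then upgrades the expectation statement to the claimed $1-e^{-\Omega(n)}$ probability, using that each round involves $\Theta(n)$ independent uniform samples so deviations of order $m$ in $F_{t_0}^{t_3}$ are exponentially unlikely in $n$.

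The main obstacle I anticipate is the worst-case initial configuration: $\Upsilon^{t_0}$ can be as large as $m^2$ (one bin holding all balls), so any argument that ``$\Upsilon$ decreased by at least $2\frac{m}{n}F_{t_0}^{t_3}$'' gives nothing unless we also know $\Upsilon^{t_0}$ is not too large, \emph{or} we run the argument adaptively in phases. I expect the real proof splits the interval $[t_0,t_3]$ into sub-phases and shows that in each sub-phase either $\Upsilon$ drops by a constant factor (which can happen only $O(\log m)$ times before $\Upsilon=O(n\cdot(m/n)^2)$, the ``equilibrium'' scale) or many empty bins appear; since the total interval length $744(m/n)^2$ is chosen exactly so that the potential can be driven down to equilibrium \emph{and then} accumulate the requisite $\Omega(m)$ empty-bin/round pairs, the bookkeeping of how the budget $744(m/n)^2$ is partitioned between ``draining'' phases and ``harvesting'' phases is where the constant $1/384$ and the factor $744$ come from, and getting those constants to close is the delicate part. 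An alternative and perhaps cleaner framing avoids phases: define $Z^t:=\Upsilon^t+2\frac{m}{n}F_{t_0}^{t-1}$, note it is a supermartingale bounded below by $2\frac{m}{n}F_{t_0}^{t-1}$, and bound $Z^{t_0}=\Upsilon^{t_0}$ from above using a separate quick argument that after $O((m/n)^2)$ rounds $\Upsilon$ is already $O(n(m/n)^2+n\,\mathrm{polylog})$ regardless of start (a one-sided drift-to-equilibrium lemma) — then $F_{t_0}^{t_3}\geq\frac{n}{2m}(Z^{t_0}-\Ex{Z^{t_3}})$ forces the claim once the interval is long enough. I would pursue this supermartingale-with-drift-to-equilibrium route first.
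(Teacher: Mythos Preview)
Your approach has a genuine gap: the quadratic-potential inequality from \cref{lem:quadratic_drop} points the \emph{wrong way} for a lower bound on $F_{t_0}^{t_3}$. The inequality $\Ex{\Upsilon^{t+1}\mid\mathfrak{F}^t}\leq\Upsilon^t-2\frac{m}{n}F^t+2n$ says that many empty bins force $\Upsilon$ to drop; it does \emph{not} say that $\Upsilon$ dropping forces many empty bins, and in fact the paper uses exactly this inequality in \cref{lem:quadratic_eps} to derive an \emph{upper} bound on $F_{t_0}^{t_1}$. Your ``cleaner framing'' makes this explicit: the sequence $Z^t=\Upsilon^t+2\frac{m}{n}F_{t_0}^{t-1}-2n(t-t_0)$ is a supermartingale, so $\Ex{Z^{t_3}}\leq Z^{t_0}$ yields $2\frac{m}{n}\Ex{F_{t_0}^{t_3-1}}\leq\Upsilon^{t_0}-\Ex{\Upsilon^{t_3}}+2n(t_3-t_0)$, again an upper bound. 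If instead you try the matching lower bound $\Ex{\Upsilon^{t+1}\mid\mathfrak{F}^t}\geq\Upsilon^t-2\frac{m}{n}F^t$ (which does hold), you get $\Ex{F_{t_0}^{t_3-1}}\geq\frac{n}{2m}(\Upsilon^{t_0}-\Ex{\Upsilon^{t_3}})$; but take the perfectly balanced start $x_i^{t_0}=m/n$, where $\Upsilon^{t_0}=m^2/n$ is the global minimum of $\Upsilon$, so $\Upsilon^{t_0}-\Ex{\Upsilon^{t_3}}\leq 0$ and the bound is vacuous. No amount of ``drift-to-equilibrium'' preprocessing fixes this, because equilibrium for $\Upsilon$ \emph{is} $\Theta(m^2/n)$, and that is exactly where the argument gives nothing. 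The contradiction you sketch (``$\Upsilon$ would have to remain large'') never materializes: if $F^t\equiv 0$ then $\Upsilon$ simply drifts upward by $\approx n$ per round, reaching at most $\Upsilon^{t_0}+744\,m^2/n\ll m^2$, which contradicts nothing.

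The paper takes an entirely different route that you should look at. It couples the RBB process with an \emph{idealized process} in which $n$ balls (not $\kappa^t$) are re-allocated each round; a simple monotone coupling (\cref{lem:coupling}) gives $x_i^t\leq y_i^t$ and hence $F_{t_0}^{t_3}\geq G_{t_0}^{t_3}$, so it suffices to lower bound $G$ for the idealized process. The advantage is that in the idealized process each bin's load is an autonomous reflected random walk $y_i^{t+1}=y_i^t-\mathbf{1}_{y_i^t>0}+\Bin(n,1/n)$, and one can analyze a single bin's trajectory directly: \cref{lem:idealized_one} shows via a gambler's-ruin argument that any bin starting at load $\leq 2m/n$ hits $0$ within $O((m/n)^2)$ rounds with probability $\geq 1/4$; \cref{lem:idealized_two} shows that once at $0$, the bin returns to $0$ at least $\Omega(m/n)$ times in the next $O((m/n)^2)$ rounds with probability $\geq 1/4$. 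Since at least $n/2$ bins start below $2m/n$, summing gives $\Ex{G_{t_0}^{t_3}\mid\mathfrak{F}^{t_0}}\geq\frac{1}{192}m$ (\cref{lem:idealized_three}). The concentration to $1-e^{-\Omega(n)}$ then comes from the Method of Bounded Differences: \cref{lem:lipschitz} shows $G_{t_0}^{t_3}$ is $1$-Lipschitz in each of the $(t_3-t_0)\cdot n$ independent bin samples $Z_j^t$ (changing one sample shifts at most one ``first time empty'' event), and McDiarmid gives the exponential tail. None of this touches $\Upsilon$.
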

First, let us remark that for the simpler case $m=n$, a stronger result was shown in~\cite[Lemma~1]{BCNPP19}, proving that for \emph{any} round $t \geq 1$, $F^{t}=\Omega(n)$ holds with probability $1-\exp(-\Omega(n))$. In fact adjusting the proof in~\cite{BCNPP19} slightly, the same result holds for any $m=\Oh(n)$. 
Therefore, we may assume in the following proof for convenience, that $m \geq C \cdot n$ for a sufficiently large constant $C>0$ (we will choose $C:=6$). Alternatively, we can also reduce the case with $m$ balls for some $m \in [n, C \cdot n]$ balls to the case with $C \cdot n$ balls, by using the fact that $F_{t_0}^{t_3}$ becomes stochastically smaller if we add more balls.

In order to establish \cref{lem:many_empty}, we will relate the RBB process to a simpler process, which we call the \emph{idealized process}. In the idealized process, we also remove one ball from each non-empty bin at each round, but we allocate exactly $n$ balls, regardless of how many bins are empty. 

Formally, fix any load configuration of $m$ balls with load vector $x^{t_0}$. The load vector of the idealized process is denoted by $y^{t}, t \geq t_0$ and defined as follows. For any bin $i \in [n]$, $y_i^{t_0} := x_i^{t_0}$. Further, for any $t \geq t_0$, let $Z_1^{t},Z_2^{t},\ldots,Z_n^{t} \in \{1,\ldots,n\}$ be $n$ independent, uniform random samples. Then define,
\begin{align}
  y_i^{t+1} := y_i^{t} - \mathbf{1}_{y_i^{t} > 0} + \sum_{j=1}^{n} \mathbf{1}_{Z_j^{t}=i}. \label{eq:idealized}
\end{align}
Note that the marginal distribution of $y_i^{t+1}$ can be expressed as
\[
 y_i^{t+1} = y_i^{t} - \mathbf{1}_{y_i^{t} > 0} + \Bin(n,1/n) .
\]
Comparing this to the RBB process (see~\cref{eq:marginal}) we have the same distribution apart from that $\Bin(n,1/n)$ is replaced by $\Bin(\kappa^{t},1/n)$. Thus we see that the \emph{idealized process} is a bit simpler and also has the advantage that the number of balls that are added to the bins does not depend on the load configuration.

\begin{lem} \label{lem:coupling}
For any round $t_0 \geq 0$ and load vector $x^{t_0}$, there is a coupling between the load vectors $(x^{t})_{t \geq t_0}$ and $(y^{t})_{t \geq t_0}$ such that for all rounds $t \geq t_0$ and for all $i \in [n]$,
$
 x_i^t \leq y_i^t.
$
\end{lem}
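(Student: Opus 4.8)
The plan is to construct the coupling round by round, maintaining the invariant $x_i^t \le y_i^t$ for all $i \in [n]$ as an induction hypothesis, and to argue that a single round of each process can be coupled so that the inequality is preserved coordinate-wise. The base case $t = t_0$ holds by definition since $y_i^{t_0} = x_i^{t_0}$.

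For the inductive step, suppose $x_i^t \le y_i^t$ for all $i$. The key structural observation is that this implies $\kappa^t \le \kappa^t_{(y)}$, where $\kappa^t_{(y)}$ denotes the number of non-empty bins in the idealized process: if $x_i^t > 0$ then $y_i^t \ge x_i^t > 0$, so the non-empty bins of the RBB process are a subset of those of the idealized process. In particular $\kappa^t \le \kappa^t_{(y)} \le n$. Now I would couple the two allocation steps as follows. The idealized process allocates $n$ balls using uniform samples $Z_1^t, \dots, Z_n^t$; the RBB process allocates $\kappa^t$ balls. I couple so that the RBB process uses (the destinations determined by) a subset of size $\kappa^t$ of these same $n$ samples — say $Z_1^t, \dots, Z_{\kappa^t}^t$ — which is legitimate since each $Z_j^t$ is uniform on $[n]$ and they are independent. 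Then for every bin $i$, the number of incoming balls in the RBB process, $\sum_{j=1}^{\kappa^t} \mathbf{1}_{Z_j^t = i}$, is at most the number in the idealized process, $\sum_{j=1}^{n} \mathbf{1}_{Z_j^t = i}$.

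It remains to check the departures match up. We have
\[
 y_i^{t+1} - x_i^{t+1} = \left( y_i^t - x_i^t \right) + \left( \mathbf{1}_{x_i^t > 0} - \mathbf{1}_{y_i^t > 0} \right) + \left( \sum_{j=1}^{n} \mathbf{1}_{Z_j^t = i} - \sum_{j=1}^{\kappa^t} \mathbf{1}_{Z_j^t = i} \right).
\]
The third term is nonnegative by the coupling. For the first two terms combined, I distinguish cases: if $x_i^t > 0$, then also $y_i^t > 0$, so the difference of indicators is $0$ and $y_i^t - x_i^t \ge 0$ gives the bound. If $x_i^t = 0$, then $\mathbf{1}_{x_i^t>0} - \mathbf{1}_{y_i^t>0} \in \{0, -1\}$; when it is $0$ we are fine, and when it is $-1$ we have $y_i^t > 0$, i.e. $y_i^t \ge 1 = x_i^t + 1$, so $(y_i^t - x_i^t) + (\mathbf{1}_{x_i^t>0} - \mathbf{1}_{y_i^t>0}) \ge 1 - 1 = 0$. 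Hence $y_i^{t+1} - x_i^{t+1} \ge 0$ in all cases, completing the induction.

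I don't anticipate a serious obstacle here — the argument is a standard monotone-coupling / layering argument. The one point requiring a little care is the bookkeeping in the case $x_i^t = 0 < y_i^t$, where the RBB bin does not lose a ball but the idealized bin does; the gap of at least $1$ between $y_i^t$ and $x_i^t$ in that case exactly absorbs the extra departure, which is why the invariant is preserved rather than merely "almost preserved." One should also note explicitly that the coupling is consistent across rounds (the fresh samples $Z_j^t$ at each round are drawn independently of the past), so that concatenating the per-round couplings yields a single coupling of the whole trajectories, as claimed.
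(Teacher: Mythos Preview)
Your proof is correct and follows essentially the same approach as the paper: both argue by induction, couple the RBB process to use the first $\kappa^t$ of the $n$ uniform samples $Z_1^t,\ldots,Z_n^t$ used by the idealized process, and then verify coordinate-wise domination via a short case analysis (the paper splits on $x_i^t = y_i^t$ versus $x_i^t + 1 \le y_i^t$, while you split on $x_i^t > 0$ versus $x_i^t = 0$, but the content is the same). The only superfluous remark is the observation about $\kappa^t_{(y)}$, which is not needed since the idealized process always allocates exactly $n$ balls regardless of how many of its bins are non-empty.
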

\begin{proof}
This claim follows by induction. First, the claim holds for $t=t_0$ by definition, as $y^{t_0}=x^{t_0}$. For the induction step, let $y^{t}$ and $x^{t}$ such that $ x_i^t \leq y_i^t$ for all $i \in [n]$. We define a coupling between the RBB and the idealized process such that they share the same sequence of samples $(Z_j^t)_{j \in [n]}$. Then, recall that the idealized process has
\[
 y_i^{t+1} := y_i^{t} - \mathbf{1}_{y_i^t > 0} + \sum_{j=1}^{n} \mathbf{1}_{Z_j^t=i},
\]
and we define the RBB to have
\[
 x_i^{t+1} := x_i^{t} - \mathbf{1}_{x_i^t > 0} + \sum_{j=1}^{\kappa^t} \mathbf{1}_{Z_j^t=i}.
\]
Hence, it follows that the number of balls that are added to $y_i^t$ is at least as large as the number of balls that are added to $x_i^t$. If $x_i^{t}=y_i^{t}$, then this implies that $x_i^{t+1} \leq y_i^{t+1}$. Further, if $x_i^{t} +1 \leq y_i^{t}$, then since $\mathbf{1}_{x_i^t > 0} -\mathbf{1}_{y_i^t > 0} \geq -1$, we also have $x_i^{t+1} \leq y_i^{t+1}$. This completes the induction and the lemma follows.
\end{proof}

Based on the coupling, we also define for two rounds $t_0 \leq t_3$,
\[
 G_{t_0}^{t_3} := \sum_{t=t_0}^{t_3} \sum_{i \in [n]} \mathbf{1}_{y_i^t=0}.
\]
Note that \cref{lem:coupling} implies that $F_{t_0}^{t_3}$ is stochastically larger than $G_{t_0}^{t_3}$, therefore it suffices to analyze $G_{t_0}^{t_3}$ in the following.

Our first lemma proves that starting from any load configuration with $m$ balls at time $t_0$, any bin $i \in [n]$ whose load is about the average load, has a constant probability $>0$ of reaching zero load after $\Oh(\frac{m^2}{n^2})$ rounds.
\begin{lem} \label{lem:idealized_one}
Consider the idealized process with an arbitrary initial load configuration at time $t_0$ with $m \geq 6n$ balls. Let $i \in [n]$ be any bin with $y_i^{t_0} \leq 2 \cdot m/n$. Then,
\[
 \Pro{ \left. \bigcup_{t_1 \in [t_0,t_0 + 720 \cdot \frac{m^2}{n^2}]} \left\{ y_i^{t_1} = 0 \right\} ~\right|~ \mathfrak{F}^{t_0}, y_i^{t_0} \leq 2 \cdot \frac{m}{n} } \geq \frac{1}{4}. 
\]
\end{lem}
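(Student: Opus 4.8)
The plan is to track a single bin $i$ under the idealized process and show that, if it does not hit load zero within the stated window, then its load must have performed an unlikely upward deviation. Fix the bin $i$ with $y_i^{t_0} \le 2m/n$. The marginal recursion is $y_i^{t+1} = y_i^t - \mathbf 1_{y_i^t>0} + B^t$ where $B^t \sim \Bin(n,1/n)$ is independent of $\mathfrak F^t$, with $\Ex{B^t}=1$. So as long as $y_i^t \ge 1$, the increment $y_i^{t+1}-y_i^t = B^t - 1$ has mean $0$ and is bounded variance ($\Var(B^t) = 1-1/n \le 1$). The idea is: define the stopping time $\sigma := \min\{t \ge t_0 : y_i^t = 0\}$, and on the event $\{\sigma > t_0 + T\}$ (with $T := 720\,m^2/n^2$) the process $M^t := y_i^{t \wedge \sigma}$ is a martingale that stays strictly positive for all $t_0 \le t \le t_0+T$. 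Starting from $M^{t_0} \le 2m/n$, a martingale with per-step variance $\Theta(1)$ that survives $T = \Theta((m/n)^2)$ steps without hitting $0$ is behaving atypically: a random walk started at height $\le 2m/n$ with unit-order step variance hits $0$ within $O((m/n)^2)$ steps with constant probability.

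Concretely, I would carry this out via a second-moment / optional-stopping argument rather than a generic gambler's-ruin quote, since the steps $B^t-1$ are not symmetric and $y_i$ is integer-valued but not a nearest-neighbour walk. First, let $\tau := \sigma \wedge (t_0 + T)$. Since $(M^t)$ is a martingale with $M^{t_0} \le 2m/n$, and $\Var(M^{t+1}-M^t \mid \mathfrak F^t) \ge c_0 > 0$ on $\{t < \sigma\}$ (for $n$ large, $\Var(B^t) = 1-1/n \ge 1/2$), the quantity $(M^t)^2 - \sum_{s=t_0}^{t-1}\Var(M^{s+1}-M^s\mid\mathfrak F^s)$ is a martingale. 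Optional stopping at $\tau$ gives $\Ex{(M^\tau)^2} = (M^{t_0})^2 + \Ex{\sum_{s=t_0}^{\tau-1}\Var(\cdots)} \ge (M^{t_0})^2 + \tfrac12\Ex{\tau - t_0}$. On the ``bad'' event $\mathcal A := \{\sigma > t_0+T\}$ we have $\tau - t_0 = T$ and $M^\tau = y_i^{t_0+T}$; on $\mathcal A^c$, $M^\tau = 0$. Hence $\Ex{(M^\tau)^2 \mid \mathcal A}\cdot \Pro{\mathcal A} \ge (M^{t_0})^2 + \tfrac12 T\,\Pro{\mathcal A} - (M^{t_0})^2\cdot\Pro{\mathcal A^c}$; rearranging and using $(M^{t_0})^2 \le 4m^2/n^2$ and $T = 720\,m^2/n^2$ yields a lower bound on $\Pro{\mathcal A}$ only if $(M^\tau)^2$ on $\mathcal A$ is large — so to conclude $\Pro{\mathcal A} \le 3/4$ I must control $\Ex{(M^\tau)^2 \mid \mathcal A}$ from above, i.e. show that conditioned on surviving, $y_i^{t_0+T}$ is not too large. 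This is the crux.

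To bound $y_i^{t_0+T}$ on the survival event, I would use the fact that the unconditioned $y_i^{t_0+T} = y_i^{t_0} - (\text{number of non-empty rounds in }[t_0,t_0+T)) + \sum_{s} B^s \le y_i^{t_0} + \sum_{s=t_0}^{t_0+T-1}(B^s - 1)$ when no emptying occurs; $\sum (B^s-1)$ is a sum of $T$ i.i.d. mean-zero variables with sub-exponential tails, so $\Ex{(y_i^{t_0+T})^2 \mathbf 1_{\mathcal A}} \le \Ex{(y_i^{t_0}+\sum_{s}(B^s-1))^2} = (y_i^{t_0})^2 + T\,\Var(B) \le 4m^2/n^2 + T$. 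Plugging this into the optional-stopping identity: $4m^2/n^2 + T \ge \Ex{(M^\tau)^2} = (M^{t_0})^2 + \tfrac12\Ex{\tau-t_0} \ge \tfrac12 T\,\Pro{\mathcal A}$, giving $\Pro{\mathcal A} \le 2(4m^2/n^2 + T)/T = 2(4m^2/n^2)/T + 2$ — which is useless as stated, so I need to be more careful: I should instead apply the variance identity with a \emph{larger} potential, or truncate, so that the $+2$ does not appear.

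The clean fix, and the step I expect to be the real obstacle, is to not use the raw second moment but to run the argument on the event that the martingale increments stay bounded (say $|B^s - 1| \le \log n$, which fails only with probability $n^{-\omega(1)}$ per step by a Chernoff bound on $\Bin(n,1/n)$, hence $n^{-\omega(1)}$ over the whole polynomial-length window $T$). Restricted to that event, $M^t$ has increments bounded by $\log n$ and variance $\ge 1/2$; then a standard quantitative maximal/hitting estimate for bounded-increment martingales (e.g. via the submartingale $e^{-\lambda M^t}$ for suitable small $\lambda$, or Azuma combined with the variance lower bound as in the recurrence argument for lazy random walks) shows $\Pro{M^t > 0 \ \forall t\in[t_0,t_0+T]} \le 3/4$ provided $T \ge C (m/n)^2$ for a suitable constant $C \le 720$; the numerical constant $720$ and the requirement $m \ge 6n$ are exactly what makes $2m/n$ small enough relative to $\sqrt T$. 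I would then absorb the $n^{-\omega(1)}$ failure probability of the bounded-increment event into the final $1/4$, concluding $\Pro{\bigcup_{t_1}\{y_i^{t_1}=0\}} \ge 1/4$. The main difficulty throughout is getting an honest constant-order lower bound on the hitting probability (equivalently, an upper bound on the conditional growth of $y_i$ given non-absorption) with explicit constants, since the step distribution is asymmetric and one-sided-heavy; everything else is Chernoff bounds and a union bound over $T = O((m/n)^2) = \mathrm{poly}(n)$ rounds.
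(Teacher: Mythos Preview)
Your diagnosis that the raw second-moment identity is useless is correct, but the fix you propose does not close the gap. Truncating increments to $|B^s - 1| \le \log n$ controls the \emph{step size}, not the \emph{terminal value} $M^{t_0+T}$ on the survival event $\mathcal A$; even with bounded steps, conditioned on never hitting zero the martingale can (and typically will) be of order $\sqrt{T}$ at time $T$, so $\Ex{(M^\tau)^2\mathbf 1_{\mathcal A}}$ is still $\Theta(T)$ and the identity collapses exactly as before. The tools you gesture at (the submartingale $e^{-\lambda M^t}$, Azuma) give upper tails of $M^t$ or concentration around the mean, not a lower bound on the probability of hitting zero; there is no off-the-shelf ``standard maximal/hitting estimate'' for bounded-increment martingales that delivers $\Pro{\mathcal A}\le 3/4$ directly with explicit constants.

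The missing idea is a \emph{two-sided} barrier. The paper stops not only at $0$ but also at $2y_i^{t_0}$, i.e.\ sets $\tau:=\min\{t\ge t_0: y_i^t=0 \text{ or } y_i^t\ge 2y_i^{t_0}\}$. This has two effects: first, optional stopping on the martingale $y_i^{t\wedge\tau}$ immediately gives $\Pro{y_i^\tau=0}\ge 1/2$ (gambler's ruin, \cref{lem:drift_prob}); second, the process now lives in a bounded range $\{0,\ldots,2s\}$ (using $\tilde y_i^t:=\min\{y_i^t,2s\}$ to absorb overshoot), so the variance identity $\Ex{(D^t)^2\mid\mathfrak F^{t-1}}\ge e^{-2}$ yields $\Ex{\tau-t_0}\le 45s^2$ via \cref{lem:drift_walk}, whence $\Pro{\tau-t_0\ge 180s^2}\le 1/4$ by Markov. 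A union bound then gives $\Pro{y_i^\tau=0 \text{ and } \tau\le t_0+720(m/n)^2}\ge 1/2-1/4=1/4$. The upper barrier is what makes both halves of the argument work; without it you are trying to bound the survival probability of an unbounded-range martingale in finite time, which is genuinely harder and not what is needed here.
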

\begin{proof}
Fix any bin $i \in [n]$ with load $y_i^{t_0} \leq 2 \cdot m/n$. Let us define the stopping time
\[
\tau := \min\left\{ t \geq t_0 \colon |y_i^{t} - y_i^{t_0}| \geq y_i^{t_0} \right\},
\]
so $\tau$ is the first time after round $t_0$ when the load if $i$ has either zeroed or doubled. Since
\[
 y_i^{t} = y_i^{t-1}  -  \mathbf{1}_{y_i^{t-1} > 0} + \Bin(n,1/n),
\]
it follows by taking expectations that for any $t \in [t_0+1,\tau]$, 
\[
 \Ex{ \left. y_i^{t} - y_i^{t-1} \, \right| \, \mathfrak{F}^{t-1} } = 0.
\]
Hence a drift argument~(\cref{lem:drift_prob}) implies that
\[
 \Pro{ y^{\tau} = 0} \geq \frac{1}{2}.
\]
In order to upper bound $\Ex{\tau}$, we define $\tilde{y}_i^{t}:=\min\{ y_i^{t}, 2 \cdot y_i^{t_0} \}$ for any $t \geq t_0$. 
Note that for any $t \in [t_0+1, \tau]$ with $\tilde{y}_i^{t-1} \geq 1$,
\[
\Pro{\tilde{y}_i^{t} = \tilde{y}_i^{t-1} - 1} \geq \Pro{\Bin(n, 1/n) = 0} = \Big( 1 - \frac{1}{n} \Big)^n \geq  e^{-2n/n} = e^{-2},
\]
using that $1 - z \geq e^{-2z}$ for any $z \in [0, 0.75]$.
Similarly, if $\tilde{y}_i^{t-1} = 0$,
\[
\Pro{\tilde{y}_i^{t} \geq 1} = \Pro{\Bin(n, 1/n) \neq 0} = 1 - \Big( 1 - \frac{1}{n} \Big)^n \geq 1 -\frac{1}{e}.
\]

This implies that $D^{t}:=\tilde{y}_i^{t} - \tilde{y}_i^{t-1}$ satisfies for any 
$t\in [t_0+1,\tau]$,
\[
 \Ex{ \left. (D^{t})^2 \, \right| \, \mathfrak{F}^{t-1} } \geq 1^2 \cdot e^{-2} > 0.
\]
Also, $D^{t} \leq 0$, as $\Ex{ \left.\tilde{y}_i^t - \tilde{y}_i^{t-1} \, \right| \, \mathfrak{F}^{t-1} } \leq \Ex{ \left. y_i^{t} - y_i^{t-1} \, \right| \, \mathfrak{F}^{t-1} }\leq 0$ for any $t \geq t_0$.
Recall $\tau = \min \{t \geq t_0: \tilde{y}_i^t =0 \vee \tilde{y}_i^t = 2 \cdot \tilde{y}_i^{t_0} \}$.
Hence by \cref{lem:drift_walk} with $s=x_i^{t_0}$ and $\sigma^2 = e^{-2} > 1/9$, it follows that 
\[
 \Ex{\tau-t_0} \leq \frac{5s^2}{\sigma^2} \leq 45 s^2.
\]
Using Markov's inequality, 
\[
\Pro{ \tau -t_0 \geq  180 s^2 } \leq \frac{1}{4},
\]
Hence, using the union bound and $s= x_i^{t_0} \leq 2 (m/n)$, we conclude that
\[
\Pro{ \bigcup_{t_1 \in [t_0,t_0 + 720 \cdot \frac{m^2}{n^2}]} \left\{ y_i^{t_1} = 0 \right\} }
 \geq \Pro{ \left\{ y_{i}^{\tau} = 0 \right\}  \cap \left\{\tau \leq t_0 + 720 \cdot \frac{m^2}{n^2} \right\} } \geq \frac{1}{2} - \frac{1}{4} = \frac{1}{4}. %
 \qedhere
\]
\end{proof}

In the next lemma, we will prove that once $y_i^{t}=0$ occurs, then with constant probability bin $i$ will have zero load in $\Omega(m/n)$ further rounds until time $\Oh((m/n)^2)$.
\begin{lem} \label{lem:idealized_two}
Consider the idealized process with an arbitrary load configuration at round $t_1$ with $m \geq 6n$ balls, such that there is a bin $i \in [n]$ with $y_i^{t_1}=0$. Then, for round $t_2:=t_1 + 24 \cdot (m/n)^2$,
\begin{align*}
    \Pro{ \sum_{t=t_1}^{t_2} \mathbf{1}_{y_i^t=0} \geq \frac{1}{6} \cdot \frac{m}{n} ~\Bigg|~ \mathfrak{F}^{t_1}, y_i^{t_1} = 0 } \geq \frac{1}{4}.
\end{align*}
\end{lem}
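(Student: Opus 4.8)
The plan is to reduce the claim to an anti-concentration estimate for a single binomial random variable, by reading off the number of empty rounds of bin $i$ from a telescoping identity. Write $\mu := m/n$, let $T' := t_2 - t_1 + 1$ (a positive integer, equal to $24(m/n)^2+1$), and for each round $t$ let $B^t := \sum_{j=1}^{n}\mathbf{1}_{Z_j^t=i}$ be the number of balls the idealized process allocates to bin $i$ in round $t$. The $B^t$ are i.i.d.\ $\Bin(n,1/n)$, independent of $\mathfrak{F}^{t_1}$, and $S := \sum_{t=t_1}^{t_2} B^t \sim \Bin(nT',1/n)$ has $\Ex{S} = T'$.

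First I would establish the identity: summing the recursion $y_i^{t+1} = y_i^{t} - \mathbf{1}_{y_i^t>0} + B^t$ over $t = t_1,\dots,t_2$ and telescoping the left-hand side gives $y_i^{t_2+1} - y_i^{t_1} = -\sum_{t=t_1}^{t_2}\mathbf{1}_{y_i^t>0} + S$. Using $y_i^{t_1}=0$ and $y_i^{t_2+1}\geq 0$, this rearranges to
\[
 \sum_{t=t_1}^{t_2}\mathbf{1}_{y_i^t=0} \;=\; T' - \sum_{t=t_1}^{t_2}\mathbf{1}_{y_i^t>0} \;=\; T' - S + y_i^{t_2+1} \;\geq\; T' - S .
\]
Hence it is enough to show $\Pro{ S \leq T' - \tfrac16\mu } \geq \tfrac14$, and since $S$ is independent of $\mathfrak{F}^{t_1}$ we may drop the conditioning on $\{\mathfrak{F}^{t_1}, y_i^{t_1}=0\}$.

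The second step --- the one I expect to be the only real difficulty --- is this anti-concentration bound: $S$ must fall below its mean $T'$ by the margin $\tfrac16\mu$ with constant probability, and since $\tfrac16\mu$ is a fixed fraction of $\sqrt{\mathrm{Var}(S)} = \sqrt{T'(1-1/n)} = \Theta(\mu)$, no tail (Chernoff/Chebyshev) bound helps --- one genuinely needs a lower bound on a near-central probability. I would split it as follows. Since $\Ex{S}=T'$ is an integer, it is also a median of the binomial $S$, so $\Pro{S \leq T'} \geq \tfrac12$. On the other hand, a standard bound on the maximum of the binomial pmf gives $\max_k \Pro{S=k} = \Oh\!\big(1/\sqrt{\mathrm{Var}(S)}\big) = \Oh(1/\mu)$, so the interval $(T' - \tfrac16\mu,\,T']$, which contains at most $\tfrac16\mu+1$ integers, satisfies $\Pro{ T' - \tfrac16\mu < S \leq T' } \leq \big(\tfrac16\mu+1\big)\cdot\Oh(1/\mu)$; tracking the explicit constants (using $\mu = m/n \geq 6$, and $\mathrm{Var}(S) \geq \tfrac12 T' \geq 12\mu^2$ for $n \geq 2$) this is an absolute constant at most $\tfrac16$, say. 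Subtracting, $\Pro{ S \leq T' - \tfrac16\mu } \geq \tfrac12 - \tfrac16 = \tfrac13 \geq \tfrac14$, which together with the displayed lower bound on $\sum_{t=t_1}^{t_2}\mathbf{1}_{y_i^t=0}$ completes the proof.
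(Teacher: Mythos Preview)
Your argument is correct and takes a genuinely different, more elementary route than the paper's proof. The telescoping identity $\sum_{t=t_1}^{t_2}\mathbf{1}_{y_i^t=0}=T'-S+y_i^{t_2+1}\geq T'-S$ is valid, and since $S=\sum_{t=t_1}^{t_2}B^t\sim\Bin(nT',1/n)$ is independent of $\mathfrak{F}^{t_1}$, the problem does reduce to a one-dimensional anti-concentration estimate for $S$. Your median argument (integer mean of a binomial is a median) together with the point-mass bound $\max_k\Pro{S=k}=\Oh(1/\sqrt{\Var{S}})$ handles this: with $\Var{S}\geq 12\mu^2$ and at most $\mu/3$ integers in $(T'-\mu/6,T']$, the interval carries mass well below $1/4$, so $\Pro{S\leq T'-\mu/6}\geq 1/4$. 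One small caveat: you should either cite or prove the explicit point-mass bound you use (a Stirling estimate giving $\max_k\Pro{S=k}\leq 1/\sqrt{2\pi\Var{S}}\cdot(1+o(1))$ suffices, and $\Var{S}\geq 432$ here makes the constants comfortable); the ``at most $1/6$'' claim is true but deserves one explicit line.

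By contrast, the paper argues via excursions of the single-bin load process: it shows that from any visit to $0$, the load returns to $0$ before reaching $m/n$ with probability $\geq 1-3n/m$ (a gambler's-ruin/drift bound), so with probability $\geq 1/2$ there are at least $\tfrac{1}{6}\cdot\tfrac{m}{n}$ returns to $0$ before the load ever reaches $m/n$; separately, a submartingale argument bounds $\Ex{\rho-t_1}$, where $\rho$ is the first time the load reaches $m/n$, and Markov's inequality gives $\rho-t_1\leq 24(m/n)^2$ with probability $\geq 3/4$. A union bound combines the two. Your approach is shorter, avoids the auxiliary drift lemmas entirely, and exploits nothing about the trajectory except the arrival count; the paper's approach is more structural and yields additional information (the excursion picture: many returns to $0$ occur \emph{before} the load first exceeds $m/n$), which is not needed for this lemma but may be of independent interest.
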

\begin{proof}

First, let us consider any round $s \geq t_1$ such that $y_i^s = 0$. We define
\[
 \tau(s) := \min \{t > s \colon y_i^t = 0 \wedge y_i^{t} \geq m/n \}.
\]
We seek to prove that
\[
 \Pro{ y_i^{\tau(s)} = 0 } \geq 1 - \Oh(n/m). 
\]
By \cref{lem:drift_prob}, for any $\gamma > 0$,
\begin{align*}
 \Pro{ y_i^{\tau(s)} \geq m/n ~\Big|~ y_i^{s+1} = \gamma  } \leq \frac{n}{m} \cdot \gamma.
\end{align*}
Further, as $y_i^{s}=0$, $\Pro{ y_i^{s+1} = \gamma \, \mid \, y_i^{s}=0} \leq \Pro{ \Bin (n,1/n) = \gamma} \leq 2^{-\gamma}$ (\cref{lem:binomial_bound}). Also by Markov's inequality, $\Pro{ \Bin (n,1/n) \geq m/n} \leq n/m$. Combining the last three inequalities, %
\begin{align*}
 \lefteqn{ \Pro{ y_i^{\tau(s)} \geq m/n } 
   }\\
&=\sum_{\gamma=1}^{m/n-1} \Pro{ \left. y_i^{s+1} = \gamma \, \right| \, y_i^{s} = 0} \cdot
 \Pro{ \left. y_i^{\tau(s)} \geq m/n \, \right| \, y_i^{s+1} = \gamma} + \Pro{ y_i^{s+1} \geq m/n} \\
 & \leq \sum_{\gamma=1}^{n} 2^{-\gamma} \cdot \gamma \cdot \frac{n}{m} + \frac{n}{m} \\ &\leq  3 \cdot \frac{n}{m},
\end{align*}
where the last inequality used the fact that $\sum_{\gamma=1}^{\infty} 2^{-\gamma} \cdot \gamma \leq 2$.
Therefore, 
\[
\Pro{ y_i^{\tau(s)} = 0} \geq 1 - 3 \cdot \frac{n}{m}.
\]
Next define $\rho :=\min\{t \geq t_1 \colon y_i^{t} \geq m/n \}$. With that, the probability that the load $0$ state is visited at least $\frac{1}{6} \cdot \frac{m}{n}$ times in the time-interval $[t_1,\rho]$, is at least
\begin{align}
\Pro{ \Bigl| \left\{ t_1 \leq s \leq \rho \colon y_i^{s} =0 \right\} \Bigr| \geq \frac{1}{6} \cdot \frac{m}{n}}  & \geq \left( 1 - 3 \cdot \frac{n}{m} \right)^{ \frac{1}{6} \cdot \frac{m}{n} } \geq
\frac{1}{2}.  \label{eq:step_2}
\end{align}
where the last inequality used Bernoulli's inequality, i.e., the fact that $(1+z)^{r} \geq 1 + r \cdot z$ for all $z \geq -1, r \geq 1$ (which applies since $m \geq 6n$).

The remaining part of the proof is to show that with some constant probability $>0$, $\rho=\Oh( (m/n)^2)$. 
First recall that $y_i^{t}$ has the following distribution,
\[
 y_i^{t} = y_i^{t-1} - \mathbf{1}_{y_i^{t-1} > 0} + \Bin(n,1/n).
\]
For any $t \in [t_1,\rho]$,
\begin{align*}
 \Ex{ \left. ( y_i^{t} - y_i^{t-1} )^2 \, \right| \, \mathfrak{F}^{t-1} } & = \Var{ \left. \Bin(n,1/n) - \mathbf{1}_{y_i^{t-1} > 0} \, \right|\, \mathfrak{F}^{t-1} } \\
 & = \Var{ \Bin(n,1/n) } \\
 & = n \cdot \frac{1}{n} \cdot \Big(1- \frac{1}{n} \Big) \geq \frac{1}{2}.
\end{align*}
Further for any $t \in [t_1,\rho]$, we have
\begin{align*}
 \Ex{  y_i^{t} - y_i^{t-1}  \, \left| \, \mathfrak{F}^{t-1} \right.} \geq 0.
 \end{align*}
Hence it follows by \cref{lem:drift_walk_two} that,
\begin{align}
 \Ex{ \rho - t_1 } \leq 2 \cdot \Ex{(y_i^{\rho})^2}. \label{eq:drift_before_markov}
\end{align}
We will now upper bound $\Ex{ (y_i^{\rho})^2}$.
In each round, we will apply the principle of deferred decision to expose $Z_j^t$ by looking at each of the $n$ trials individually. Thus, as soon as we know that $\sum_{\ell \in [k]} \mathbf{1}_{Z_\ell^t = i} \geq (m/n-y_i^{t-1})+1$ for some $t$ and $k$, we know that $y_i^{t} \geq m/n$ and hence $\rho=t$. Conditional on $\sum_{\ell \in [k]} \mathbf{1}_{Z_\ell^t = i} \geq (m/n-y_i^{t-1})+1$ means that, in distribution,
\[
 y_i^{t} = y_i^{\rho} = m/n + \Bin( n - k, 1/n),
\]
for some $n \geq k \geq m/n - y_i^{\rho-1}$. Clearly, the random variable $\Bin( n - k, 1/n)$ is stochastically the largest if $k$ is as small as possible. Thus we pessimistically take $k=0$, and obtain
\begin{align*}
 \Ex{(y_i^{\rho})^2} &\leq \Ex{ ( m/n + \Bin(n,1/n) )^2} \\
 &\leq 2 \cdot (m/n)^2 + 2 \cdot \Ex{ (\Bin(n,1/n))^2 } \\
 &\leq 3 \cdot (m/n)^2,
\end{align*}
using that $m \geq 6 \cdot n$.

Using this in \cref{eq:drift_before_markov}, and applying Markov's inequality,
\begin{align}
    \Pro{ \rho - t_1 \geq 24 \cdot (m/n)^2 } \leq \frac{1}{4}. \label{eq:step_3}
\end{align}

We can now conclude the argument by a union bound. 
First, we proved in \eqref{eq:step_2} that with probability at least $\frac{1}{2}$, the load $0$ state will be visited at least $\frac{1}{6} \cdot \frac{m}{n}$ times before reaching a load level larger than $\frac{m}{n}$. Secondly, the event in \eqref{eq:step_3} ensures that a load level larger than $\frac{m}{n}$ will be reached before round $t_1+24\cdot (m/n)^2$. Therefore, with $t_2=t_1 + 24 \cdot (m/n)^2$,
\[
    \Pro{ \sum_{t=t_0}^{t_2} \mathbf{1}_{x_i^{t}=0} \geq \frac{1}{6} \cdot \frac{m}{n}} \geq \frac{1}{2} - \frac{1}{4} = \frac{1}{4}. \qedhere
\]
\end{proof}

By combining \cref{lem:idealized_one} and \cref{lem:idealized_two}, we can easily derive the following lower bound on $\Ex{G_{t_0}^{t_3}}$ (and so also on $\Ex{ F_{t_0}^{t_3} }$).

\begin{lem} \label{lem:idealized_three}
Consider the idealized process with $m$ balls, where $m \geq 6 \cdot n$ and any round $t_0 \geq 0$. Then for round $t_3 := t_0 + 744 (m/n)^2$ it holds that
\begin{align*}
     \Ex{ \left. G_{t_0}^{t_3} \, \right| \, \mathfrak{F}^{t_0} } \geq \frac{1}{192} \cdot m.
\end{align*}
\end{lem}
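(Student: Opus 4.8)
The plan is to fix an arbitrary bin, show it is expected to contribute $\Omega(m/n)$ empty rounds to $G_{t_0}^{t_3}$, and then sum over the constant fraction of bins whose load at time $t_0$ is at most $2\cdot m/n$. First I would note that since $\sum_{i\in[n]} y_i^{t_0}=m$, a simple averaging (Markov) argument shows that at most $n/2$ bins can have $y_i^{t_0}>2\cdot m/n$; let $S:=\{i\in[n]\colon y_i^{t_0}\le 2\cdot m/n\}$, so $|S|\ge n/2$, and on $\mathfrak{F}^{t_0}$ the set $S$ is determined.

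Fix $i\in S$ and let $\sigma:=\min\{t\in[t_0,\,t_0+720\cdot(m/n)^2]\colon y_i^t=0\}$, with $\sigma:=\infty$ if no such round exists. By \cref{lem:idealized_one} (applicable since $y_i^{t_0}\le 2\cdot m/n$), $\Pro{\sigma<\infty \mid \mathfrak{F}^{t_0}}\ge \tfrac14$. On the event $\{\sigma<\infty\}$, $\sigma$ is a stopping time and the configuration at round $\sigma$ has $y_i^{\sigma}=0$, so by the strong Markov property and \cref{lem:idealized_two} applied with $t_1:=\sigma$ (and $t_2:=\sigma+24\cdot(m/n)^2$), with conditional probability at least $\tfrac14$ we have $\sum_{t=\sigma}^{\sigma+24\cdot(m/n)^2}\mathbf{1}_{y_i^t=0}\ge \tfrac16\cdot\tfrac{m}{n}$. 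Since $\sigma+24\cdot(m/n)^2\le t_0+744\cdot(m/n)^2=t_3$, this sum lower-bounds $\sum_{t=t_0}^{t_3}\mathbf{1}_{y_i^t=0}$. Chaining the two events via the tower rule,
\[
\Ex{\left.\sum_{t=t_0}^{t_3}\mathbf{1}_{y_i^t=0}\,\right|\,\mathfrak{F}^{t_0}}\ \ge\ \frac14\cdot\frac14\cdot\frac16\cdot\frac{m}{n}\ =\ \frac{1}{96}\cdot\frac{m}{n}.
\]

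Finally, summing over $i\in S$ and using $|S|\ge n/2$ together with linearity of expectation gives $\Ex{G_{t_0}^{t_3}\mid\mathfrak{F}^{t_0}}\ge \tfrac{n}{2}\cdot\tfrac{1}{96}\cdot\tfrac{m}{n}=\tfrac{m}{192}$, as required. The only delicate point is the conditioning: because $\sigma$ is a random (stopping) time rather than a fixed round, one must invoke the strong Markov property to legitimately restart \cref{lem:idealized_two} from round $\sigma$, and be careful that the two sub-events are combined through conditional expectations rather than a naive product of unconditional probabilities. Everything else — the averaging over bins and the bookkeeping $720+24=744$ — is routine.
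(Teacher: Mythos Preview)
Your proposal is correct and follows essentially the same approach as the paper: both restrict to the at-least-$n/2$ bins with initial load $\le 2m/n$, chain \cref{lem:idealized_one} and \cref{lem:idealized_two} through the stopping time at which the bin first empties (the paper writes this out as an explicit sum over the value of the stopping time, you invoke the strong Markov property), obtain the per-bin bound $\tfrac{1}{4}\cdot\tfrac{1}{4}\cdot\tfrac{1}{6}\cdot\tfrac{m}{n}=\tfrac{1}{96}\cdot\tfrac{m}{n}$, and then sum. Your observation that $720+24=744$ ensures $\sigma+24(m/n)^2\le t_3$ is exactly the bookkeeping the paper relies on implicitly.
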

\begin{proof}
Decomposing, we obtain
\begin{align*}
 \Ex{  G_{t_0}^{t_3} \, \mid \, \mathfrak{F}^{t_0} } 
 &\geq \sum_{i \in [n] \colon y_i^{t_0} \leq 2 \cdot m/n }  \Ex{ \sum_{t=t_0}^{t_3} \mathbf{1}_{y_i^{t} = 0} ~\Bigg|~ \mathfrak{F}^{t_0} }.
\end{align*}
In order to lower bound the last expectation for some bin $i \in [n]$ with $y_i^{t_0} \leq 2 \cdot m/n$, let us define $t_1$ as a stopping time, i.e., $t_1:= \min \{t \colon r \geq t_0 \colon y_{i}^{r} = 0 \}$,
\begin{align*}
\lefteqn{ \Pro{\left. \sum_{t=t_0}^{t_3} \mathbf{1}_{y_i^t=0} \geq \frac{1}{6} \cdot \frac{m}{n} ~\right|~ \mathfrak{F}^{t_0}} } \\
&\geq 
\Pro{\left. \bigcup_{r \in [t_0,t_0 + 720 \cdot \frac{m^2}{n^2}]} \left( \left\{ t_1 = r \right\} \cap \left\{ \sum_{t=t_1}^{t_3} \mathbf{1}_{y_i^t=0} \geq \frac{1}{6} \cdot \frac{m}{n} \right\}  \right) ~\right|~ \mathfrak{F}^{t_0}} \\ 
&= \sum_{r=t_0}^{t_0+ 720 \cdot \frac{m^2}{n^2}} \Pro{\left. \left\{ t_1 = r \right\} \cap \left\{ \sum_{t=t_1}^{t_3} \mathbf{1}_{y_i^t=0} \geq \frac{1}{6} \cdot \frac{m}{n} \right\} ~\right|~ \mathfrak{F}^{t_0}
} \\
&\stackrel{(a)}{\geq}  \sum_{r=t_0}^{t_0+ 720 \cdot \frac{m^2}{n^2}} \Pro{ t_1 = r ~\left|~ \mathfrak{F}^{t_0} \right.} \cdot
\Pro{ \left. \sum_{t=t_1}^{t_2} \mathbf{1}_{y_i^t=0} \geq \frac{1}{6} \cdot \frac{m}{n} ~\right|~ \mathfrak{F}^{t_1}, t_1 = r, y_i^{r} = 0} \\
&\stackrel{(b)}{\geq} \frac{1}{4} \cdot \Pro{ t_1 \in \left[t_0,t_0+720 \cdot \frac{m^2}{n^2} \right] ~\Bigg|~ \mathfrak{F}^{t_0} } \\
&\stackrel{(c)}{\geq} \frac{1}{4} \cdot \frac{1}{4}.
\end{align*}
where in $(a)$ we used the definition of $t_2:=t_1+ 24 \cdot (m/n)^2 $ (see~\cref{lem:idealized_two}), $(b)$ we used \cref{lem:idealized_three} and in $(c)$ we used \cref{lem:idealized_two}. This implies for the expectation for any bin $i$ with $y_i^{t} \leq 2 \cdot (m/n)$,
\[
\Ex{ \left. \sum_{t=t_0}^{t_3} \mathbf{1}_{y_i^{t} = 0} ~\right|~ \mathfrak{F}^{t_0} } \geq \frac{1}{4} \cdot \frac{1}{4} \cdot \frac{1}{6} \cdot \frac{m}{n} = \frac{1}{96} \cdot \frac{m}{n}.
\]
Since at any time, in particular at time $t_0$, it holds deterministically that at least half of all bins have load at most $2 \cdot m/n$, it follows that
\[
 \Ex{ \left. G_{t_0}^{t_3} \, \right| \, \mathfrak{F}^{t_0} } \geq \frac{1}{192} \cdot \frac{m}{n},
\]
which completes the proof.
\end{proof}

\begin{lem}\label{lem:lipschitz}
Consider the idealized process with $m$ balls, where $m \geq 6 \cdot n$.
For any $\mathfrak{F}^{t_0}$, define $f:=f((Z_{i}^{t})_{t \in [t_0,t_3),i \in [n]}):=G_{t_0}^{t_3}$, where $t_0 \leq t_3$ are arbitrary rounds. This is a function of the set of independent random variables $Z_i^{t}, i \in [n], t \in [t_0,t_3]$ (see \cref{eq:idealized}). Then changing one $Z_j^{t}$ changes $G_{t_0}^{t_3}$ by at most $1$.
\end{lem}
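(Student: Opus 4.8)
The plan is a bounded-differences argument by coupling. Fix $\mathfrak{F}^{t_0}$, a round $t\in[t_0,t_3)$ and an index $j\in[n]$, and let $(y^s)_{s\ge t_0}$ be the idealized process driven by the original samples and $(\tilde y^s)_{s\ge t_0}$ the one driven by the same samples except that $Z_j^t$ is changed from value $a$ to value $b$. If $a=b$ nothing changes and we are done, so assume $a\ne b$. Since $Z_j^t$ enters only the update producing $y^{t+1}$ from $y^t$, the two runs agree for all $s\le t$, hence these rounds contribute nothing to $G_{t_0}^{t_3}(\tilde y)-G_{t_0}^{t_3}(y)$. At round $t+1$ exactly one ball is redirected, so, writing $\delta^s:=\tilde y^s-y^s$, we have $\delta_a^{t+1}=-1$, $\delta_b^{t+1}=+1$ and $\delta_i^{t+1}=0$ for $i\notin\{a,b\}$; note $y_a^{t+1}\ge1$ because ball $j$ lands in $a$ in the original run, so $\tilde y_a^{t+1}\ge0$ is legitimate.

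The core is the following invariant, proved by induction on $s\ge t+1$: $\delta^s$ is supported on $\{a,b\}$ with $\delta_a^s\in\{-1,0\}$ and $\delta_b^s\in\{0,1\}$, and each of these two coordinates stays at $0$ once it reaches $0$. Subtracting the two update rules from \cref{eq:idealized} (the random increments cancel because the samples agree for $s\ge t+1$) gives $\delta_i^{s+1}=\delta_i^s-\big(\mathbf{1}_{\tilde y_i^s>0}-\mathbf{1}_{y_i^s>0}\big)$. For $i\notin\{a,b\}$, $\delta_i^s=0$ forces the indicators to coincide, so $\delta_i^{s+1}=0$. When $\delta_a^s=-1$ we have $\tilde y_a^s=y_a^s-1$ with $y_a^s\ge1$, so the bracket is $\mathbf{1}_{y_a^s\ge2}-1$ and $\delta_a^{s+1}=-\mathbf{1}_{y_a^s\ge2}\in\{-1,0\}$, which equals $0$ precisely when $y_a^s=1$; a symmetric computation when $\delta_b^s=1$ gives $\delta_b^{s+1}=\mathbf{1}_{y_b^s\ge1}\in\{0,1\}$, which equals $0$ precisely when $y_b^s=0$. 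This establishes the invariant.

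It remains to bound $G_{t_0}^{t_3}(\tilde y)-G_{t_0}^{t_3}(y)=\sum_{s=t+1}^{t_3}\sum_{i\in\{a,b\}}\big(\mathbf{1}_{\tilde y_i^s=0}-\mathbf{1}_{y_i^s=0}\big)$, since only coordinates $a$ and $b$ can differ. While $\delta_a^s=-1$ we have $y_a^s\ge1$, so the round-$s$ contribution of $a$ is $\mathbf{1}_{\tilde y_a^s=0}=\mathbf{1}_{y_a^s=1}$; by the invariant the first such round with $y_a^s=1$ (if any) is exactly the round after which $\delta_a$ is permanently $0$, so $a$ contributes $+1$ at most once and never anything negative. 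Symmetrically, while $\delta_b^s=1$ we have $\tilde y_b^s\ge1$, so the contribution of $b$ at round $s$ is $-\mathbf{1}_{y_b^s=0}$, which is $-1$ at most once and never positive. Summing, $G_{t_0}^{t_3}(\tilde y)-G_{t_0}^{t_3}(y)\in\{-1,0,1\}$, which is the claim. I expect the only delicate point to be the bookkeeping in this last step, namely making precise that the unique round at which the indicator flips at $a$ (resp.\ $b$) coincides with the round that zeroes $\delta_a$ (resp.\ $\delta_b$), so that each bin's total contribution has absolute value at most one; the rest is a routine induction, after which \cref{lem:idealized_three} combined with this Lipschitz property yields \cref{lem:many_empty} via McDiarmid's inequality.
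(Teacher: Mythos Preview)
Your proof is correct and follows essentially the same approach as the paper: both track the difference vector between the two coupled runs, show it is supported on the two affected bins with values in $\{-1,0,1\}$, observe that each coordinate is absorbed at $0$ the first time the lower of the two loads hits zero, and then count that each bin contributes at most one unit (of fixed sign) to the difference $G_{t_0}^{t_3}(\tilde y)-G_{t_0}^{t_3}(y)$. Your $\delta^s$ notation and explicit recursion $\delta_i^{s+1}=\delta_i^s-(\mathbf{1}_{\tilde y_i^s>0}-\mathbf{1}_{y_i^s>0})$ make the induction slightly more transparent than the paper's formulation via the stopping times $\rho(k),\rho(\tilde k)$, but the underlying argument is identical.
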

\begin{proof}
Fix an arbitrary load vector $y^{t_0}$. By definition of the idealized process (see, e.g.,~\cref{eq:idealized}), for any $i \in [n]$ and $t \in [t_0,t_3)$,
\[
  y_i^{t+1} := y_i^{t} - \mathbf{1}_{y_i^{t} > 0} + \sum_{j=1}^{n} \mathbf{1}_{Z_j^{t}=i}.
\]
Note that any assignment of bins $(z_i^t)_{t \in [t_0,t_3), i \in [n]}$ to the random variables $(Z_i^{t})_{t \in [t_0,t_3),i \in [n]}$ such that $Z_i^t = z_i^t$ completely determines the evolution of all $y^{t}$ for $t \in [t_0,t_3]$.

For any $j \in [n]$ and any $r \in [t_0,t_3]$, we now consider an alternative assignment of bins $\tilde{z}_i^{t}$ to coincide with $z_{i}^{t}$ apart from $z_{j}^{r}$, i.e.,
\begin{equation*}
    \tilde{z}_i^{t} ~
    \begin{cases}
     = z_i^t & \mbox{ if $i \neq j$ or $t \neq r$, } \\
     \neq z_i^{t} & \mbox{ if $i=j$ and $t=r$.} \\
    \end{cases}
\end{equation*}
To simplify notation, let $k:=z_j^r$ and $\tilde{k}:=\tilde{z}_j^r$; recall that $k \neq \tilde{k}$. Also let us denote by $\tilde{y}^{t}$ the load vector determined by the assignment $\tilde{z}$. Further, for any $\ell \in [n]$, let us denote by $e(\ell)$ the $n$-dimensional unit-vector defined by $e(\ell)_{i}:=\mathbf{1}_{\ell=i}$, $i \in [n]$. Then it is clear that $\tilde{y}^{t} = y^{t}$ for any $t \in [t_0,r)$, and 
\[
\tilde{y}^{r} + e(k) = y^{r} + e(\tilde{k}).
\]
which means that the load vectors $\tilde{y}$ and $y$ have a $\ell_1$-distance of $2$, since $k \neq \tilde{k}$. Since $\tilde{z}_i^t = z_i^t$ for any $t \in (r,t_3)$, it follows that both $\tilde{y}$ and $y$ receive the same number of balls in each round, i.e., for any $t \in (r,t_3)$,
\[
 \sum_{j=1}^{n} \mathbf{1}_{z_j^t = i} = \sum_{j=1}^{n} \mathbf{1}_{\tilde{z}_j^t = i}.
\]
The only difference may come from the terms $-\mathbf{1}_{\tilde{y}_i^t >0}$ and $-\mathbf{1}_{y_i^t >0}$. 
To this end, let us define two stopping times $\rho(k)$ and $\rho(\tilde{k})$, when the bins $k$ and $\tilde{k}$, respectively, become empty,
\begin{align*}
    \rho(k) := \min\left\{ t \geq r \colon \tilde{y}_k^t = 0 \right\}, \\
    \rho(\tilde{k}) := \min\left\{t \geq r \colon y_{\tilde{k}}^t = 0 \right\}.
\end{align*}
Note that by simple induction, for any $t \in (r,\rho(k))$,
\[
  \tilde{y}_{k}^{t} = y_{k}^{t} + 1.
\]
At iteration $\rho(k)$, $\tilde{y}_k^{t}=0$ but $y_{k}^{t}=1$, so $\mathbf{1}_{ \tilde{y}_k^{t} > 0}=0$ and $\mathbf{1}_{ y_k^{t} > 0}=1$ and thus
\[
 \tilde{y}_{k}^{\rho(k)+1} = y_{k}^{\rho(k)+1},
\]
and more generally, for any $t \geq \rho(k)+1$,
\[
\tilde{y}_{k}^{t} = y_{k}^{t}.
\]
The analogous argument holds for the stopping time $\rho(\tilde{k})$ and bin $\tilde{k}$. Thus it follows that we compare $G_{t_0}^{t_3}$ (which is defined via the load vector $y$) and $\tilde{G}_{t_0}^{t_3}$ (which is defined via the load vector $\tilde{y}$), the only difference occurs at rounds $\rho(k)$ when bin $k$ is empty in $\tilde{y}$ but not $y$, and at round $\rho(\tilde{k})$ when bin $\tilde{k}$ is empty in $y$ but not in $\tilde{y}$. Hence $G_{t_0}^{t_3}$ and $\tilde{G}_{t_0}^{t_3}$ differ by exactly one if $\rho(k) < t_3$ and $\rho(\tilde{k}) > t_3$ (or vice versa), while in all other cases, they are equal.
\end{proof}

\begin{lem} \label{lem:idealized_concentration}
Consider the idealized process with $m$ balls, where $m \geq 6 \cdot n$. Then, for any $t_0 \geq 0$ and $t_3=t_0+744 \cdot (m/n)^2$,
\[
 \Pro{ \left. G_{t_0}^{t_3} \geq  \frac{1}{384} \cdot m ~\right|~ \mathfrak{F}^{t_0} } \geq 1 - e^{-\Omega(n)}.
\]
\end{lem}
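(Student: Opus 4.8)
The plan is to combine the lower bound on the expectation of $G_{t_0}^{t_3}$ from \cref{lem:idealized_three} with a concentration inequality, using the bounded-differences (Lipschitz) property established in \cref{lem:lipschitz}. Specifically, we have $\Ex{G_{t_0}^{t_3} \mid \mathfrak{F}^{t_0}} \geq \frac{1}{192} \cdot m$ (with the caveat that the bound as stated in \cref{lem:idealized_three} reads $\frac{1}{192}\cdot \frac{m}{n}$; the intended bound aggregating over the $\geq n/2$ bins of load at most $2m/n$ gives $\frac{1}{192}\cdot m$, which is what we use here). The random variable $G_{t_0}^{t_3}$ is a function of the $N := n \cdot (t_3 - t_0)$ independent samples $(Z_i^t)_{i \in [n], t \in [t_0, t_3)}$, and by \cref{lem:lipschitz} changing any single coordinate changes $G_{t_0}^{t_3}$ by at most $1$.

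First I would invoke the method of bounded differences (McDiarmid's inequality) conditioned on $\mathfrak{F}^{t_0}$: for $\lambda > 0$,
\[
\Pro{ G_{t_0}^{t_3} \leq \Ex{G_{t_0}^{t_3} \mid \mathfrak{F}^{t_0}} - \lambda ~\Big|~ \mathfrak{F}^{t_0} } \leq \exp\left( - \frac{2\lambda^2}{N} \right) = \exp\left( - \frac{2\lambda^2}{n \cdot 744 \cdot (m/n)^2} \right).
\]
Then I would take $\lambda := \frac{1}{384} \cdot m = \frac12 \cdot \frac{1}{192} \cdot m \leq \frac12 \Ex{G_{t_0}^{t_3}\mid \mathfrak{F}^{t_0}}$, so that $\Ex{G_{t_0}^{t_3}\mid\mathfrak{F}^{t_0}} - \lambda \geq \frac{1}{384}\cdot m$, giving
\[
\Pro{ G_{t_0}^{t_3} \leq \tfrac{1}{384} \cdot m ~\Big|~ \mathfrak{F}^{t_0} } \leq \exp\left( - \frac{2 \cdot (m/384)^2}{744 \cdot m^2 / n} \right) = \exp\left( - \frac{n}{744 \cdot 384^2 / 2} \right) = e^{-\Omega(n)},
\]
which is exactly the claimed bound. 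I would then note that $\Pro{G_{t_0}^{t_3} \geq \frac{1}{384}\cdot m \mid \mathfrak{F}^{t_0}} \geq 1 - e^{-\Omega(n)}$ follows immediately.

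The only real subtlety — and the step I would be most careful about — is making sure the bounded-differences framework applies cleanly: McDiarmid's inequality requires the $Z_i^t$ to be independent (which they are by construction, \cref{eq:idealized}) and requires a uniform bound on the effect of each coordinate, which is precisely what \cref{lem:lipschitz} supplies. One should double-check that the conditioning on $\mathfrak{F}^{t_0}$ is harmless: given $\mathfrak{F}^{t_0}$ the initial load vector $y^{t_0} = x^{t_0}$ is fixed, and the remaining randomness is exactly the fresh i.i.d. samples $(Z_i^t)_{t \geq t_0}$, so the inequality applies verbatim to the conditional law. No delicate calculation beyond plugging in $t_3 - t_0 = 744(m/n)^2$ and the constant $\lambda = m/384$ is needed; the constants $744$ and $384$ in the statement are clearly chosen so that the concentration gap is a constant fraction of the mean and the exponent comes out linear in $n$.
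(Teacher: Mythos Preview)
Your proposal is correct and follows essentially the same approach as the paper: both invoke \cref{lem:idealized_three} for the expectation lower bound, \cref{lem:lipschitz} for the unit Lipschitz property, and then apply the Method of Bounded Differences (\cref{thm:mobd}) with $\lambda = \frac{1}{384}m$ to obtain the $e^{-\Omega(n)}$ tail. You even correctly flag the typo in the conclusion of \cref{lem:idealized_three} (the intended bound is $\frac{1}{192}\cdot m$, not $\frac{1}{192}\cdot\frac{m}{n}$), and your count $N = n(t_3-t_0)$ of independent samples matches the index set in \cref{lem:lipschitz}.
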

\begin{proof}
By \cref{lem:lipschitz}, $f:=f((Z_{i}^{t})_{t \in [t_0,t_3],i \in [n]})=G_{t_0}^{t_3}$, $f$ is a function of $(t_3-t_0+1) \cdot n$ independent random variables. By \cref{lem:idealized_three},
\[
 \Ex{ f ~\Big|~ \mathfrak{F}^{t_0} } \geq \frac{1}{192} \cdot m.
\]
As shown in \cref{lem:lipschitz}, changing one $Z_i^t$ changes $f$ by at most $1$.
Hence by the Method of Bounded Differences (\cref{thm:mobd}),
\begin{align*}
     \Pro{ G_{t_0}^{t_3} \leq \Ex{ G_{t_0}^{t_3}} - \lambda ~\Big|~ \mathfrak{F}^{t_0} } \leq \exp\left(- \frac{\lambda^2}{2 \sum_{t=t_0}^{t_3} \sum_{i=1}^{n} 1^2 } \right).
\end{align*}
Choosing $\lambda = \frac{1}{384} \cdot m$ yields
\[
  \Pro{ \left. G_{t_0}^{t_3} \leq \frac{1}{384} \cdot m ~\right|~ \mathfrak{F}^{t_0} } \leq 
     \exp\left(- \frac{ 2 \left( \frac{1}{384} \cdot m \right)^2 }{ 744 (m/n)^2 \cdot n} \right) \leq \exp(-\Omega(n)).\qedhere
\]
\end{proof}
As mentioned before, \cref{lem:coupling} implies that $F_{t_0}^{t_3}$ is stochastically larger than $G_{t_0}^{t_3}$, hence \cref{lem:many_empty} follows immediately from \cref{lem:idealized_concentration}.

\subsection{Bounding the Convergence Time for \texorpdfstring{$m \geq n$}{m >= n}} \label{sec:convergence_time}

We will now bound the convergence time for $m \geq n$. In particular, we will analyze the exponential potential function for $\alpha = \Theta(n/m)$ and show that in $\Oh(m^2/n)$ rounds the process reaches a configuration with $\Phi^t < \frac{48}{\alpha^2} \cdot n$, which implies a maximum load of $\Oh(\alpha^{-1} \cdot (\log(\alpha^{-1}) + \log n)) = \Oh(m/n \cdot \log m)$, which is $\Oh(m/n \cdot \log n)$ for $n \leq m \leq \poly(n)$.

We start by proving that potential drops in expectation when it is sufficiently large and there is a large fraction of empty bins.
\begin{lem} \label{lem:exponential_drop}
Consider the RBB process for any $m \geq n$, and the potential $\Phi := \Phi(\alpha)$ with $\alpha := \frac{1}{2 \cdot 384 \cdot 744} \cdot \frac{n}{m}$. Then for any round $t \geq 0$,
\[
\Ex{\left. \Phi^{t+1} \, \right|\, \mathfrak{F}^t} \leq \Phi^t \cdot e^{\alpha^2 - \alpha f^t} + 6n.
\]
In particular,
\[
\Ex{\Phi^{t+1} \,\, \left| \,\, \mathfrak{F}^t, \Phi^t > \frac{48}{\alpha^2} \cdot n \right.} \leq \Phi^t \cdot e^{1.5\alpha^2 - \alpha f^t}.
\]
\end{lem}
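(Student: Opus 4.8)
The plan is to obtain both inequalities directly from \cref{lem:exp_change_general} by writing $\kappa^t = n - F^t = n(1 - f^t)$ and exploiting that $\alpha$ is a tiny positive number: since $m \geq n$ we have $\alpha = \frac{1}{2 \cdot 384 \cdot 744} \cdot \frac{n}{m} \leq \frac{1}{2 \cdot 384 \cdot 744} < 1$. Starting from
\[
\Ex{\left. \Phi^{t+1} \,\right|\, \mathfrak{F}^t} \leq \Phi^t \cdot e^{-\alpha} \cdot e^{\frac{e^{\alpha} - 1}{n} \cdot \kappa^t} + F^t \cdot e^{\frac{e^{\alpha} - 1}{n} \cdot \kappa^t},
\]
I would use the elementary bound $e^{\alpha} - 1 \leq \alpha + \alpha^2$ (valid for $\alpha \leq 1$) together with $\kappa^t/n = 1 - f^t$ to get $\frac{e^{\alpha}-1}{n}\kappa^t \leq (\alpha + \alpha^2)(1 - f^t) = \alpha + \alpha^2 - \alpha f^t - \alpha^2 f^t \leq \alpha + \alpha^2 - \alpha f^t$.

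For the multiplicative term this immediately yields $e^{-\alpha} \cdot e^{\frac{e^{\alpha}-1}{n}\kappa^t} \leq e^{-\alpha} \cdot e^{\alpha + \alpha^2 - \alpha f^t} = e^{\alpha^2 - \alpha f^t}$. For the additive term I would bound crudely $F^t \leq n$ and $\frac{e^{\alpha}-1}{n}\kappa^t \leq e^{\alpha} - 1 \leq 1$ (since $\alpha < \ln 2$), so $F^t \cdot e^{\frac{e^{\alpha}-1}{n}\kappa^t} \leq e \cdot n \leq 6n$. Plugging both pieces back gives the first claimed inequality $\Ex{\Phi^{t+1} \mid \mathfrak{F}^t} \leq \Phi^t \cdot e^{\alpha^2 - \alpha f^t} + 6n$. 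The constant $6$ is intentionally generous (even $2n$ would do), leaving slack for the next step.

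For the ``in particular'' part, observe that $\{\Phi^t > \frac{48}{\alpha^2} n\}$ is $\mathfrak{F}^t$-measurable, so it suffices to show that on this event $\Phi^t \cdot e^{\alpha^2 - \alpha f^t} + 6n \leq \Phi^t \cdot e^{1.5\alpha^2 - \alpha f^t}$, i.e.\ that $6n \leq \Phi^t \cdot e^{\alpha^2 - \alpha f^t} \cdot (e^{0.5\alpha^2} - 1)$. Here I would use $e^{0.5\alpha^2} - 1 \geq 0.5\alpha^2$ and $e^{\alpha^2 - \alpha f^t} \geq e^{-\alpha} \geq 1 - \alpha \geq \frac{1}{2}$ (using $f^t \in [0,1]$ and that $\alpha$ is tiny), so that the right-hand side is at least $\frac{\alpha^2}{4} \cdot \Phi^t > \frac{\alpha^2}{4} \cdot \frac{48}{\alpha^2} n = 12n > 6n$, as required.

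The argument is essentially routine arithmetic; the only point demanding care is bookkeeping the constants so that the $0.5\alpha^2$ of slack manufactured in the exponent (by replacing $e^{\alpha^2}$ with $e^{1.5\alpha^2}$) really dominates the additive error $6n$. This is precisely what forces the threshold to scale like $\Theta(n/\alpha^2)$; choosing it as $\frac{48}{\alpha^2} n$ leaves a comfortable factor-two margin. One must also remember that every ``$\alpha$ small'' estimate (e.g.\ $e^{\alpha}-1 \leq \alpha + \alpha^2$, $e^{\alpha}-1 \leq 1$, $1-\alpha \geq \frac{1}{2}$) is justified by $m \geq n$, which pins $\alpha$ below $\frac{1}{2 \cdot 384 \cdot 744}$.
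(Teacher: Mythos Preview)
Your proposal is correct and follows essentially the same approach as the paper: invoke \cref{lem:exp_change_general}, use $e^{\alpha}-1\leq \alpha+\alpha^2$ together with $\kappa^t=n(1-f^t)$ to obtain the multiplicative factor $e^{\alpha^2-\alpha f^t}$, bound the additive term crudely by $6n$, and then for the second statement verify that the extra $e^{0.5\alpha^2}$ in the exponent absorbs the additive $6n$ once $\Phi^t>\frac{48}{\alpha^2}n$. The only cosmetic differences are that the paper bounds the additive term via $e^{e^{\alpha}-1}\leq 6$ for $\alpha\leq 1$ (whereas you use the sharper $\alpha<\ln 2$ to get $e^{\alpha}-1\leq 1$), and the paper organizes the second-part algebra by factoring $e^{\alpha^2-\alpha f^t}=e^{1.5\alpha^2-\alpha f^t}\cdot e^{-0.5\alpha^2}$ and bounding $e^{-0.5\alpha^2}\leq 1-\tfrac{1}{4}\alpha^2$, arriving at the same $\tfrac{\alpha^2}{4}\Phi^t\cdot e^{-\alpha}\geq 6n$ comparison you reach directly.
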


\begin{proof}

Using \cref{lem:exp_change_general},
\begin{align*}
\Ex{\left. \Phi^{t+1} \,\right|\, \mathfrak{F}^t} 
 & \leq \Phi^t \cdot e^{-\alpha} \cdot e^{\frac{e^{\alpha} - 1}{n} \cdot \kappa^t} + (n - \kappa^t) \cdot e^{\frac{e^{\alpha} - 1}{n} \cdot \kappa^t} \\
 & \leq \Phi^t \cdot e^{-\alpha} \cdot e^{\frac{e^{\alpha} - 1}{n} \cdot \kappa^t} + 6n,
\end{align*}
since $e^{\frac{e^{\alpha} - 1}{n} \cdot \kappa^t} \leq e^{e^{\alpha} - 1} \leq 6$, as $\alpha \leq 1$ and $\kappa^t \leq n$.

When $\alpha < 1.75$, we have that $e^{\alpha} \leq 1 + \alpha + \alpha^2$, so
\[
\Ex{\left. \Phi^{t+1} \,\right|\, \mathfrak{F}^t} \leq \Phi^t \cdot e^{-\alpha} \cdot e^{(\alpha + \alpha^2) \cdot \frac{\kappa^t}{n}} + 6n,
\]
which proves the first statement.

For the second statement, in particular, since $\kappa^t = n \cdot (1 - f^t)$, we get
\[
\Ex{\left. \Phi^{t+1} \,\right|\, \mathfrak{F}^t} \leq \Phi^t \cdot e^{\alpha^2 - \alpha f^t} + 6n.
\]
Assuming that $\Phi^t > \frac{48}{\alpha^2} \cdot n$, the first term dominates the additive increase,
\begin{align*}
\Ex{\Phi^{t+1} \,\, \left| \,\, \mathfrak{F}^t, \Phi^t > \frac{48}{\alpha^2} \cdot n \right.} 
 & \leq \Phi^t \cdot e^{1.5 \alpha^2 - \alpha f^t} \cdot e^{-0.5 \alpha^2} + 6n \\
 & \stackrel{(a)}{\leq} \Phi^t \cdot e^{1.5 \alpha^2 - \alpha f^t} \cdot \Big( 1 - \frac{1}{4} \alpha^2 \Big)  + 6n \\
 & \leq \Phi^t \cdot e^{1.5 \alpha^2 - \alpha f^t} - \frac{1}{4} \alpha^2 \cdot e^{-\alpha} \cdot \frac{48}{\alpha^2} \cdot n + 6n \\
 & \stackrel{(b)}{\leq} \Phi^t \cdot e^{1.5 \alpha^2 - \alpha f^t} - 6n + 6n \\
 & = \Phi^t \cdot e^{1.5 \alpha^2 - \alpha f^t},
\end{align*}
using in $(a)$ that $e^z \leq 1 + \frac{1}{2} z $ for $z \in [-1, 0]$ and in $(b)$ that $e^{-\alpha} \geq \frac{1}{2}$.
\end{proof}

We now define the event
\[
\mathcal{E}^t := \left\lbrace \Phi^t \leq \frac{48}{\alpha^2} \cdot n \right\rbrace.
\]
When $\mathcal{E}^t$ holds, the potential is small enough to imply a maximum load of $\Oh(m/n \cdot \log m)$. When it is larger, it drops by a multiplicative factor. We now define for any $t_0 \geq 0$, the \textit{adjusted exponential potential function} $\tilde{\Phi}_{t_0} := \tilde{\Phi}_{t_0}(\alpha)$, is defined as $\tilde{\Phi}_{t_0}^{t_0} := \Phi^{t_0}(\alpha)$ and for any $s > t_0$
\begin{align*}
\tilde{\Phi}_{t_0}^s := \mathbf{1}_{\cap_{t \in [t_0, s)} \neg \mathcal{E}^t} \cdot \Phi^s(\alpha) \cdot e^{\sum_{t = t_0}^{s-1} (\alpha f^t - 1.5 \alpha^2)}.
\end{align*}
We will now show that it forms a super-martingale.

\begin{restatable}{lem}{LemPhiSupermartingale}\label{lem:phi_supermartingale}
For any $t_0 \geq 0$ and $\alpha > 0$ as defined in \cref{lem:exponential_drop}, the sequence $(\tilde{\Phi}_{t_0}^s(\alpha))_{s \geq t_0}$ forms a super-martingale.
\end{restatable}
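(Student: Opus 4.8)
The plan is to verify the defining super-martingale inequality $\Ex{\tilde{\Phi}_{t_0}^{s+1} \mid \mathfrak{F}^s} \leq \tilde{\Phi}_{t_0}^s$ directly for each $s \geq t_0$. Write $\mathcal{A}^s := \bigcap_{t \in [t_0, s)} \neg \mathcal{E}^t$, with the convention that $\mathcal{A}^{t_0}$ is the empty intersection and hence certain; then $\tilde{\Phi}_{t_0}^s = \mathbf{1}_{\mathcal{A}^s} \cdot \Phi^s \cdot e^{\sum_{t=t_0}^{s-1}(\alpha f^t - 1.5\alpha^2)}$ for every $s \geq t_0$ (for $s=t_0$ the empty sum and $\mathbf{1}_{\mathcal{A}^{t_0}}=1$ recover $\Phi^{t_0}$). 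Since each $\mathcal{E}^t$ is $\mathfrak{F}^t$-measurable and $\Phi^s,f^s$ are functions of $x^s$, the variable $\tilde{\Phi}_{t_0}^s$ is $\mathfrak{F}^s$-measurable and bounded (by $n e^{\alpha m}$ times a bounded exponential), so all conditional expectations below are well-defined. I would record at the outset the two elementary facts that drive everything: $\mathbf{1}_{\mathcal{A}^{s+1}} = \mathbf{1}_{\mathcal{A}^s}\cdot\mathbf{1}_{\neg\mathcal{E}^s}$, and the factors $\mathbf{1}_{\mathcal{A}^s}$, $\mathbf{1}_{\neg\mathcal{E}^s}$, $e^{\sum_{t=t_0}^{s}(\alpha f^t - 1.5\alpha^2)}$ are all $\mathfrak{F}^s$-measurable (the last because $f^s$ depends only on $x^s$).

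Now split on $\mathcal{A}^s$. On $\neg\mathcal{A}^s$ we have $\tilde{\Phi}_{t_0}^s = 0$, and since $\mathbf{1}_{\mathcal{A}^{s+1}}=0$ there too, also $\tilde{\Phi}_{t_0}^{s+1}=0$, so the inequality is trivial. On $\mathcal{A}^s$, pulling the $\mathfrak{F}^s$-measurable factors out of the conditional expectation gives
\[
\Ex{\tilde{\Phi}_{t_0}^{s+1} \mid \mathfrak{F}^s} = \mathbf{1}_{\neg \mathcal{E}^s} \cdot e^{\sum_{t=t_0}^{s}(\alpha f^t - 1.5\alpha^2)} \cdot \Ex{\Phi^{s+1} \mid \mathfrak{F}^s}.
\]
If $\mathcal{E}^s$ holds the right-hand side is $0 \leq \tilde{\Phi}_{t_0}^s$. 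If $\neg\mathcal{E}^s$ holds, i.e. $\Phi^s > \frac{48}{\alpha^2}\cdot n$, the second bound of \cref{lem:exponential_drop} yields $\Ex{\Phi^{s+1} \mid \mathfrak{F}^s} \leq \Phi^s \cdot e^{1.5\alpha^2 - \alpha f^s}$; substituting and cancelling the round-$s$ term $\alpha f^s - 1.5\alpha^2$ of the exponent against $e^{1.5\alpha^2 - \alpha f^s}$ leaves exactly $\Phi^s \cdot e^{\sum_{t=t_0}^{s-1}(\alpha f^t - 1.5\alpha^2)} = \tilde{\Phi}_{t_0}^s$. This also covers the base case $s=t_0$.

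I do not expect a genuine obstacle; the argument is bookkeeping around the truncation indicator. Two points need care. First, the correction exponent of $\tilde{\Phi}_{t_0}^{s+1}$ runs up to and including $\alpha f^s - 1.5\alpha^2$, and $f^s$ is already $\mathfrak{F}^s$-measurable; this is precisely what lets it cancel the drop factor $e^{1.5\alpha^2 - \alpha f^s}$ coming out of \cref{lem:exponential_drop}, turning the super-martingale bound into an equality on $\mathcal{A}^s\cap\neg\mathcal{E}^s$. Second, the identity $\mathbf{1}_{\mathcal{A}^{s+1}} = \mathbf{1}_{\mathcal{A}^s}\mathbf{1}_{\neg\mathcal{E}^s}$ is what makes the sequence frozen at $0$ from the first round at which $\mathcal{E}^t$ occurs, so we only ever invoke the clean multiplicative (additive-term-free) second bound of \cref{lem:exponential_drop}, namely on $\neg\mathcal{E}^s$, and never have to absorb the $+6n$ from its first statement.
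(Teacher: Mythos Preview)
Your proof is correct and follows essentially the same approach as the paper: a case split on whether $\mathcal{E}^s$ (and, in your version, also $\mathcal{A}^s$) holds, pulling out $\mathfrak{F}^s$-measurable factors, and on $\neg\mathcal{E}^s$ invoking the second (additive-term-free) bound of \cref{lem:exponential_drop} so that the $e^{1.5\alpha^2-\alpha f^s}$ factor cancels the last summand of the correction exponent. Your explicit treatment of the $\neg\mathcal{A}^s$ case and the measurability/integrability remarks are slightly more careful than the paper's write-up, but the argument is the same.
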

\begin{proof}
Consider any $t_0 \geq 0$. Then we need to prove for any $s \geq t_0$ that,
\begin{align}
   \Ex{ \left. \tilde{\Phi}_{t_0}^{s+1} \, \right| \, \mathfrak{F}^{s}} \leq  \tilde{\Phi}_{t_0}^{s}. \label{eq:submartingale}
\end{align}
We continue with a case distinction.

\textbf{Case A [$\mathcal{E}^{s}$ holds]:} Then $\mathbf{1}_{\cap_{t \in [t_0, s+1)} \mathcal{E}^t} = 0$ and $\tilde{\Phi}_{t_0}^{s+1} = 0$, so the inequality in \cref{eq:submartingale} holds.

\textbf{Case B [$\mathcal{E}^{s}$ does not hold]:} Then 
\begin{align*}
\Ex{\left. \tilde{\Phi}_{t_0}^{s+1} \,\right|\, \mathfrak{F}^s, \neg \mathcal{E}^s} 
 & = \Ex{\left. \mathbf{1}_{\cap_{t \in [t_0, s+1)} \neg \mathcal{E}^t} \cdot \Phi^{s+1} \cdot e^{\sum_{t = t_0}^{s} (\alpha f^t - 1.5\alpha^2)} \,\, \right|\,\,  \mathfrak{F}^s, \neg \mathcal{E}^{s}} \\
 & = \mathbf{1}_{\cap_{t \in [t_0, s)} \neg \mathcal{E}^t} \cdot e^{\sum_{t = t_0}^{s} (\alpha f^t - 1.5\alpha^2)} \cdot \Ex{\mathbf{1}_{\neg \mathcal{E}^s} \cdot \Phi^{s+1} \,\left|\, \mathfrak{F}^s, \neg \mathcal{E}^{s} \right.} \\
 & = \mathbf{1}_{\cap_{t \in [t_0, s)} \neg \mathcal{E}^t} \cdot e^{\sum_{t = t_0}^{s} (\alpha f^t - 1.5\alpha^2)} \cdot \Ex{\Phi^{s+1} \,\left|\, \mathfrak{F}^s, \neg \mathcal{E}^{s} \right.} \\
 & \leq \mathbf{1}_{\cap_{t \in [t_0, s)} \neg \mathcal{E}^t} \cdot e^{\sum_{t = t_0}^{s} (\alpha f^t - 1.5\alpha^2)} \cdot \Phi^s \cdot e^{1.5\alpha^2 - \alpha f^s} \\
 & \leq \mathbf{1}_{\cap_{t \in [t_0, s)} \neg \mathcal{E}^t} \cdot \Phi^s \cdot e^{\sum_{t = t_0}^{s-1} (\alpha f^t - 1.5\alpha^2)} \\
 & = \tilde{\Phi}_{t_0}^s,
\end{align*}
using that $\ex{\Phi^{s+1} \mid \mathfrak{F}^s, \Phi^s > \frac{48}{\alpha^2} \cdot n} \leq \Phi^s \cdot e^{1.5\alpha^2 - \alpha f^s}$ by \cref{lem:exponential_drop}.
\end{proof}

Using \cref{lem:many_empty}, we will show that in a $\Theta(m^2/n)$ interval \Whp~the potential becomes small at least once, implying an $\Oh(m/n \cdot \log m)$ bound on the maximum load.

\begin{lem}[Convergence] \label{lem:convergence}
Consider the RBB process for any $m \geq n$ and the potential $\Phi := \Phi(\alpha)$ for $\alpha > 0$ as defined in \cref{lem:exponential_drop}. Let $c_r := 16 \cdot 384^2 \cdot 744^2$. Then, for any $t_0 \geq 0$, for $t_1 := t_0 + c_r \cdot \frac{m^2}{n}$, we have
\[
\Pro{\bigcup_{t \in [t_0, t_1]} \left\{ \Phi^t \leq \frac{48}{\alpha^2} \cdot n \right\} } \geq 1 - e^{-\Omega(n)}.
\]
In particular, this implies that for $m = \poly(n)$, there exists a constant $C > 0$
\[
\Pro{\bigcup_{t \in [t_0, t_1]} \left\lbrace \max_{i \in [n]} x_i^t \leq C \cdot \frac{m}{n} \cdot \log m  \right\rbrace } \geq 1 - e^{-\Omega(n)}.
\]
\end{lem}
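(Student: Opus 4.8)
The plan is to combine the super-martingale $(\tilde{\Phi}_{t_0}^s)_{s \geq t_0}$ from \cref{lem:phi_supermartingale} with the lower bound on the number of empty bins from \cref{lem:many_empty}, using a two-scale argument: partition $[t_0,t_1]$ into $\Theta(\sqrt{c_r/744})$ blocks each of length $744(m/n)^2$, and argue that if $\mathcal{E}^t$ never holds over the whole interval, then on each block the accumulated empty-bin count forces $\tilde{\Phi}$ to shrink by a fixed multiplicative factor, so after all the blocks $\tilde{\Phi}$ would be forced below its own trivial lower bound, a contradiction.

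Concretely, suppose for contradiction that $\neg \mathcal{E}^t$ holds for all $t \in [t_0,t_1]$. Then $\tilde{\Phi}_{t_0}^s = \Phi^s \cdot \exp\bigl(\sum_{t=t_0}^{s-1}(\alpha f^t - 1.5\alpha^2)\bigr)$ with the indicator equal to $1$ throughout. First I would establish a high-probability upper bound on $\tilde{\Phi}_{t_0}^{t_1}$: since $\tilde{\Phi}$ is a non-negative super-martingale, $\ex{\tilde{\Phi}_{t_0}^{t_1} \mid \mathfrak{F}^{t_0}} \leq \tilde{\Phi}_{t_0}^{t_0} = \Phi^{t_0}$, and $\Phi^{t_0} \leq n \cdot e^{\alpha m} = \poly(n)$ since $\alpha = \Theta(n/m)$; Markov then gives $\tilde{\Phi}_{t_0}^{t_1} \leq \poly(n)$ with high probability. (One should be slightly careful about applying a crude Markov bound versus iterating over blocks; I expect the cleanest route is to apply the super-martingale inequality once over the whole interval and then Markov, keeping the failure probability $\poly(n)^{-1}$, which is dominated by the $e^{-\Omega(n)}$ terms below.) Second, I would invoke \cref{lem:many_empty}: applying it to each of the $K := c_r \cdot \frac{m^2}{n} / (744(m/n)^2) = 16 \cdot 384^2 \cdot 744 \cdot n$ consecutive blocks $[\sigma_j, \sigma_j + 744(m/n)^2]$ and taking a union bound, with probability $1 - K \cdot e^{-\Omega(n)} = 1 - e^{-\Omega(n)}$ every block satisfies $F_{\sigma_j}^{\sigma_j + 744(m/n)^2} \geq \frac{1}{384} m$, i.e. $\sum_{t \in \text{block}} f^t \geq \frac{1}{384}\cdot\frac{m}{n}$. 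Summing the per-block drift exponent $\sum_{t=t_0}^{t_1-1}(\alpha f^t - 1.5\alpha^2)$: the $\alpha f^t$ part contributes at least $\alpha \cdot K \cdot \frac{1}{384}\cdot\frac{m}{n}$, and the $1.5\alpha^2$ part contributes $-1.5\alpha^2 (t_1 - t_0) = -1.5\alpha^2 \cdot c_r \frac{m^2}{n}$; with $\alpha = \frac{1}{2\cdot 384 \cdot 744}\cdot\frac{n}{m}$ the first term dominates the second by a constant factor (this is exactly what the choice of $\alpha$ and $c_r$ is engineered for), so the exponent is at least some $\Omega(c_r \alpha^2 m^2/n) = \Omega(n)$, growing with $n$.

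Combining the two: on the good event we would have $\Phi^{t_1} = \tilde{\Phi}_{t_0}^{t_1} \cdot \exp\bigl(-\sum(\alpha f^t - 1.5\alpha^2)\bigr) \leq \poly(n) \cdot e^{-\Omega(n)} < 1$, contradicting $\Phi^{t_1} \geq n > 1$ (the potential is always at least $n$ since each $e^{\alpha x_i^t} \geq 1$). Hence $\neg\mathcal{E}^t$ cannot hold throughout, so $\bigcup_{t \in [t_0,t_1]} \mathcal{E}^t$ holds with probability $1 - e^{-\Omega(n)}$, which is the first displayed claim. For the second claim, when $\mathcal{E}^t$ holds we have $\Phi^t \leq \frac{48}{\alpha^2} n$, so for every bin $e^{\alpha x_i^t} \leq \frac{48}{\alpha^2} n$, giving $x_i^t \leq \frac{1}{\alpha}\bigl(\log\frac{48 n}{\alpha^2}\bigr) = \Oh\bigl(\frac{1}{\alpha}(\log n + \log \frac{1}{\alpha})\bigr)$; since $\alpha = \Theta(n/m)$ and $m = \poly(n)$, $\frac{1}{\alpha} = \Theta(m/n)$ and $\log\frac1\alpha = \Theta(\log m)$, so this is $\Oh(\frac{m}{n}\log m)$, yielding the constant $C$.

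The main obstacle I anticipate is bookkeeping the constants so the multiplicative drift genuinely beats the $1.5\alpha^2$ drift-loss term and the $6n$ additive term simultaneously over the full $\Theta(m^2/n)$ horizon — this is precisely why the adjusted potential was defined to absorb the $1.5\alpha^2$ into the exponent and why \cref{lem:exponential_drop} was stated conditionally on $\Phi^t > \frac{48}{\alpha^2}n$ (so that the $6n$ is swallowed). A secondary subtlety is the direction of the Markov/super-martingale argument: I want an upper bound on $\tilde{\Phi}_{t_0}^{t_1}$ that holds with probability $1 - \poly(n)^{-1}$ (not just in expectation), and I must make sure its failure probability is negligible compared to the $e^{-\Omega(n)}$ coming from the union bound over blocks — choosing the Markov threshold as a large fixed polynomial in $n$ suffices.
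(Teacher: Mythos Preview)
Your overall strategy matches the paper's proof exactly: partition $[t_0,t_1]$ into $K = 16 \cdot 384^2 \cdot 744 \cdot n$ blocks of length $744(m/n)^2$, apply \cref{lem:many_empty} to each block by union bound, use the super-martingale property of $\tilde{\Phi}$ together with Markov, and derive a contradiction with $\Phi^{t_1} \geq 1$ on the event $\cap_t \neg\mathcal{E}^t$. Your computation of the drift exponent ($8n$ from the $\alpha f^t$ term minus $6n$ from the $1.5\alpha^2$ term, net $2n$) also agrees with the paper.

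There is, however, a genuine quantitative gap in your Markov step. You write $\Phi^{t_0} \leq n \cdot e^{\alpha m} = \poly(n)$, but this is false: since $\alpha = \frac{1}{2\cdot 384\cdot 744}\cdot \frac{n}{m}$ we have $\alpha m = \Theta(n)$, so $\Phi^{t_0}$ can be as large as $e^{\Theta(n)}$, not $\poly(n)$. Consequently, applying Markov with a polynomial threshold is vacuous in the worst case, and even applying it with threshold $n^c \cdot \Phi^{t_0}$ only yields failure probability $n^{-c}$, which is \emph{not} dominated by $e^{-\Omega(n)}$ (you have the domination backwards in your write-up). The final bound would then be $1 - n^{-c}$, not the claimed $1 - e^{-\Omega(n)}$.

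The paper resolves this by taking the Markov threshold to be $e^{n}\cdot \Phi^{t_0}$, giving failure probability $e^{-n}$. On the good event one then has
\[
\mathbf{1}_{\cap_t \neg\mathcal{E}^t}\cdot \Phi^{t_1} \;\leq\; e^{n}\cdot \Phi^{t_0}\cdot e^{-2n} \;\leq\; e^{n}\cdot e^{\alpha m}\cdot e^{-2n} \;=\; e^{n + n/(2\cdot 384\cdot 744) - 2n} \;<\; 1,
\]
which yields the contradiction. The point is that the accumulated drift $2n$ must be large enough to absorb \emph{both} the $e^{n}$ Markov slack \emph{and} the worst-case initial potential $e^{\alpha m}$; the constants $c_r$ and $\alpha$ are chosen precisely so that $\alpha m \ll n$ and the net exponent stays negative. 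Once you make this single change, your argument is complete and identical to the paper's.
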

\begin{proof} 
Let $\Delta := 744 \cdot \frac{m^2}{n^2}$. We partition $[t_0, t_1]$ into $16 \cdot 384^2 \cdot 744 \cdot n$ intervals of length $\Delta$. Applying \cref{lem:many_empty} to each of these intervals and taking the union bound over these, we get
\begin{align*}
\Pro{ \bigcap_{i \in [0, 16 \cdot 384^2 \cdot 744 \cdot n)} \left\{ F_{t_0 + \Delta \cdot i}^{t_0 + \Delta \cdot (i+1) } \geq \frac{1}{384} \cdot m \right\} } \geq 
1-e^{-\Omega(n)}.
\end{align*}
By aggregating, this implies that
\begin{align*}
\Pro{F_{t_0}^{t_1} \geq 16 \cdot 384 \cdot 744 \cdot m \cdot n } \geq 1-e^{-\Omega(n)}.
\end{align*}
By \cref{lem:phi_supermartingale},
\[
 \Ex{\left. \tilde{\Phi}_{t_0}^{t_1} \, \right| \, \mathfrak{F}^{t_0} } \leq \tilde{\Phi}_{t_0}^{t_0} = \Phi^{t_0}.
\]
Hence by Markov's inequality we also have that
\[
\Pro{\tilde{\Phi}_{t_0}^{t_1} \leq e^{n} \cdot \Phi^{t_0}} \geq 1 - e^{-n}.
\]
Assume now that $\left\lbrace F_{t_0}^{t_1} \geq 16 \cdot 384 \cdot 744 \cdot m \cdot n \right\rbrace$ and $\left\lbrace \tilde{\Phi}_{t_0}^{t_1} \leq e^n \cdot \Phi^{t_0} \right\rbrace$ both hold. Then,
    \begin{align*}
  \tilde{\Phi}_{t_0}^{t_1} &= \mathbf{1}_{\cap_{t \in [t_0,t_1)} \neg \mathcal{E}^t } \cdot   \Phi^{t_1} \cdot e^{\sum_{t = t_0}^{t_1-1} (\alpha f^t - 1.5 \alpha^2)} \leq e^{n} \cdot \Phi^{t_0},
\end{align*}
which implies that, %
\begin{align*}
 \mathbf{1}_{\cap_{t \in [t_0,t_1)} \neg \mathcal{E}^t } \cdot \Phi^{t_1} 
 & \leq e^{n} \cdot \Phi^{t_0} \cdot e^{-\sum_{t = t_0}^{t_1-1} (\alpha f^t - 1.5 \alpha^2)} \\
 & = e^{n} \cdot \Phi^{t_0} \cdot e^{- \alpha \cdot \frac{1}{n} \cdot \sum_{t = t_0}^{t_1-1} F^t + 1.5 \cdot (t_1 - t_0) \cdot \alpha^2} \\
 & \stackrel{(a)}{\leq} e^{n} \cdot e^{\alpha m} \cdot e^{-\alpha \cdot 16 \cdot 384 \cdot 744 \cdot m + 1.5 \alpha^2 \cdot 16 \cdot 384^2 \cdot 744^2 \cdot \frac{m^2}{n}} \\
 & = e^{n} \cdot e^{\alpha m} \cdot e^{-8 \cdot n + 6 \cdot n} \\
 & \stackrel{(b)}{=} e^{n} \cdot e^{n/(2 \cdot 384 \cdot 744)} \cdot e^{-2 \cdot n} \\
 & < 1,
\end{align*}
where in $(a)$ we used that for any $t \geq 0$, $\Phi^{t} \leq e^{\alpha m}$ (the potential is maximized if all $m$ balls are in the same bin) and in $(b)$ the definition $\alpha := \frac{1}{2 \cdot 384 \cdot 744} \cdot \frac{n}{m}$. Since $\Phi^{t_1} \geq e^{\alpha \cdot 1} \geq 1$ deterministically, this implies that $\big\{ \tilde{\Phi}^{t_1} = 0 \big\}$ holds and so $\left\{ \cup_{t \in [t_0, t_1)} \mathcal{E}^t \right\}$ also holds. Hence, by the union bound,
\[
\Pro{\bigcup_{t \in [t_0, t_1]} \left\lbrace \Phi^t \leq \frac{48}{\alpha^2} \cdot n \right\rbrace} \geq 1 - e^{-n} - e^{-\Omega(n)} = 1 - e^{-\Omega(n)}.
\]
To get the bound on the maximum load, note that when the event $\left\lbrace \Phi^t \leq \frac{48}{\alpha^2} \cdot n \right\rbrace$ holds, then for any $i \in [n]$, 
\[
x_i^t \leq \frac{1}{\alpha} \cdot \log \Phi^t \leq \frac{1}{\alpha} \cdot \Big(\log 48 -2 \log \alpha + \log n\Big) \leq \kappa \cdot \frac{m}{n} \cdot \log m, 
\]
for some constant $\kappa > 0$, since $\alpha = \Theta(n/m)$ and $m \geq n$.
\end{proof}

\subsection{Maximum Load Upper Bound for \texorpdfstring{$n \leq  m \leq \poly(n)$}{n <= m <= poly(n)}} \label{sec:stabilization}

We will now show that, for any $n \leq m \leq \poly(n)$, once a configuration with $\Phi^t \leq \frac{48}{\alpha^2} \cdot n$ is reached, then \Whp~the process will re-visit such a configuration in the next $\Oh(m^2/n\cdot \log n)$ rounds. The proof is quite similar to \cref{lem:convergence}, but with intervals of shorter lengths. By a \OneChoice argument we will deduce that the maximum load in every of the in-between rounds is $\Oh(m/n \cdot \log n)$ and so the maximum load remains small for $\poly(n)$ rounds.

\begin{lem} \label{lem:returns_to_stable}
Consider the RBB process with $n \leq m \leq n^{k}$ for some constant $k \geq 1$  and the potential $\Phi := \Phi(\alpha)$ for $\alpha > 0$ as defined in \cref{lem:exponential_drop}. Further, let $c_s := 8k \cdot 16 \cdot 384^2 \cdot 744^2$. Then, for any round $t_0 \geq 0$ and for $t_1 := t_0 + c_s \cdot \frac{m^2}{n^2} \cdot \log n$, we have
\[
\Pro{\bigcup_{t \in [t_0, t_1]} \left\{ \Phi^t \leq \frac{48}{\alpha^2} \cdot n \right\} \, ~\Bigg|~ \, \mathfrak{F}^{t_0}, \Phi^{t_0} \leq e^{\alpha \log n} \cdot \frac{48}{\alpha^2} \cdot n } \geq 1 - n^{-7k}.
\]
\end{lem}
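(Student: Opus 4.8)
The plan is to replay the proof of \cref{lem:convergence} essentially verbatim, the only real difference being that the conditioning $\Phi^{t_0}\leq e^{\alpha\log n}\cdot\frac{48}{\alpha^2}\cdot n$ together with $m=\poly(n)$ and $\alpha=\Theta(n/m)<1$ makes $\Phi^{t_0}$ only \emph{polynomially} large in $n$, rather than the $e^{\Theta(n)}$ worst case used in \cref{lem:convergence}. Consequently we only need to accumulate an ``empty‑bin gain'' $\frac{\alpha}{n}\cdot F_{t_0}^{t_1}$ of order $\log n$ (with a constant depending on $k$) in the exponent of the super‑martingale, instead of order $n$; this is exactly why an interval of length $\Theta\big(\frac{m^2}{n^2}\log n\big)$, i.e.\ $\Theta(\log n)$ blocks of length $\Delta:=744\cdot\frac{m^2}{n^2}$, suffices here in place of the $\Theta(n)$ blocks used in \cref{lem:convergence}.

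\textbf{Steps.} First I would partition $[t_0,t_1]$ into $\frac{c_s}{744}\log n$ blocks of length $\Delta$, apply \cref{lem:many_empty} to each, and union bound over the $\poly(\log n)$ blocks to obtain $F_{t_0}^{t_1}\geq 8k\cdot 16\cdot 384\cdot 744\cdot m\log n$ with probability $1-e^{-\Omega(n)}$. Next I would invoke \cref{lem:phi_supermartingale}: the adjusted potential $\tilde\Phi_{t_0}$ is a super‑martingale with $\tilde\Phi_{t_0}^{t_0}=\Phi^{t_0}\leq n^{2k+1}$ for $n$ large (using $m=\poly(n)$ and $\alpha<1$), so by Markov's inequality $\tilde\Phi_{t_0}^{t_1}\leq n^{9k+2}$ holds with probability at least $1-n^{-7k-1}$. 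On the intersection of these two events, if $\mathcal{E}^t$ were to fail for every $t\in[t_0,t_1)$ then $\tilde\Phi_{t_0}^{t_1}=\Phi^{t_1}\cdot\exp\big(\sum_{t=t_0}^{t_1-1}(\alpha f^t-1.5\alpha^2)\big)$, hence
\[
\Phi^{t_1} \leq n^{9k+2}\cdot\exp\Big(-\tfrac{\alpha}{n}F_{t_0}^{t_1}+1.5(t_1-t_0)\alpha^2\Big).
\]
Substituting $\alpha=\frac{1}{2\cdot 384\cdot 744}\cdot\frac{n}{m}$, the lower bound on $F_{t_0}^{t_1}$, and the definition of $t_1$, the two exponent terms evaluate (by the choice of $c_s$) to $-\frac{\alpha}{n}F_{t_0}^{t_1}\leq -64k\log n$ and $1.5(t_1-t_0)\alpha^2=48k\log n$, so $\Phi^{t_1}\leq n^{9k+2}\cdot n^{-16k}=n^{-7k+2}<1$, contradicting $\Phi^{t_1}\geq 1$. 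Thus some $\mathcal{E}^t$ with $t\in[t_0,t_1)$ must hold, and a final union bound over the two failure events (the $e^{-\Omega(n)}$ term being absorbed, as $m=\poly(n)$) yields probability at least $1-n^{-7k}$.

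\textbf{Main obstacle.} There is no conceptual difficulty beyond \cref{lem:convergence}; the work is entirely in the constant bookkeeping of the last step, namely verifying that the empty‑bin gain $\frac{\alpha}{n}F_{t_0}^{t_1}\approx 64k\log n$ strictly dominates the combined loss from the $1.5\alpha^2$ terms ($\approx 48k\log n$), the polynomial Markov slack ($\leq 7k+2$ powers of $n$), and the polynomial size of $\Phi^{t_0}$ ($\leq 2k+1$ powers of $n$). This margin is precisely what the extra factor $8k$ in $c_s$ buys, and it is also the one place where the hypothesis $m\leq\poly(n)$ is used: without it, neither $\Phi^{t_0}$ nor $\frac{48}{\alpha^2}n$ would be polynomially bounded, and one would be forced back to the longer $\Theta(m^2/n)$ interval of \cref{lem:convergence}.
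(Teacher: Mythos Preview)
Your proposal is correct and follows essentially the same approach as the paper's proof: partition $[t_0,t_1]$ into $\Theta(\log n)$ blocks of length $\Delta=744\cdot(m/n)^2$, apply \cref{lem:many_empty} and a union bound to lower bound $F_{t_0}^{t_1}$, use Markov on the super-martingale $\tilde\Phi_{t_0}$, and derive the contradiction $\Phi^{t_1}<1$ by the same exponent computation yielding the net $-16k\log n$. The only cosmetic difference is bookkeeping: the paper uses a Markov threshold of $n^{8k}\cdot\Phi^{t_0}$ and bounds $\Phi^{t_0}\leq n^{4k}$ (giving $n^{12k}\cdot n^{-16k}=n^{-4k}$), whereas you split the slack as $n^{7k+1}$ and $n^{2k+1}$ (giving $n^{9k+2}\cdot n^{-16k}$); both close with room to spare.
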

\begin{proof}
Similarly to \cref{lem:convergence}, let $\Delta := 744 \cdot \frac{m^2}{n^2}$. By applying \cref{lem:many_empty} and a union bound over the $8k \cdot 16 \cdot 384^2 \cdot 744 \cdot \log n$ intervals of length $\Delta$, we have %
\begin{align*}
\Pro{ \left. \bigcap_{i \in [0, 8k \cdot 16 \cdot 384^2 \cdot 744 \cdot \log n)} \left\{ F_{t_0 + \Delta \cdot i}^{t_0 + \Delta \cdot (i+1) } \geq \frac{1}{384} \cdot m \right\} ~ \right\vert ~ \mathfrak{F}^{t_0}, \Phi^{t_0} \leq e^{\alpha \log n} \cdot \frac{48}{\alpha^2} \cdot n } \geq 
1-e^{-\Omega(n)},
\end{align*}
which, by aggregating, implies that,
\begin{align*}
\Pro{F_{t_0}^{t_1} \geq 8k \cdot 16 \cdot 384 \cdot 744 \cdot m \cdot \log n ~ \left\vert ~ \mathfrak{F}^{t_0}, \Phi^{t_0} \leq e^{\alpha \log n} \cdot \frac{48}{\alpha^2} \cdot n \right. } \geq 1-e^{-\Omega(n)}.
\end{align*}
By \cref{lem:phi_supermartingale},
\[
 \Ex{ \tilde{\Phi}_{t_0}^{t_1} \, \left| \, \mathfrak{F}^{t_0} \right.} \leq \tilde{\Phi}_{t_0}^{t_0} = \Phi^{t_0}.
\]
Hence, by Markov's inequality we also have that
\[
\Pro{\tilde{\Phi}_{t_0}^{t_1} \leq n^{8k} \cdot \Phi^{t_0}} \geq 1 - n^{-8k}.
\]
Assume now that $\left\lbrace F_{t_0}^{t_1} \geq 8k \cdot 16 \cdot 384 \cdot 744 \cdot m \cdot \log n \right\rbrace$ and $\left\lbrace \tilde{\Phi}_{t_0}^{t_1} \leq n^{8k} \cdot \Phi^{t_0}  \right\rbrace$ both hold. Then,
\begin{align*}
  \tilde{\Phi}_{t_0}^{t_1} &= \mathbf{1}_{\cap_{t \in [t_0,t_1)} \neg \mathcal{E}^t } \cdot   \Phi^{t_1} \cdot e^{\sum_{t = t_0}^{t_1-1} (\alpha f^t - 1.5 \alpha^2)} \leq n^{8k} \cdot \Phi^{t_0}.
\end{align*}
which implies that
\begin{align*}
 \mathbf{1}_{\cap_{t \in [t_0,t_1)} \neg \mathcal{E}^t } \cdot \Phi^{t_1} 
 & \leq n^{8k} \cdot \Phi^{t_0} \cdot e^{-\sum_{t = t_0}^{t_1-1} (\alpha f^t - 1.5 \alpha^2)} \\
 & \stackrel{(a)}{\leq} n^{8k} \cdot e^{\alpha \log n} \cdot \frac{48}{\alpha^2} \cdot n \cdot e^{- \alpha \cdot \frac{1}{n} \cdot \sum_{t = t_0}^{t_1-1} F^t + 1.5 \cdot (t_1 - t_0) \cdot \alpha^2} \\
 & \stackrel{(b)}{\leq} n^{8k + 4k} \cdot e^{-\alpha \cdot 8k \cdot 16 \cdot 384 \cdot 744 \cdot \frac{m}{n} \cdot \log n + 1.5 \alpha^2 \cdot 8k \cdot 16 \cdot 384^2 \cdot 744^2 \cdot \frac{m^2}{n^2} \cdot \log n} \\
 & \stackrel{(c)}{=} n^{12k} \cdot e^{-(8 - 6) \cdot 8k \cdot \log n} \\
 & = n^{-4k}  < 1,
\end{align*}
where in $(a)$ we used that $\Phi^{t_0} \leq e^{\alpha \log n} \cdot \frac{48}{\alpha^2} \cdot n$,
in $(b)$ that $e^{\alpha \log n} \cdot \frac{48}{\alpha^2} \cdot n \leq n \cdot m^2 \cdot n \leq n^{4k}$ by the definition $\alpha := \frac{1}{2 \cdot 384 \cdot 744} \cdot \frac{n}{m}$ and in $(c)$ again the definition of $\alpha$. Since $\Phi^{t_1} \geq 1$ deterministically, it implies that $\tilde{\Phi}^{t_1} = 0$ and so $\left\lbrace \bigcup_{t \in [t_0, t_1)} \mathcal{E}^t \right\rbrace$ holds. Hence, by the union bound, 
\[
\Pro{\bigcup_{t \in [t_0, t_1]} \left\lbrace \Phi^t \leq \frac{48}{\alpha^2} \cdot n \right\rbrace \, ~\Bigg|~ \, \mathfrak{F}^{t_0}, \Phi^{t_0} \leq e^{\alpha \log n} \cdot \frac{48}{\alpha^2} \cdot n} \geq 1 - n^{-8k} - e^{-\Omega(n)} \geq 1 - n^{-7k}. \qedhere
\]
\end{proof}

Finally, combining \cref{lem:convergence}  and \cref{lem:returns_to_stable} we can derive the following upper bound on the maximum load, which holds for $\poly(n)$ rounds.

\begin{thm}[Stabilization] \label{thm:stabilization}
Consider the RBB process with any $n \leq m \leq n^k$ for some constant $k \geq 1$. There exists a constant $C > 0$ such that, for any $t \geq c_r \cdot \frac{m^2}{n}$, where $c_r > 0$ is the constant defined in \cref{lem:convergence},
\[
\Pro{\bigcap_{s \in [t, t + m^2]} \left\lbrace \max_{i \in [n]} x_i^s \leq C \cdot \frac{m}{n} \cdot \log n \right\rbrace } \geq 1 - n^{-2k}.
\]
\end{thm}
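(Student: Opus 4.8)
The plan is to combine \cref{lem:convergence} (to reach a configuration with $\Phi \le \frac{48}{\alpha^2}n$ once, before time $t$), \cref{lem:returns_to_stable} (to keep returning to such a configuration at least once every $L := c_s\cdot\frac{m^2}{n^2}\log n$ rounds), and a direct one-choice estimate (to control the maximum load inside each window of length $L$). Here $\alpha$ is as in \cref{lem:exponential_drop}, so $\tfrac1\alpha=\Theta(m/n)$ and $\tfrac{48}{\alpha^2}n=\Theta(m^2/n)\le\poly(n)$; call a round $s$ \emph{good} if $\mathcal{E}^s$ holds (i.e.\ $\Phi^s\le\frac{48}{\alpha^2}n$), and recall that a good round satisfies $\max_i x_i^s\le\frac1\alpha\log\frac{48}{\alpha^2}n=\Oh(\tfrac mn\log n)$ since $m\le\poly(n)$.

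First, \cref{lem:convergence} applied with starting round $t_0:=t-c_r\cdot\tfrac{m^2}{n}\ge0$ yields, with probability $1-e^{-\Omega(n)}$, a good round $v_0\in[t_0,t]$. Next I claim that with probability $1-n^{-2k}$ every maximal run of non-good rounds inside $[v_0,t+m^2]$ has length at most $L$; equivalently, every $s\in[t,t+m^2]$ then admits a good round $\sigma(s)\le s$ with $s-\sigma(s)\le L$. Indeed, if $\sigma$ starts such a run then $\mathcal{E}^{\sigma-1}$ holds, and since at most $n$ balls are re-thrown in round $\sigma-1$, with probability $1-n^{-\omega(1)}$ no bin catches more than $\log n$ of them, so that $\Phi^{\sigma}=\sum_i e^{\alpha x_i^{\sigma}}\le e^{\alpha\log n}\cdot\sum_i e^{\alpha x_i^{\sigma-1}}\le e^{\alpha\log n}\cdot\tfrac{48}{\alpha^2}n$, which is exactly the hypothesis of \cref{lem:returns_to_stable} at round $\sigma$. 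That lemma then gives, with probability $1-n^{-7k}$, a good round in $(\sigma,\sigma+L]$ (it cannot be $\sigma$ itself, as $\neg\mathcal{E}^\sigma$), so the run ends by round $\sigma+L$. Since there are at most $(c_r+1)m^2\le\poly(n)$ candidate run-starts $\sigma$, a union bound over them, together with the failure probability of \cref{lem:convergence}, proves the claim.

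It remains to bound the load inside a window. Condition on the event of the previous step, fix $s\in[t,t+m^2]$ and a bin $i$, and let $\sigma:=\sigma(s)$, so $\Phi^\sigma\le\tfrac{48}{\alpha^2}n$ and $s-\sigma\le L$. Telescoping \eqref{eq:marginal} from $\sigma$ to $s$ gives $x_i^s=x_i^\sigma-\sum_{u=\sigma}^{s-1}\mathbf{1}_{x_i^u>0}+\sum_{u=\sigma}^{s-1}\Bin(\kappa^u,1/n)$. If $x_i^u>0$ throughout $[\sigma,s)$, set $\tau:=\sigma$; then $x_i^s\le x_i^\sigma+\big(\sum_{u=\tau}^{s-1}\Bin(\kappa^u,1/n)-(s-\tau)\big)$ with $x_i^\sigma=\Oh(\tfrac mn\log n)$. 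Otherwise let $\tau:=\rho\in[\sigma,s)$ be the last round with $x_i^\rho=0$; then $x_i^u>0$ for $u\in(\rho,s)$, so telescoping from $\rho$ gives $x_i^s=\sum_{u=\tau}^{s-1}\Bin(\kappa^u,1/n)-(s-1-\tau)\le 1+\big(\sum_{u=\tau}^{s-1}\Bin(\kappa^u,1/n)-(s-\tau)\big)$. In both cases $s-\tau\le L$, and since $\kappa^u\le n$ the sum $\sum_{u=\tau}^{s-1}\Bin(\kappa^u,1/n)$ is stochastically dominated by $\Bin((s-\tau)n,1/n)$, of mean $s-\tau$; a Chernoff bound gives that it exceeds $(s-\tau)+C'\sqrt{L\log n}=(s-\tau)+C'\sqrt{c_s}\cdot\tfrac mn\log n$ only with probability $n^{-\Omega(C'^2)}$. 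Union-bounding over the $\le L$ possible values of $\tau$, the $n$ bins and the $m^2$ rounds $s$, and taking $C'=C'(k)$ large enough, all these events fail with probability $\le n^{-2k}$; on their complement $\max_i x_i^s\le C\cdot\tfrac mn\log n$ for every $s\in[t,t+m^2]$, with $C=C(k)$. Adding up the three failure probabilities gives the theorem.

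The main obstacle is the middle step: \cref{lem:returns_to_stable} is informative only when $\mathcal{E}$ fails at the round where it is applied, so it must be invoked at the \emph{start} of each non-good run rather than at good rounds; this forces the one-round growth estimate $\Phi^\sigma\le e^{\alpha\log n}\Phi^{\sigma-1}$ (valid \Whp, and exactly matched to the lemma's threshold), and the observation that there are only $\poly(n)$ run-starts, so the union bound survives with exponent $-2k$ — which is where $n\le m\le\poly(n)$ is used. Once this "a good round always lies within $L$ in the past" invariant is established, the window estimate is routine, the only care being the stochastic-domination bookkeeping for the possibly-random last-emptying time $\tau$.
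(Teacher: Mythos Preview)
Your proof is correct and follows the same three-step architecture as the paper: (1) \cref{lem:convergence} to locate an initial good round before time $t$; (2) the one-step growth estimate $\Phi^{\sigma}\le e^{\alpha\log n}\Phi^{\sigma-1}$ together with \cref{lem:returns_to_stable}, iterated over what the paper calls the interlaced stopping times $s_i,r_i$ (your ``run-starts''), to guarantee a good round within every window of length $L=c_s(m/n)^2\log n$; (3) a Chernoff/\OneChoice bound to control the maximum load inside each such window, followed by a union bound over $s\in[t,t+m^2]$. The only noteworthy difference is in step~(3): the paper bounds $\max_i x_i^{t_1}\le \max_i x_i^{\tau}+(\text{\OneChoice max load over }[\tau,t_1])$ directly, whereas your per-bin telescoping with the last-emptying time explicitly cancels the $(s-\tau)$ departures against the mean arrivals so that only the $O(\sqrt{L\log n})=O((m/n)\log n)$ fluctuation remains---this is the same idea, just executed more carefully.
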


\begin{proof}
By \cref{lem:convergence} applied for the round $t-c_r \cdot \frac{m^2}{n}$,
\begin{align}
\Pro{\bigcup_{t_0 \in [t - c_r \cdot \frac{m^2}{n}, t]}  \left\lbrace \Phi^{t_0} \leq \frac{48}{\alpha^2} \cdot n \right\rbrace } \geq 1 - e^{-\Omega(n)}. \label{eq:exists_t0}
\end{align}
Consider an arbitrary round $r \geq 0$. In the next round with probability at least $1 - n^{-\omega(1)}$, each bin receives at most $\log n$ balls. So,\begin{align} \label{eq:one_step_bound}
\Pro{\Phi^{r+1} \leq e^{\alpha \log n} \cdot \Phi^r ~\left\vert ~ \mathfrak{F}^r \right.} \geq 1 - n^{-\omega(1)}.
\end{align}

We consider two cases for round $r$. 

\textbf{Case 1 [$\Phi^r \leq \frac{48}{\alpha^2} \cdot n$]:}  In this case, using \cref{eq:one_step_bound} we have
\[
\Pro{\Phi^{r+1} \leq e^{\alpha \log n} \cdot \frac{48}{\alpha^2} \cdot n  ~\left\vert \,~ \mathfrak{F}^r, \Phi^r \leq \frac{48}{\alpha^2} \cdot n \right.} \geq 1 - n^{-\omega(1)}.
\]

\textbf{Case 2 [$\frac{48}{\alpha^2} \cdot n < \Phi^r \leq e^{\alpha \log n} \cdot \frac{48}{\alpha^2} \cdot n$]:} Using \cref{lem:returns_to_stable} for $t_0 = r$,
\[
\Pro{\left. \bigcup_{s \in [r, r + c_s \cdot \frac{m^2}{n^2} \cdot \log n]} \left\{ \Phi^s \leq \frac{48}{\alpha^2} \cdot n \right\} \, ~\right|~ \, \mathfrak{F}^r, \frac{48}{\alpha^2} \cdot n < \Phi^r \leq e^{\alpha \log n} \cdot \frac{48}{\alpha^2} \cdot n } \geq 1 - n^{-7k}.
\]
Since we condition on $\frac{48}{\alpha^2} \cdot n < \Phi^r$, this is equivalent to
\begin{align} \label{eq:return_2}
\Pro{\left. \bigcup_{s \in [r+1, r + c_s \cdot \frac{m^2}{n^2} \cdot \log n]} \Phi^s \leq \frac{48}{\alpha^2} \cdot n \, ~\right|~ \, \mathfrak{F}^r, \frac{48}{\alpha^2} \cdot n < \Phi^r \leq e^{\alpha \log n} \cdot \frac{48}{\alpha^2} \cdot n } \geq 1 - n^{-7k}.
\end{align}

Let $t_0$ be an arbitrary but fixed round in $[t - c_r \cdot \frac{m^2}{n}, t]$ with $\Phi^{t_0} \leq \frac{48}{\alpha^2} \cdot n$ and $t_1 := t + m^2$.
Let $t_0 < r_1 <r_2<\cdots $ and $t_0 =: s_0<s_1<\cdots $ be two interlaced sequences (\cref{fig:green_red}) defined recursively for $i\geq 1$ by \[r_i := \min\left\lbrace r >  s_{i-1}: \frac{48}{\alpha^2} \cdot n< \Phi^r \leq e^{\alpha \log n} \cdot \frac{48}{\alpha^2} \cdot n \right\rbrace \quad\text{and} \quad s_i := \inf\left\lbrace s> r_i : \Phi^s \leq  \frac{48}{\alpha^2} \cdot n \right\rbrace. \] 
  Thus we have
  \[
  t_0 = s_0 < r_1 < s_1 < r_2 <s_2 < \cdots, 
  \]
  and since $r_i>r_{i-1}$ we have $ r_{t_1 - t_0}\geq t_1 - t_0$. 

\begin{figure}
    \centering
\scalebox{0.7}{
    \begin{tikzpicture}[
  IntersectionPoint/.style={circle, draw=black, very thick, fill=black!35!white, inner sep=0.05cm}
]

\definecolor{MyBlue}{HTML}{9DC3E6}
\definecolor{MyYellow}{HTML}{FFE699}
\definecolor{MyGreen}{HTML}{E2F0D9}
\definecolor{MyRed}{HTML}{FF9F9F}
\definecolor{MyDarkRed}{HTML}{C00000}

\node[anchor=south west] (plt) at (-0.1, 0) {\includegraphics{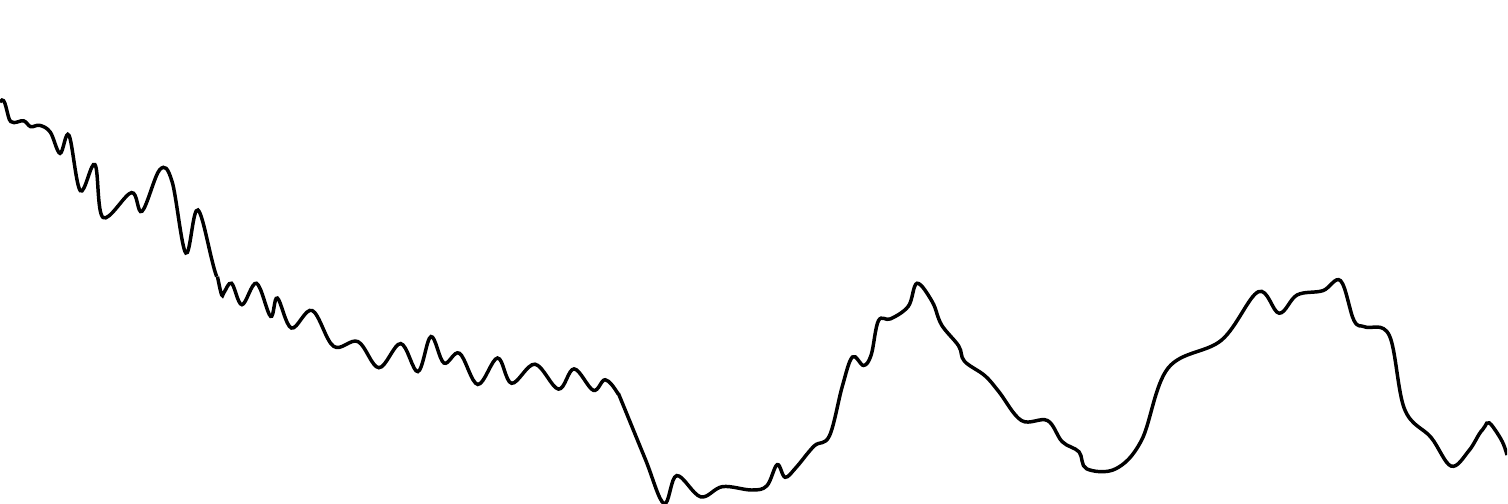}};

\def\xEnd{16}
\def\xLast{15.40}
\def\yLast{5}
\def\cn{1}
\def\yBottom{-0.8}

\node[anchor=south west, fill=MyBlue, rectangle,minimum width=6.42cm] at (0, \yLast) {Convergence phase};
\node[anchor=south west, fill=MyYellow, rectangle,minimum width=\xLast cm - 6.42cm] at (6.42, \yLast) {Stabilization phase};

\draw[dashed, thick] (0,1) -- (\xLast, 1);
\draw[dashed, thick] (0,1.8) -- (\xLast, 1.8);

\node[anchor=east] at (0, \cn) {$\frac{48}{\alpha^2} \cdot n$};
\node[anchor=east] at (0, 1.8) {$e^{\alpha \log n} \cdot \frac{48}{\alpha^2} \cdot n$};
\node[anchor=east] at (0, 4.35) {$e^{\alpha m}$};
\node[anchor=east] at (0, 5.3) {$\Phi^s$};

\node[anchor=west] at (\xEnd, \yBottom) {$s$};

\def\tA{6.42}
\def\tT{7.15}
\def\tB{8.52}
\def\tC{10.41}
\def\tD{11.71}
\def\tM{13.51}
\def\tE{14.36}

\newcommand{\drawLabel}[2]{
\draw[very thick] (#1, \yBottom) -- (#1, \yBottom -0.2);
\node[anchor=north] at (#1, \yBottom -0.15) {#2};}

\newcommand{\drawLine}[3]{
\draw[dashed, very thick, #3] (#1, \yBottom) -- (#1, \yLast);
\drawLabel{#1}{#2}}

\newcommand{\drawPoint}[3]{
\drawLine{#1}{#2}{#3}
\node[IntersectionPoint] at (#1, \cn) {};}

\draw[very thick] (0, \yBottom) -- (0, \yBottom -0.2);
\node[anchor=north] at (0, \yBottom -0.15) {$t - c_r \cdot \frac{m^2}{n}$};

\drawPoint{\tA}{$s_0$}{black!30!white};
\drawLabel{\tT}{$t$}{black!30!white};
\drawPoint{\tB}{$r_1$}{black!30!white};
\drawPoint{\tC}{$s_1$}{black!30!white};
\drawPoint{\tD}{$r_2$}{black!30!white};
\drawPoint{\tE}{$s_2$}{black!30!white};
\drawLine{\xLast}{$t + m^2$}{black!30!white};

\newcommand{\drawRegionRect}[3]{
\node[anchor=south west, rectangle, fill=#3, minimum width=#2 cm- #1 cm, minimum height=0.3cm] at (#1, \yBottom) {};}

\drawRegionRect{\tA}{\tB}{MyGreen}
\drawRegionRect{\tB}{\tC}{MyRed}
\drawRegionRect{\tC}{\tD}{MyGreen}
\drawRegionRect{\tD}{\tE}{MyRed}
\drawRegionRect{\tE}{\xLast}{MyGreen}

\newcommand{\drawBrace}[4]{
\draw [
    thick,
    decoration={
        brace,
        raise=0.5cm,
        amplitude=0.2cm
    },
    decorate
] (#2, \yBottom - 0.3) -- (#1, \yBottom - 0.3) 
node [anchor=north,yshift=-0.7cm,#4] {#3}; }

\drawBrace{0}{\tA}{Convergence by \cref{lem:convergence}}{pos=0.5};
\drawBrace{\tB}{\tC}{}{};
\drawBrace{\tD}{\tE}{Each $s_i - t_i \leq c_s \cdot \frac{m^2}{n^2} \cdot \log n$ by \cref{thm:stabilization} }{};

\draw[->, ultra thick] (0,\yBottom) -- (0, 6);
\draw[->, ultra thick] (0,\yBottom) -- (\xEnd, \yBottom);

\end{tikzpicture}}

    \caption{Visualization of the $s_i$/$r_i$ interlacing. The \Whp~existence of $s_0$ follows from \cref{eq:exists_t0}.}
    \label{fig:green_red}
\end{figure}

By \cref{eq:return_2} we have for any $i=1,2,\ldots, t_1 - t_0$ and any $r = t_0 + 1, \ldots , t_1$ 
\[
  \Pro{\left. s_i-r_i> c_s \cdot \frac{m^2}{n^2} \cdot \log n ~\right|~ \mathfrak{F}^{r},  \frac{48}{\alpha^2} \cdot n < \Phi^r \leq e^{\alpha \log n} \cdot \frac{48}{\alpha^2} \cdot n,  r_i = r } \leq n^{-7k}.
\]

Since the above bound holds for any $i$ and $\mathfrak{F}^{r}$, with $r_i=r$, it follows by the union bound over all $i=1,2,\ldots,t_1 - t_0$, as $t_1 - t_0 \leq 2m^2$,
\begin{align}
\Pro{\left. \bigcap_{i=1}^{t_1 - t_0} \left\lbrace s_i-r_i\leq c_s \cdot \frac{m^2}{n^2} \cdot \log n \right\rbrace \, \right\vert \, \mathfrak{F}^{t_0}, \Phi^{t_0} \leq \frac{48}{\alpha^2} \cdot n } & \geq 1 - n^{-7k} \cdot (2m^2) \\
& \geq 1- 2n^{-5k}. \label{eq:close_step}
\end{align}
Furthermore, taking the union bound over \cref{eq:one_step_bound},
\begin{align*}
\lefteqn{ \Pro{ \bigcap_{r=t_0}^{t_1} \left\{ \Phi^{r+1} \leq e^{\alpha \log n} \cdot \frac{48}{\alpha^2} \cdot n \cap  \Phi^{r} \leq \frac{48}{\alpha^2} \cdot n \right\} \, \left\vert \, \mathfrak{F}^{t_0}, \Phi^{t_0} \leq \frac{48}{\alpha^2} \cdot n \right.}} \\
 &\geq 1 - 
 \sum_{r=t_0}^{t_1} \Pro{ \left. \left\{ \Phi^{r+1} > e^{\alpha \log n} \cdot \frac{48}{\alpha^2} \cdot n \right\} \cap \left\{  \Phi^{r} \leq \frac{48}{\alpha^2} \cdot n \right\} \, \right\vert \, \mathfrak{F}^{t_0}, \Phi^{t_0} \leq \frac{48}{\alpha^2} \cdot n } \\
 &\geq 1 - \sum_{r=t_0}^{t_1} \Pro{ \Phi^{r+1} > e^{\alpha \log n} \cdot \frac{48}{\alpha^2} \cdot n ~\Bigg|~ \Phi^{r} \leq \frac{48}{\alpha^2} \cdot n } \cdot \Pro{\left. \Phi^{r} \leq \frac{48}{\alpha^2} \cdot n \, \right\vert \, \mathfrak{F}^{t_0}, \Phi^{t_0} \leq \frac{48}{\alpha^2} \cdot n }
 \\ &\geq 1 - (t_1-t_0) \cdot n^{-\omega(1)} \geq 1 - n^{-\omega(1)}.
\end{align*}
Conditioning on the above event occurring, we know that during any interval $[s_i,\min\{r_{i+1},t_1\}]$ the potential is small, i.e., for any $i \geq 1$,
\begin{align}
 \Pro{ \left. \bigcap_{r \in [s_i,\min\{r_{i+1},t_1\}]} \left\{ \Phi^{r} \leq \frac{48}{\alpha^2} \cdot n \right\} \,\,\right\vert\,\, \mathfrak{F}^{t_0}, \Phi^{t_0} \leq \frac{48}{\alpha^2} \cdot n } \geq 1 -n^{-\omega(1)}. \label{eq:take_two}
\end{align}
Combining \cref{eq:close_step} and \cref{eq:take_two}, it follows that $\Phi^{r} \leq e^{\alpha \log n} \cdot \frac{48}{\alpha^2} \cdot n$ holds at least every $c_s \cdot \frac{m^2}{n^2} \cdot \log n$ rounds, i.e.,
\begin{align}
 \Pro{ \left. \bigcap_{r \in [t_0,t_1-c_s \cdot \frac{m^2}{n^2} \cdot \log n ]} 
 \bigcup_{\rho \in [0,c_s \cdot \frac{m^2}{n^2} \cdot \log n ]}
 \left\{ \Phi^{r+\rho} \leq e^{\alpha \log n} \cdot \frac{48}{\alpha^2} \cdot n \right\} \,\,\right\vert\,\, \mathfrak{F}^{t_0}, \Phi^{t_0} \leq \frac{48}{\alpha^2} \cdot n} \geq 1 -3n^{-5k}. \label{eq:take_three}
\end{align}

Define $\tau := \min \{ u \geq t_1 - c_s \cdot \frac{m^2}{n^2} \cdot \log n \colon \Phi^{u} \leq e^{\alpha \log n} \cdot \frac{48}{\alpha^2} \cdot n \}$. Then by \cref{eq:take_three},
\begin{align}
\Pro{\tau \leq t_1 \, \left\vert \, \mathfrak{F}^{t_0}, \Phi^{t_0} \leq \frac{48}{\alpha^2} \cdot n \right. } \geq 1 - 3n^{-5k}. \notag%
\end{align}
Further, let $\tilde{t}_0 := \inf\{ \rho \geq t - c_r \cdot \frac{m^2}{n} : \Phi^{\rho} \leq \frac{48}{\alpha^2} \cdot n \}$. Then,
\begin{align}
\Pro{\tau \leq t_1} & = \sum_{t_0 = t - c_r \cdot \frac{m^2}{n}}^t \Pro{\tau \leq t_1 \, \left\vert \, \mathfrak{F}^{t_0}, \Phi^{t_0} \leq \frac{48}{\alpha^2} \cdot n, \tilde{t}_0 = t_0 \right. } \cdot \Pro{\tilde{t}_0 = t_0} \notag \\
 & \geq (1 - 3n^{-5k}) \cdot \sum_{t_0 = t - c_r \cdot \frac{m^2}{n}}^t \Pro{\tilde{t}_0 = t_0} \notag \\
 &\stackrel{(a)}{\geq} (1 - 3n^{-5k}) \cdot (1 - e^{-\Omega(n)}) \notag \\
 & \geq 1 - 4n^{-5k}, \label{eq:take_four}
\end{align}
where in $(a)$ we used \cref{eq:exists_t0}.

In the round $\tau$, which by definition satisfies $\Phi^{\tau} \leq e^{\alpha \log n} \cdot \frac{48}{\alpha^2}$, it follows that the maximum load is at most
\[
\max_{i \in [n]}x_i^{\tau} \leq \frac{1}{\alpha} \cdot \log\Big( e^{\alpha \log n} \cdot \frac{48}{\alpha^2} \cdot n \Big) \leq \tilde{c}_1 \cdot \frac{m}{n} \cdot \log n,
\]
for some constant $\tilde{c}_1 > 0$, since $m \leq n^k$. In the interval $[\tau,t_1]$, we will re-allocate at most $\tilde{m}:= c_s \cdot \frac{m^2}{n} \cdot \log n$ balls. Each of these balls will be re-allocated to a bin chosen uniformly and independently at random from $[n]$. By a Chernoff bound, with probability at least $1-n^{-10k}$, the maximum load when allocating $\tilde{m}$ balls into $n$ bins is at most
\begin{align}
 \frac{\tilde{m}}{n} + 20 \cdot k \cdot \sqrt{\frac{\tilde{m}}{n} \cdot \log n }
 \leq c_s \cdot \frac{m^2}{n^2} \cdot \log n + 20 \cdot k \cdot \sqrt{ c_s \cdot \frac{ m^2}{n^2} \cdot \log^2 n } \leq \tilde{c_2} \cdot \frac{m}{n} \cdot \log n. \label{eq:take_five}
\end{align}
Since in this case, $
 \max_{i \in [n]} x_i^{t_1} \leq \max_{i \in [n]} x_i^{\tau} + \tilde{c_2} \cdot \frac{m}{n} \cdot \log n$, the union bound over \cref{eq:take_four} and \cref{eq:take_five} yields
\begin{align*}
\Pro{\max_{i \in [n]} x_i^{t_1} \leq (\tilde{c}_1 + \tilde{c}_2) \cdot \frac{m}{n} \cdot \log n} \geq 1 - 4n^{-5k} - n^{-10k} \geq 1 - 5n^{-5k}.
\end{align*}
Finally, taking the union bound over all possible rounds $t_1 \in [t,t+m^2]$ yields the statement of the theorem.
\end{proof}

\section{The Multi-Token Traversal Time} \label{sec:traversal}

As mentioned in~\cite{BCNPP19}, it is natural to regard the RBB process as a multi-token traversal problem, in which each ball should visit all bins as frequently as possible. This can be seen as a ``cover time'' of parallel and dependent random walks, which is the first time until each ball has been allocated at least once to every bin. In~\cite[Corollary~1]{BCNPP19}, a \Whp~bound of $\Oh(n \log^2 n)$ on this quantity was established  (it was also shown that this bound holds even in an adversarial setting, where an adversary is able to re-allocate all tokens arbitrarily every $\Oh(n)$ rounds). For the original setting without the adversary, we give the following improvement:

\begin{pro}\label{pro:traversal}
Consider the RBB with any $m \geq n$. Then, with probability $1-m^{-2}$, each of the $m$ balls traverses all $n$ bins within $28 m \cdot \log m$ rounds. Furthermore, any fixed ball needs with probability at least $1-o(1)$ at least $1/16 \cdot m \cdot \log n$ rounds until all $n$ bins are traversed.
\end{pro}
Note that for $m=\poly(n)$, the upper and lower bounds match each other up to constant factors.
\begin{proof}
For the upper bound, we assume that balls are inserted into bins using the FIFO scheme, with ties among the balls being inserted into the same bin in the same round broken arbitrarily. 

Fix a ball, w.l.o.g., ball $1$, and let us denote by $1 \leq t_1 < t_2 < \ldots < t_{\tau(1)}$ the sequence of time-steps in which the ball is re-allocated to another bin, and let $\tau(1) \geq n-1$ be the number of re-allocations (switches) until ball $1$ has been in all bins. Recall that the random variable $\tau(1)$ is also known as the coupon collector problem~\cite{MU17}, and we can use the following standard bound:
\begin{align}
  \Pro{ \tau(1) \geq 4 n \cdot \log m } \leq 
  n \cdot \left(1 - \frac{1}{n} \right)^{4 n \cdot \log m}
 \leq  m^{-3}, \label{eq:coupon}
\end{align}
where in the last inequality we used $m \geq n$.
Every time-step $t_i$, ball $1$ switches to its $i$-th bin, it will be in the same bin as some other ball $j \neq 1$ with probability $\frac{1}{n}$. This other ball $j$ (if it is not reallocated), will be in front of ball $i$ with probability equal $\frac{1}{n}$; if $j$ is also reallocated in round $t_i$, then this probability is at most $\frac{1}{n}$. Thus the total delay that ball $j$ causes to bin $1$ before $4 n \cdot \log m+1$ switches can be bounded from above by
\[
  X(j) := \sum_{k=1}^{4n \cdot \log m} X_k,
\]
where the $X_k \in \{0,1\}$ are independent Bernoulli random variables with parameter $1/n$. Hence $\Ex{X(j)} = 4 \log m$. Using a Chernoff Bound for binomial random variables,~\cite[Theorem~4.4]{MU17}, for any $R \geq 6 \cdot \Ex{X(j)}$,
\[
 \Pro{ X(j) \geq R } \leq 2^{-R},
\]
and thus with $R=24 \log m$,
\[
 \Pro{ X(j) \geq 24 \log m} \leq m^{-10}.
\]
Taking the union bound over all other $m-1$ balls $j \in [m] \setminus \{1\}$,
\[
 \Pro{ \bigcup_{j \in [m] \setminus \{1\}} \left\{ X(j) \geq 24 \log m \right\} }\leq m^{-9}.
\]
Let $X:=\sum_{j \in [m] \setminus \{1\}} X(j)$, which is the total delay caused by other balls until ball $1$ achieves  $4 n \cdot \log m$ switches. By the above,
\begin{align}
\Pro{ X \geq 24 m \cdot \log m  }\leq m^{-9}. \label{eq:total_wait}
\end{align}
Combining \cref{eq:coupon} and \cref{eq:total_wait}, we conclude by the union bound for ball $1$,
\begin{align*}
\lefteqn{  \Pro{ t_{\tau(1)} \leq 24 m \cdot \log m + 4 n \cdot \log m  } } \\
 &\geq \Pro{ \left\{ \tau(1) \geq 4n \cdot \log m \right\} \cap \left\{ X \leq 24 m \cdot \log m \right\} } \\
 &\geq 1 - m^{-3} - m^{-9} \geq 1 - m^{-2}.
\end{align*}
The upper bound now follows by another union bound over all balls $i \in [n]$.

We now turn to the lower bound. Again, we assume that balls are allocated using the FIFO scheme, but now we also assume that ties among the balls reaching the same bin in the same round are broken randomly.

Again, first consider ball $1$ w.l.o.g. We may assume that $t_1=1$, meaning that ball $1$ switches in the very first round. For the coupon collector problem, one can derive the following lower bound based on Chebyshev's inequality~\cite[Chapter~3]{MU17},
\begin{align}
  \Pro{ \tau(1) \leq 1/2 \cdot n \cdot \log n } \leq \frac{1}{\log n}. \label{eq:coupon_two}
\end{align}
Following the arguments in the upper bound, the total delay any fixed ball $j \neq 1$ causes to bin $1$ before $1/2 \cdot n \cdot \log n$ switches, can be bounded from below by
\[
 Y(j) := \sum_{k=1}^{1/2 \cdot n \cdot \log n} Y_k,
\]
where the $Y_k \in \{0,1\}$ are independent Bernoulli random variables with parameter $1/(2n)$ (in the pessimistic case for the lower bound, ball $j$ and $1$ are always re-allocated in the same round). Hence $\Ex{Y(j)} = 1/4 \cdot \log n$. 
Applying the Chernoff bound~\cite[Theorem 4.5]{MU17},
\[
 \Pro{ Y(j) \leq (1-\delta) \cdot \Ex{Y(j)} } \leq \exp\left( -\delta^2/ 2 \cdot \Ex{Y(j)} \right),
\]
with $\delta=1/2$, we obtain
\[
 \Pro{ Y(j) \leq 1/8 \cdot \log n} \leq n^{-1/32}.
\]
Next define $\mathcal{I}:=\left\{ j \in [m] \setminus \{1\} \colon Y(j) \leq 1/8 \cdot \log n \right\}$. Then $\Ex{ | \mathcal{I } | } \leq m \cdot n^{-1/32}$, and by Markov's inequality,
\[
 \Pro{ | \mathcal{I} | \geq m/2 } \leq n^{-1/33}.
\]

Let $Y:=\sum_{j \in [m] \setminus \{1\}} Y(j)$ be the delay caused by all other $m-1$ balls before ball $1$ makes $1/2 \cdot n \cdot \log n+1$ switches. Then by the above,
\begin{align}
 \Pro{ Y \leq m/2 \cdot 1/8 \cdot \log n } \leq n^{-1/33} \label{eq:total_wait_two}.
\end{align}
By the union bound of \cref{eq:coupon} and \cref{eq:total_wait_two},
\[
 \Pro{ t_{\tau(1)} \geq 1/16 \cdot m \cdot \log n} \geq 1 - \frac{1}{\log n} - n^{-1/33} \geq 1-o(1). \qedhere
\]
\end{proof}

\section{Experiments} \label{sec:experiments}

We complement our analysis with some experimental results (\cref{fig:avg_gap}, \cref{fig:fraction_non_empty} and \cref{fig:convergence_time}).

In \cref{fig:avg_gap}, we plot the maximum load vs the average number of balls for $n \in \{ 10^2, 10^3, 10^4 \}$ and $m \in \{ n, 2n, \ldots 50n \}$ after $10^6$ rounds starting with the uniform distribution. The trend seems to be linear in $m/n$ as $m$ grows, which is in line with the $\Theta(m/n \cdot \log n)$ bound on the maximum load shown by our theoretical analysis in \cref{lem:lower_bound} and \cref{thm:stabilization}. In \cref{fig:fraction_non_empty}, we plot the fraction of empty bins vs the average number of balls for $n \in \{ 10^2, 10^3, 10^4 \}$ and $m \in \{ n, 2n, \ldots, 50n \}$ averaged over $10^6$ rounds, starting from the uniform load vector. The trend supports that the fraction is $\Theta(n/m)$ in steady state, as proven in \cref{lem:quadratic_eps,lem:many_empty}.

\begin{figure}[H]
    \centering
    \includegraphics[scale=0.9]{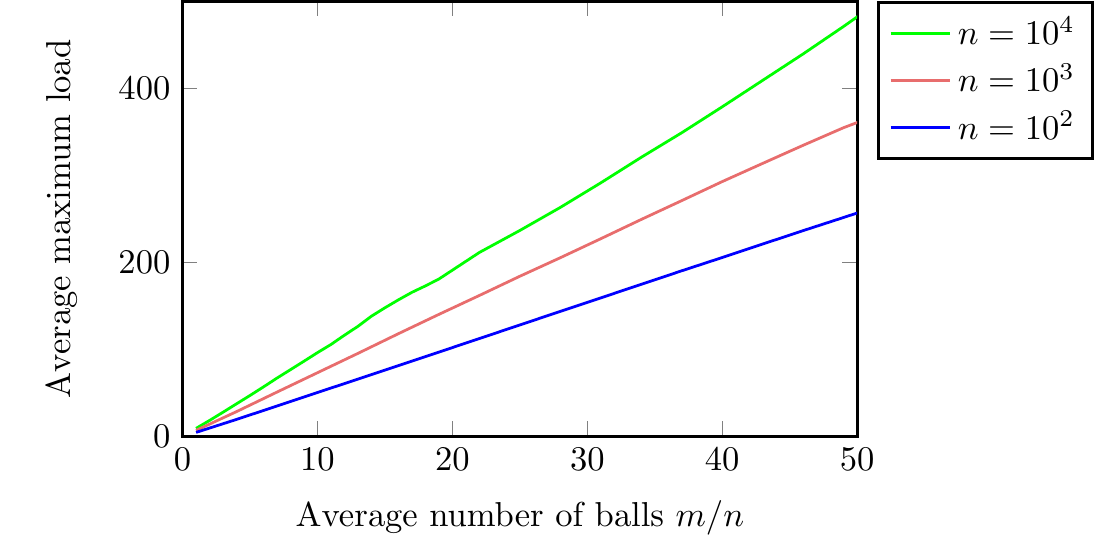}
    \caption{Maximum load for $n \in \{ 10^2, 10^3, 10^4 \}$ and $m \in \{ n, 2n, \ldots 50n \}$ after $10^6$ rounds, starting from the uniform load vector (averaged over $25$ runs). These plots support the $\Theta(m/n \cdot \log n)$ maximum load, as suggested by our theoretical analysis (\cref{lem:lower_bound} and \cref{thm:stabilization}).}
    \label{fig:avg_gap}
\end{figure}

\begin{figure}[H]
    \centering
    \includegraphics[scale=0.9]{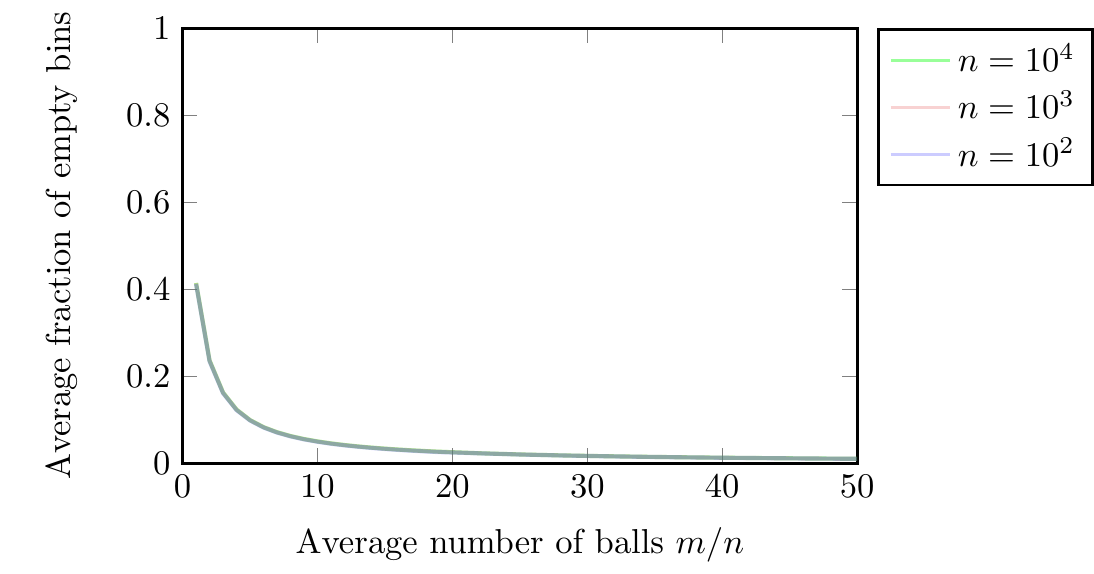}
    \caption{Fraction of empty bins vs the average load for $n \in \{ 10^2, 10^3, 10^4 \}$ and $m \in \{ n, 2n, \ldots , 50n \}$ averaged over $10^6$ rounds, starting from the uniform load vector (averaged over $25$ runs). Note that for all values of $n$, the curves are very close to one another.}
    \label{fig:fraction_non_empty}
\end{figure}

\begin{figure}[H]
    \centering
    \includegraphics[scale=0.9]{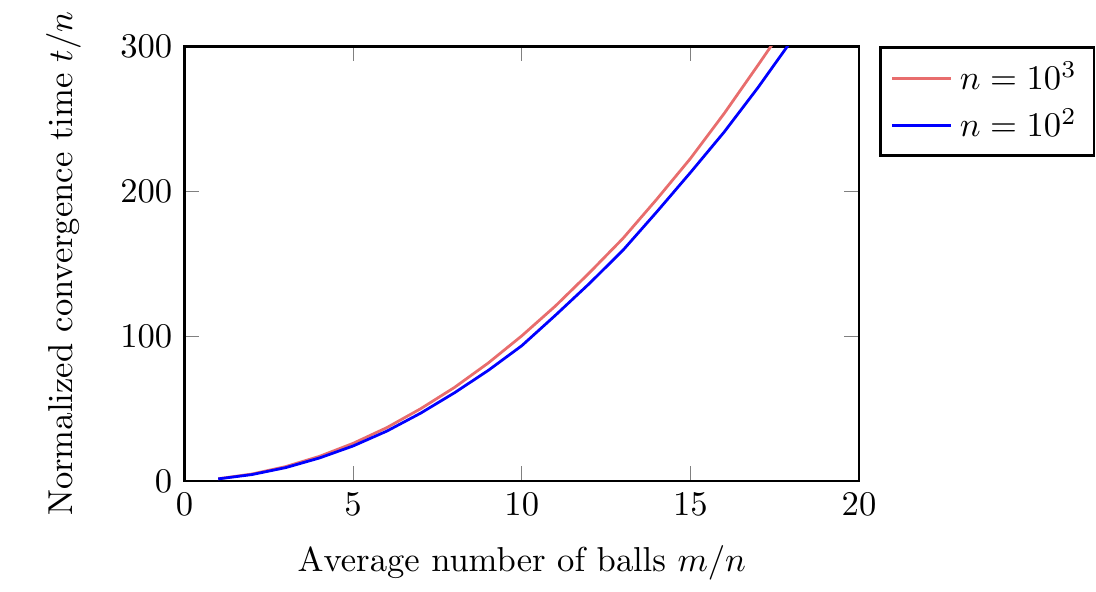}
    \caption{Convergence time for $n \in \{ 10^2, 10^3 \}$ to reach a configuration with maximum load $\leq 1.5 \cdot \frac{m}{n} \cdot \log n$, starting from the $(m, 0, \ldots, 0)$ load vector (averaged over $100$ runs). These suggest that the convergence time is $\Omega(m^2/n)$, which complements our theoretical bound of $\Oh(m^2/n)$ (\cref{lem:convergence}).}
    \label{fig:convergence_time}
\end{figure}

\section{Conclusions}\label{sec:conclusions}

We revisited the RBB process and proved that for any $m \geq n$ \Whp~after $\Oh(m^2/n)$ rounds it achieves an $\Oh(m/n \cdot \log m)$ maximum load. For $n \leq m \leq \poly(n)$ we show that it stabilizes in a configuration with an $\Oh(m/n \cdot \log n)$ maximum load, for at least $m^2$ rounds and also prove a lower bound matching up to multiplicative constants. This resolved two conjectures in~\cite{BCNPP19}. We also obtained an upper bound of $\Oh(m \cdot \log m)$ on the traversal time for the balls, which was shown to be tight for any $m=\poly(n)$.

There are several possible extensions, such as generalizing the stabilization result for $m = n^{\omega(1)}$, determining whether the $\Oh(m^2/n)$ convergence time is tight for $m = \omega(n)$ and determining tight bounds for the maximum load when $m < n$. 

Finally, as mentioned in~\cite{BCNPP19}, an interesting but also challenging generalization is the RBB process on graphs. We hope that at least some of our arguments could be leveraged, for example, the insight in \cref{lem:many_empty} that many bins become empty within $\Oh((m/n)^2)$ rounds might extend to graphs.

\bibliographystyle{ACM-Reference-Format-CAM}
\bibliography{bibliography}

\clearpage

\appendix

\section{Tools}

\subsection{Facts about the \OneChoice Process}

In this section, we prove several basic facts about the \OneChoice process. We start by upper bounding the quadratic potential. 

\begin{lem} \label{lem:one_choice_upsilon}
For the \OneChoice process for $n$ balls into $n$ bins, for any $t \geq 0$ \[
\Pro{\Upsilon^t \leq 3 \cdot n} \geq 1 - n^{-\omega(1)}.
\]
\end{lem}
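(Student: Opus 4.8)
The plan is to show that $\Upsilon^t$ is sharply concentrated around its expectation, which is at most $2n$ for every $t$ (recall the process has only $n$ balls, so $0 \leq t \leq n$).

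First I would compute the mean. In the \OneChoice process the load of bin $i$ after $t$ balls satisfies $y_i^t \sim \Bin(t,1/n)$, so $\Ex{(y_i^t)^2} = \Var{y_i^t} + (\Ex{y_i^t})^2 = t \cdot \tfrac1n\bigl(1 - \tfrac1n\bigr) + \tfrac{t^2}{n^2}$, and summing over the $n$ bins gives $\Ex{\Upsilon^t} = t\bigl(1 - \tfrac1n\bigr) + \tfrac{t^2}{n} \leq t + \tfrac{t^2}{n} \leq 2n$, using $t \leq n$. Hence it suffices to show that $\Upsilon^t$ exceeds its mean by more than $n$ only with probability $n^{-\omega(1)}$.

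Second I would view $\Upsilon^t$ as a function of the $t \leq n$ independent uniform bin-choices of the balls. Moving a single ball from a bin of load $y_a \geq 1$ to a bin of load $y_b$ changes $\Upsilon^t$ by $2 + 2(y_b - y_a)$, which is at most $2\max_i y_i^t + 2$ in absolute value. A standard first-moment bound shows $\max_i y_i^t \leq c\log n$ for a suitable constant $c$ with probability $1 - n^{-\omega(1)}$, since $\Pro{\max_i y_i^t > c\log n} \leq n \cdot \Pro{\Bin(n,1/n) > c\log n} \leq n \cdot \bigl(e/(c\log n)\bigr)^{c\log n} = n^{-\omega(1)}$. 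So, conditioned on the good event $\mathcal{G} := \{\max_i y_i^t \leq c\log n\}$, the function $\Upsilon^t$ has bounded differences $D = \Oh(\log n)$ in each of its $t \leq n$ coordinates. Applying the method of bounded differences in its bad-event form (\cref{thm:mobd}, with the bad event $\neg\mathcal{G}$ treated exactly as in the proof of \cref{lem:quadratic_eps}) with deviation $\lambda := n$ gives
\[
\Pro{\Upsilon^t > \Ex{\Upsilon^t} + n} \;\leq\; \exp\!\Bigl(-\tfrac{n^2}{2\,t\,D^2}\Bigr) + \Pro{\neg\mathcal{G}} \;\leq\; \exp\!\bigl(-\Omega(n/\log^2 n)\bigr) + n^{-\omega(1)} \;=\; n^{-\omega(1)},
\]
and combining with $\Ex{\Upsilon^t} \leq 2n$ yields $\Pro{\Upsilon^t \leq 3n} \geq 1 - n^{-\omega(1)}$.

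The main obstacle is purely technical: the per-coordinate Lipschitz constant of $\Upsilon^t$ is not an absolute constant but grows with the maximum load, so a naive McDiarmid bound would be worthless without first controlling the maximum load. The fix is the standard device of restricting to the high-probability event where the maximum load is $\Oh(\log n)$ and paying the negligible probability of its complement — the same mechanism already used in \cref{lem:quadratic_eps}. (One could instead write $\Upsilon^t = t + 2\sum_{j<k}\mathbf{1}_{\{\text{balls }j,k\text{ in the same bin}\}}$ and invoke a concentration inequality for sums of pairwise-dependent indicators, but the bounded-differences route is the most direct.)
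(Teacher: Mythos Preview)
Your proposal is correct and follows essentially the same approach as the paper: compute $\Ex{\Upsilon^t}\leq 2n$, control the per-coordinate Lipschitz constant via the $\Oh(\log n)$ maximum-load bound, and apply bounded differences with deviation $\lambda=n$. The only (cosmetic) difference is that the paper truncates the potential to $\tilde{\Upsilon}^t:=\sum_i \min\{(y_i^t)^2,\log^2 n\}$ so that the Lipschitz bound $2\log n$ holds unconditionally and plain McDiarmid (\cref{thm:mobd}) applies, whereas you keep $\Upsilon^t$ and invoke the bad-event variant---note that the result you want to cite for that is \cref{thm:chu_lu_thm_8_3}, not \cref{thm:mobd}.
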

\begin{proof}
Recall
\[
 \Upsilon^t := \sum_{i=1}^n \Upsilon_i^t = \sum_{i=1}^n (x_i^t)^2.
\]
Since $x_i^t$ has distribution $\Bin(t,1/n)$ for any $t \geq 0$, $\Ex{ (x_i^n)^2 } 
= (1-1/n) + 1 = 2 - \frac{1}{n}$.
Hence by linearity of expectations,
\[
\Ex{ \Upsilon^n } \leq 2n.
\]
Define 
\[
 \tilde{\Upsilon}^n := \sum_{i=1}^n \min\{ \Upsilon_i^n, \log^2 n\},
\]
and note that $\Upsilon^n = \tilde{\Upsilon}^n$ if and only if the maximum load is at most $\log n$.
Then, 
\[
\Ex{\tilde{ \Upsilon }^n }  \leq \Ex{ \Upsilon^n } \leq 2n.
\]
Further, $\tilde{\Upsilon}^n$ is a function of $n$ independent random variables (the random bin choices of the $n$ balls), and changing one of these choices can change $\tilde{\Upsilon}^n$ by at most $(\log n)^2 - (\log n - 1)^2 \leq 2 \log n$. Hence by the Method of Bounded Differences (\cref{thm:mobd}),
\begin{align*}
 \Pro{ \tilde{ \Upsilon }^n  - \Ex{\tilde{ \Upsilon }^n } \geq \lambda } \leq \exp\left( - \frac{ \lambda^2}{2 \sum_{i=1}^n 4 (\log n)^2 } \right),
\end{align*}
and choosing $\lambda = n$ yields,
\begin{align*}
\Pro{ \tilde{ \Upsilon }^n \geq 3n} \leq \Pro{ \tilde{ \Upsilon }^n \geq \Ex{\tilde{ \Upsilon }^n }  + n} \leq n^{-\omega(1)}.
\end{align*}
Further, since the maximum load is larger than $\log n$ with probability $1-n^{-\omega(1)}$, we have by the union bound
\[
\Pro{ \Upsilon^n \geq 3n} 
 \leq \Pro{ \left\{\tilde{ \Upsilon }^n \geq 3n \right\} \cup \left\{ \max_{i \in [n]} x_i^{n} > \log n  \right\} }  
\leq n^{-\omega(1)} + n^{-\omega(1)} = 2 n ^{-\omega(1)}.
\qedhere
\]
\end{proof}

The next standard result was also used in~\cite[Section~4]{PTW15} and is based on~\cite{RS98}. For convenience of the reader, we give a self-contained proof, obtaining high probability bounds.

\begin{lem} \label{lem:one_choice_cnlogn}
\OneChoiceCnlogn
\end{lem}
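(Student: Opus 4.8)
The plan is the standard second-moment-style lower bound for the maximum load of \OneChoice. Put $\mu := \Ex{y_1^m} = c\log n$ and $k := \lceil (c + \sqrt{c}/10)\log n \rceil$, and let $p := \Pro{y_1^m \ge k}$; recall $y_i^m \sim \Bin(m,1/n)$. Since the load vector $(y_1^m,\dots,y_n^m)$ is negatively associated (a standard fact for balls-into-bins), the decreasing events $\{y_i^m < k\}$ obey a product upper bound, so
\[
 \Pro{\max_{i\in[n]} y_i^m < k} = \Pro{\bigcap_{i\in[n]} \{ y_i^m < k \}} \le \prod_{i\in[n]} \Pro{y_i^m < k} = (1-p)^n \le e^{-pn}
\]
(the same inequality also follows by Poissonisation, replacing the loads by $n$ independent $\mathrm{Poisson}(\mu)$ variables at the cost of a factor $e\sqrt{m}$, cf.~\cite{MU17}). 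Hence it suffices to prove $p \ge 2(\ln n)/n$, which gives $e^{-pn} \le n^{-2}$.

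The heart of the argument is then the binomial anti-concentration bound $p \ge n^{-1+\Omega(1)}$. Since $c \ge 1/\log n$ we have $\mu \ge 1$, and since the excess $\tfrac{\sqrt c}{10}\log n \to \infty$ we get $k\to\infty$, so Stirling's formula applies. The binomial pmf is unimodal with mode $\approx \mu$ and $k>\mu$, so summing the $\Theta(\sqrt{\mu})$ consecutive values $\Pro{y_1^m = k}, \Pro{y_1^m = k+1}, \dots$, each at least the last one, gives $p \ge \Theta(\sqrt{\mu})\cdot\Pro{y_1^m = \ell}$ with $\ell := k + \Theta(\sqrt\mu)$. Plugging in $\binom{m}{\ell}\ge \tfrac13\,\tfrac{\sqrt m}{\sqrt{\ell(m-\ell)}}\,\tfrac{m^m}{\ell^\ell(m-\ell)^{m-\ell}}$, using $\ell \ll m$ and $(1-1/n)^{m-\ell}\ge e^{-(m-\ell)/(n-1)}$, a short computation collapses this to
\[
 p \;\ge\; (\log n)^{-O(1)} \cdot \exp\!\big( -\mu\cdot\psi(\ell/\mu - 1) \big), \qquad \psi(x) := (1+x)\ln(1+x) - x,
\]
where $\psi$ is the Poisson large-deviation rate. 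Finally, $\psi(x)\le x^2/2$ for $x\ge0$ together with $\ell/\mu - 1 = O(1/\sqrt c) + O(1/\sqrt\mu)$ gives $\mu\,\psi(\ell/\mu - 1) \le \tfrac{\log n}{200}(1+o(1))$, so $p \ge (\log n)^{-O(1)}\, n^{-1/50}$, which yields $pn \ge 2\ln n$ for all large $n$, as required.

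The step I expect to be the main obstacle is this binomial estimate, which must be carried through uniformly over the whole range $c \in [1/\log n,\poly(n)]$. When $\mu = c\log n$ is merely a constant, the Gaussian/local-CLT form of anti-concentration is unavailable, so one is forced into the Poisson-rate form above; conversely, when $c$ is large one cannot afford the crude bound $p \ge \Pro{y_1^m=k}$ (whose prefactor $1/\sqrt{k} = 1/\sqrt{c\log n}$ is too small), which is exactly why one sums $\Theta(\sqrt\mu)$ pmf values to recover a $(\log n)^{-O(1)}$ prefactor; one also has to track the spurious $+1$ coming from the ceiling in the definition of $k$. The reason everything goes through is that the deviation above the mean is only $\Theta(\sqrt{\log n})$ standard deviations, so the rate $\mu\,\psi(\ell/\mu-1)$ is only about $\tfrac{\log n}{200}\ll\log n$, leaving the tail safely above $n^{-1}$.
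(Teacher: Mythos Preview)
Your approach is correct and essentially the same as the paper's: both lower bound the single-bin upper tail via Stirling, sum $\Theta(\sqrt{\mu})$ consecutive pmf values to absorb the $1/\sqrt{\mu}$ prefactor, and then pass to the maximum via a product-type bound. The only cosmetic differences are that the paper Poissonizes (using independence of the Poisson loads and then de-Poissonizing via the monotone-event factor-$2$ correction from~\cite{MU17}) whereas you work directly with the binomials through negative association, and the paper computes the Poisson pmf by hand whereas you package the exponent into the rate function $\psi(x)=(1+x)\ln(1+x)-x$ together with $\psi(x)\le x^2/2$; both routes yield the same $n^{-\Theta(1)}$ single-bin tail uniformly over $c\ge 1/\log n$.
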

\begin{proof}
In order to use the Poisson Approximation~\cite[Chapter 5]{MU17}, 
let $Y_1,Y_2,\ldots,Y_n$ be $n$ independent Poisson random variables with parameter $\lambda=\frac{m}{n}=c \log n$. Then,
\begin{align*}
 \Pro{ Y_i \geq \lambda + \frac{\sqrt{c}}{10} \cdot \log n} &\geq  \Pro{ Y_i = \lambda + \frac{\sqrt{c}}{10} \cdot \log n} \\ 
 &= e^{-\lambda} \cdot \frac{ \lambda^{\lambda + \frac{\sqrt{c}}{10} \cdot \log n}}{ (\lambda + \frac{\sqrt{c}}{10} \cdot \log n)!}.
\end{align*}
Using that $z! \leq \sqrt{2 \pi z} \left( \frac{z}{e} \right)^{z} e^{\frac{1}{12z}}$ for any integer $z \geq 1$,
\begin{align*}
 \Pro{ Y_i = \lambda + \frac{\sqrt{c}}{10} \cdot \log n} &\geq \frac{1}{4 \cdot \sqrt{2 \pi \lambda}} \cdot e^{-\lambda} \cdot 
\left( \frac{e \lambda }{ \lambda + \frac{\sqrt{c}}{10} \cdot \log n} \right)^{ \lambda + \frac{\sqrt{c}}{10} \cdot \log n} \\
&\geq \frac{1}{4 \cdot \sqrt{2 \pi \lambda}} \cdot e^{\frac{\sqrt{c}}{10} \log n} \cdot \left( 1 + \frac{1}{10 \sqrt{c}} \right)^{-\lambda - \frac{\sqrt{c}}{10} \cdot \log n} \\
&\geq \frac{1}{4 \cdot \sqrt{2 \pi \lambda}} \cdot e^{\frac{\sqrt{c}}{10} \log n} \cdot e^{-\frac{1}{10 \sqrt{c}} \cdot (\lambda +\frac{\sqrt{c}}{10} \cdot \log n)} \\
&\geq \frac{1}{4 \cdot \sqrt{2 \pi \lambda}} \cdot e^{\frac{\sqrt{c}}{10} \log n-\frac{1}{10 \sqrt{c}} \lambda-\frac{1}{100} \log n} \\
&\geq \frac{1}{4 \cdot \sqrt{2 \pi \lambda}} \cdot e^{-\frac{1}{100} \log n} 
 \end{align*}
 Since for any $k \geq 0$,
 \begin{align*}
 \frac{\Pro{ Y_i = k+1}}{ \Pro{ Y_i = k} } &= \frac{\lambda}{k+1},
 \end{align*}
 we conclude that
 \begin{align*}
 \Pro{ Y_i \geq \lambda + \frac{\sqrt{c}}{10} \cdot \log n} &\geq \sum_{k=0}^{\sqrt{\lambda}-1} \Pro{ Y_i = \lambda + \frac{\sqrt{c}}{10} \cdot \log n + k} \\
 &\geq \sqrt{\lambda} \cdot \Pro{ Y_i = \lambda + \frac{\sqrt{c}}{10} \cdot \log n + \sqrt{\lambda}} \\
 &\geq  \sqrt{\lambda} \cdot \Pro{ Y_i = \lambda + \frac{\sqrt{c}}{10} \cdot \log n} \cdot \prod_{k=1}^{\sqrt{\lambda}} \left( \frac{\lambda}{\lambda+\frac{\sqrt{c}}{10} \cdot \log n+k} \right) \\
  &\geq  \sqrt{\lambda} \cdot \frac{1}{4 \cdot \sqrt{2 \pi \lambda}} \cdot e^{-\frac{1}{100} \log n} \cdot \left( \frac{\lambda}{\lambda+\frac{\sqrt{c}}{10} \cdot \log n+\sqrt{\lambda}} \right)^{\sqrt{\lambda}} \\
  &\geq  \sqrt{\lambda} \cdot \frac{1}{4 \cdot \sqrt{2 \pi \lambda}} \cdot e^{-\frac{1}{100} \log n} \cdot \left( 1 + \frac{1}{5 \sqrt{c}}  \right)^{-\sqrt{\lambda}} \\
 &\geq  \sqrt{\lambda} \cdot \frac{1}{4 \cdot \sqrt{2 \pi \lambda}} \cdot e^{-\frac{1}{100} \log n} \cdot e^{-\frac{1}{5} \sqrt{\log n}} \\
 &\geq e^{-\frac{1}{99} \log n} = n^{-1/99},
 \end{align*}
 where the last inequality holds for sufficiently large $n$. Hence,
 \begin{align*}
  \Pro{ \bigcup_{i=1}^n \left\{ Y_i \geq \lambda + \frac{\sqrt{c}}{10} \cdot \log n \right\} } & \geq
  1 - \left(  1 - n^{-1/99} \right)^{n} \geq 1 - n^{-3}.
 \end{align*}
Hence for $\tilde{\mathcal{E}}:= \left\{\max_{i \in [n]} Y_i \geq \lambda + \frac{\sqrt{c}}{10} \cdot \log n \right\}$, we have $\Pro{\neg \tilde{\mathcal{E}} } \leq n^{-3}$. Note that  $\tilde{\mathcal{E}}$ is a monotone event under adding balls, and thus with $\mathcal{E}:= \left\{\max_{i \in [n]} y_i^{m} \geq \lambda + \frac{\sqrt{c}}{10} \cdot \log n \right\}$, we have by~\cite[Corollary 5.11]{MU17})
\[
 \Pro{\neg \mathcal{E} } \leq 2 \cdot \Pro{ \neg \tilde{\mathcal{E}} } \leq 2  \cdot n^{-3} \leq n^{-2}. \qedhere
\]
\end{proof}

\subsection{Auxiliary Probabilistic Inequalities}

In this section, we prove two simple probabilistic inequalities.

\begin{lem}\label{lem:binomial_bound}
For any $n \geq 8$ and $\gamma \in [n]$, 
\[
\Pro{\Bin(n, 1/n) = \gamma} \leq 2^{-\gamma}.
\]
\end{lem}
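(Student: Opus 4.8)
The plan is to work directly from
\[
\Pro{\Bin(n,1/n) = \gamma} = \binom{n}{\gamma}\,\frac{1}{n^{\gamma}}\,\Bigl(1-\frac1n\Bigr)^{n-\gamma},
\]
splitting the argument by the size of $\gamma$. For $\gamma \ge 4$ I would use the crude estimate $\binom{n}{\gamma}\le n^{\gamma}/\gamma!$ together with $(1-1/n)^{n-\gamma}\le 1$ to obtain $\Pro{\Bin(n,1/n)=\gamma}\le 1/\gamma!$, and then check that $\gamma!\ge 2^{\gamma}$ for all $\gamma\ge 4$: this holds at $\gamma=4$ (since $24\ge 16$) and is preserved under $\gamma\mapsto\gamma+1$ because the ratio of the two sides gets multiplied by $(\gamma+1)/2\ge 1$.

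The bound $1/\gamma!$ is too weak for $\gamma\in\{1,2,3\}$ (e.g.\ $1/1!=1>\tfrac12$ and $1/3!=\tfrac16>\tfrac18$), so for those values I would retain the factor $(1-1/n)^{n-\gamma}$. A short computation gives $\Pro{\Bin(n,1/n)=1}=(1-1/n)^{n-1}$, $\Pro{\Bin(n,1/n)=2}=\tfrac12(1-1/n)^{n-1}$, and $\Pro{\Bin(n,1/n)=3}\le\tfrac16(1-1/n)^{n-3}$; using that $(1-1/n)^{n-1}$ and $(1-1/n)^{n-3}$ are decreasing in $n$ (they converge to $1/e$ from above), hence at most their values at $n=8$, namely $(7/8)^{7}<\tfrac12$ and $(7/8)^{5}<\tfrac34$, one concludes $\Pro{\Bin=1}\le\tfrac12$, $\Pro{\Bin=2}\le\tfrac14$, $\Pro{\Bin=3}\le\tfrac18$, as required. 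A slicker, fully uniform alternative is to observe that the ratio $\Pro{\Bin=\gamma+1}/\Pro{\Bin=\gamma}=\tfrac{n-\gamma}{(\gamma+1)(n-1)}$ is at most $\tfrac12$ for every $\gamma\ge1$, which together with $\Pro{\Bin=1}=(1-1/n)^{n-1}\le\tfrac12$ yields $\Pro{\Bin=\gamma}\le 2^{-\gamma}$ for all $\gamma\ge1$ by induction; this even avoids the hypothesis $n\ge 8$, which is presumably kept only for convenience in the application.

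There is no genuine obstacle here: the whole statement is a routine estimate. The only point requiring a moment's care is the small-$\gamma$ regime, where the pure factorial bound does not suffice and one must keep the $(1-1/n)^{n-\gamma}$ factor — essentially the Poisson $e^{-1}$ correction — and this is the one place where the assumption $n\ge 8$ (or at least $n\ge 2$) actually enters.
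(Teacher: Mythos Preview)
Your main argument is correct and follows essentially the same route as the paper: both split at $\gamma\ge 4$ (where $\gamma!\ge 2^{\gamma}$ suffices) versus $\gamma\in\{1,2,3\}$ (where one must retain the factor $(1-1/n)^{n-\gamma}$). The only cosmetic difference is that the paper packages the small-$\gamma$ cases uniformly via $(1-1/n)^{n-\gamma}\le e^{-1+\gamma/n}$ and then checks the single inequality $e^{1-\gamma/n}\,\gamma!\ge 2^{\gamma}$ for $n\ge 8$, whereas you evaluate each of $\gamma=1,2,3$ numerically at $n=8$; both are valid.

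Your alternative ratio argument is genuinely slicker than either: from $\Pro{\Bin=\gamma+1}/\Pro{\Bin=\gamma}=\tfrac{n-\gamma}{(\gamma+1)(n-1)}\le\tfrac12$ for $\gamma\ge 1$ and the base case $\Pro{\Bin=1}=(1-1/n)^{n-1}\le\tfrac12$ (valid already for $n\ge 2$), the claim follows by a one-line induction with no case split. This is a real simplification over the paper's proof, and as you note it shows the hypothesis $n\ge 8$ is not needed.
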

\begin{proof}
Note that for $\gamma \geq 4$, we have $\gamma! \geq 2^\gamma$. For $n \geq 8$, we also have that $e^{1 - \frac{\gamma}{n}} \cdot \gamma! \geq 2^{\gamma}$ for $\gamma < 4$, and so for all $\gamma \in [n]$. 

Hence,
\begin{align*}
\Pro{\Bin(n, 1/n) = \gamma} & = \binom{n}{\gamma} \cdot \frac{1}{n^{\gamma}} \cdot \Big(1 - \frac{1}{n}\Big)^{n - \gamma} \\
 & = \frac{n}{n} \cdot \frac{n-1}{n} \cdot \ldots \cdot \frac{n- \gamma + 1}{n} \cdot \frac{1}{\gamma!} \cdot \Big( 1 - \frac{1}{n} \Big)^{n - \gamma} \\
 & \leq \frac{1}{\gamma!} \cdot \Big( 1 - \frac{1}{n} \Big)^{n - \gamma} \\
 & \leq \frac{1}{\gamma!} \cdot e^{-1 + \gamma/n} \leq 2^{-\gamma},
\end{align*}
using that $1+z \leq e^z$ for any $z$ and $\gamma \leq n$.
\end{proof}

\begin{lem} \label{lem:geometric_arithmetic}
Consider a sequence of random variables $(X^i)_{i \in \mathbb{N}}$ such that there are $0 < a < 1$ and $b > 0$ such that every $i \geq 1$,
\[
\Ex{X^i \mid X^{i-1}} \leq X^{i-1} \cdot a + b.
\]
Then for every $i \geq 1$, 
\[
\Ex{X^i \mid X^0}
\leq X^0 \cdot a^i + \frac{b}{1 - a}.
\]
\end{lem}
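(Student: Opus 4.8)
The plan is a straightforward induction on $i$, using the tower property of conditional expectation to peel off one step at a time.

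For the base case $i=1$, the hypothesis directly gives $\Ex{X^1 \mid X^0} \leq X^0 \cdot a + b$, and since $0 < a < 1$ we have $b \leq \frac{b}{1-a}$, so the claimed bound holds. For the inductive step, suppose the bound holds for some $i \geq 1$. I would write, using the tower property (conditioning further on $X^i$ and noting $X^0$ is $\sigma(X^i)$-measurable only in the Markov-type sense that matters here — more precisely, $\Ex{X^{i+1} \mid X^0} = \Ex{\,\Ex{X^{i+1}\mid X^0, X^i}\mid X^0}$ and the inner expectation is at most $a X^i + b$ by hypothesis),
\[
\Ex{X^{i+1} \mid X^0} \leq \Ex{a \cdot X^i + b \mid X^0} = a \cdot \Ex{X^i \mid X^0} + b \leq a \cdot \Big( X^0 a^i + \frac{b}{1-a}\Big) + b.
\]
Then $a \cdot \frac{b}{1-a} + b = \frac{b}{1-a}$ exactly, so the right-hand side equals $X^0 a^{i+1} + \frac{b}{1-a}$, completing the induction.

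Alternatively, and perhaps cleaner to present, one can unroll the recursion directly: iterating the hypothesis $i$ times yields $\Ex{X^i \mid X^0} \leq X^0 a^i + b \sum_{j=0}^{i-1} a^j$, and then bounding the partial geometric sum by $\sum_{j=0}^{\infty} a^j = \frac{1}{1-a}$ gives the result. I do not anticipate any real obstacle here; the only point requiring a moment of care is the correct application of the tower property / law of total expectation when passing from the one-step bound conditioned on $X^{i-1}$ to a bound conditioned on $X^0$, which is valid since conditioning on additional information and then averaging preserves the inequality.
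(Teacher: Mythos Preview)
Your proposal is correct and essentially identical to the paper's proof: the paper proves by induction that $\Ex{X^i \mid X^0} \leq X^0 \cdot a^i + b \sum_{j=0}^{i-1} a^j$ (your ``alternative'' approach) via the tower property $\Ex{X^{i+1}\mid X^0}=\Ex{\Ex{X^{i+1}\mid X^i}\mid X^0}$, and then bounds the partial geometric sum by $\frac{1}{1-a}$. Your primary variant, inducting directly on the bound $X^0 a^i + \frac{b}{1-a}$ and using $a\cdot\frac{b}{1-a}+b=\frac{b}{1-a}$, is a cosmetic rearrangement of the same argument.
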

\begin{proof}
We will prove by induction that for every $i \in \mathbb{N}$, 
\[
\Ex{X^i \mid X^0} \leq X^0 \cdot a^i + b \cdot \sum_{j = 0}^{i-1} a^j.
\]
For $i = 0$, $\Ex{X^0 \mid X^0} \leq X^0$. Assuming the induction hypothesis holds for some $i \geq 0$, then since $a > 0$,
\begin{align*}
\Ex{X^{i+1} \mid X^0} & = \Ex{\Ex{X^{i+1} \mid X^i}\mid X^0} \leq \Ex{X^i\mid X^0} \cdot a + b \\
 & \leq \Big(X^0 \cdot a^i + b \cdot \sum_{j = 0}^{i-1} a^j \Big) \cdot a + b \\
 & = X^0 \cdot a^{i+1} +b \cdot \sum_{j = 0}^i a^j.
\end{align*}
The claims follows using that for $a \in (0,1)$, $\sum_{j=0}^{\infty} a^j = \frac{1}{1-a}$.
\end{proof}

\subsection{Simple Bounds for the RBB process}

We will now use the upper bound for the quadratic potential for \OneChoice (\cref{lem:one_choice_upsilon}) to upper bound the maximum change of the quadratic potential for the RBB process over one round.

\begin{lem} \label{lem:bound_upsilon_change}
Consider the RBB process with $m \geq n$ balls and $n$ bins. For any round $t \geq 0$, such that $\max_{i \in [n]} x_i^t \leq \frac{m}{n} \cdot \log n$, we have,
\[
\Pro{|\Upsilon^{t+1} - \Upsilon^t| \leq 2 \cdot m \cdot \log n + 3n} \geq 1 - n^{-\omega(1)}.
\]
\end{lem}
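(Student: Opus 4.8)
The plan is to write out the one-round change of $\Upsilon$ explicitly in terms of the $\kappa^t$ fresh uniform samples and then bound a linear and a quadratic contribution separately. Conditioning on $\mathfrak{F}^t$ (so that $\kappa^t$ is a fixed number), I would write $x_i^{t+1} = x_i^t - \mathbf{1}_{x_i^t>0} + B_i$, where $B_i := |\{ j \in [\kappa^t] : z_j^t = i\}|$ is the number of re-allocated balls landing in bin $i$. Two facts will be used: $\sum_{i} B_i = \kappa^t \le n$, and $(B_i)_{i \in [n]}$ is exactly the load vector of a \OneChoice process with $\kappa^t$ balls into $n$ bins. Expanding the square then gives
\[
\Upsilon^{t+1} - \Upsilon^t = \sum_{i \in [n]} 2 x_i^t \big( B_i - \mathbf{1}_{x_i^t>0} \big) + \sum_{i \in [n]} \big( B_i - \mathbf{1}_{x_i^t>0} \big)^2 .
\]

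For the linear term I would use $\sum_i x_i^t \mathbf{1}_{x_i^t>0} = m$ together with $\sum_i x_i^t B_i \le (\max_i x_i^t) \cdot \sum_i B_i \le \frac{m}{n} \log n \cdot \kappa^t \le m \log n$, so that the linear term lies in the interval $[-2m,\, 2m\log n - 2m]$. For the quadratic term I would use the elementary inequality $(B_i - \mathbf{1}_{x_i^t>0})^2 \le B_i^2 + \mathbf{1}_{x_i^t>0}$ (valid for every integer $B_i \ge 0$) and $\sum_i \mathbf{1}_{x_i^t>0} = \kappa^t \le n$, so that it lies in $[0,\, \sum_i B_i^2 + n]$. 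It then only remains to control $\sum_i B_i^2$, which is the quadratic potential of a \OneChoice process with $\kappa^t \le n$ balls; since adding a ball can only increase $\sum_i B_i^2$, this quantity is stochastically dominated by $\Upsilon^n$ of the \OneChoice process with exactly $n$ balls, and \cref{lem:one_choice_upsilon} yields $\Pro{ \sum_i B_i^2 \le 3n \mid \mathfrak{F}^t } \ge 1 - n^{-\omega(1)}$.

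Combining these bounds, on this high-probability event we get $\Upsilon^{t+1}-\Upsilon^t \le (2m\log n - 2m) + (3n + n) = 2m\log n - 2m + 4n \le 2m\log n + 2n$, using $m \ge n$ in the last step, and $\Upsilon^{t+1}-\Upsilon^t \ge -2m \ge -2m\log n$ for $n \ge 3$; hence $|\Upsilon^{t+1}-\Upsilon^t| \le 2m\log n + 3n$ with probability at least $1 - n^{-\omega(1)}$, which is exactly the claim. There is no genuine obstacle here: the argument is essentially bookkeeping. The only mild care needed is to keep the additive error at $3n$ rather than a larger multiple of $n$, which is precisely why the linear term is kept in the form $2m\log n - 2m$, why the inequality $(B_i-\mathbf 1_{x_i^t>0})^2 \le B_i^2 + \mathbf 1_{x_i^t>0}$ is used rather than a cruder one, and why the hypothesis $m \ge n$ is invoked at the very end.
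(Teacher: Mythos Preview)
Your proof is correct and follows essentially the same approach as the paper's: both split the one-round change into a linear part bounded via $\max_i x_i^t \le \frac{m}{n}\log n$ and $\sum_i B_i \le n$, and a quadratic part $\sum_i B_i^2$ bounded by $3n$ via \cref{lem:one_choice_upsilon}. Your treatment is in fact slightly cleaner, since you track the signed difference (recovering the $-2m$ from the linear term to absorb an extra $+n$) and make the stochastic domination from $\kappa^t$ to $n$ balls explicit, whereas the paper applies a triangle inequality per bin.
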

\begin{proof}
Let $k_i \in [0, n]$ be the number of balls that each bin receives at round $t$. For any bin $i \in [n]$ with $x_i^t > 0$, 
\begin{align*}
|\Upsilon_i^{t+1} - \Upsilon_i^t| 
 & = |(x_i^t + k_i - 1)^2 - (x_i^t)^2| = |2 \cdot (k_i - 1) \cdot x_i^t + (k_i- 1)^2| \\
 & \leq 2 \cdot x_i^t \cdot k_i+ (k_i)^2 + 1.
\end{align*}
For any bin $i \in [n]$ with $x_i^t = 0$,
\[
|\Upsilon_i^{t+1} - \Upsilon_i^t| = k_i^2.
\]
Aggregating over all bins, we have
\begin{align} \label{lem:upsilon_upper_bound}
|\Upsilon^{t+1} - \Upsilon^t| & 
 \leq  \sum_{i = 1}^n 2 \cdot x_i^t \cdot k_i+ \sum_{i=1}^n (k_i)^2 + n.
\end{align}
Using \cref{lem:one_choice_upsilon} we have
\[
\Pro{\sum_{i = 1}^n k_i^2 \leq 3 \cdot n} \geq 1 - n^{-\omega(1)}.
\]
Assuming that the event $\sum_{i = 1}^n k_i^2 \leq 3 \cdot n$ holds, and since by assumption $\max_{i \in [n]} x_i^t \leq \frac{m}{n} \cdot \log n$ and $m \geq n$, we finally conclude from \cref{lem:upsilon_upper_bound}
\[
|\Upsilon^{t+1} - \Upsilon^t| \leq 2 \cdot n \cdot \frac{m}{n} \cdot \log n + 3n = 2 \cdot m \cdot \log n + 3n. \qedhere
\]
\end{proof}

\subsection{Martingale and Drift Inequalities}

\begin{thm}[Corollary 5.2 in~\cite{DP09}] \label{thm:mobd}
Consider a function $f : \prod_{i \in [N]} \Omega_i \to \mathbb{R}$ such that it satisfies the Lipschitz condition with bounds $(c_i)_{i \in [N]}$. For independent random variables $X^1, \ldots , X^N$ with $X^i$ taking values in $\Omega_i$, we have that for any $\lambda > 0$
\[
\Pro{f(X^1, \ldots, X^N) \geq \Ex{f(X^1, \ldots, X^N)} + \lambda} \leq \exp\left( - \frac{2 \cdot \lambda^2}{\sum_{i = 1}^N c_i^2} \right).
\]
\end{thm}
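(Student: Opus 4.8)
The plan is to establish this via the standard exponential-moment (Azuma--Hoeffding) argument applied to the Doob martingale associated with $f$. First I would fix a value vector $(x^1,\ldots,x^N) \in \prod_{i}\Omega_i$ only implicitly, and define for $k=0,1,\ldots,N$ the sequence $Y_k := \Ex{f(X^1,\ldots,X^N) \mid X^1,\ldots,X^k}$, so that $Y_0 = \Ex{f(X^1,\ldots,X^N)}$ and $Y_N = f(X^1,\ldots,X^N)$. By the tower property this is a martingale with respect to the filtration generated by the $X^i$, and its increments $D_k := Y_k - Y_{k-1}$ telescope to $f(X^1,\ldots,X^N) - \Ex{f(X^1,\ldots,X^N)} = \sum_{k=1}^{N} D_k$.

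The key structural step is to show that, conditioned on $X^1,\ldots,X^{k-1}$, the increment $D_k$ is confined to an interval of length at most $c_k$. Writing $g_k(x^1,\ldots,x^k) := \Ex{f(X^1,\ldots,X^N) \mid X^1=x^1,\ldots,X^k=x^k}$, we have $D_k = g_k(X^1,\ldots,X^k) - \Ex{g_k(X^1,\ldots,X^{k-1},X^k) \mid X^1,\ldots,X^{k-1}}$, so it suffices to bound the oscillation of $g_k$ in its last coordinate with the first $k-1$ frozen. For two admissible values $a,a'$ of $X^k$, the functions $f(\cdot,\ldots,a,\ldots)$ and $f(\cdot,\ldots,a',\ldots)$ differ pointwise by at most $c_k$ by the Lipschitz condition; since $X^{k+1},\ldots,X^N$ are independent of $X^1,\ldots,X^k$, integrating this gap against their product law preserves it, giving $|g_k(\cdot,a) - g_k(\cdot,a')| \le c_k$ and hence the desired conditional range bound.

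With this in hand I would invoke Hoeffding's lemma: any random variable $Z$ with $\Ex{Z \mid \mathcal{G}} = 0$ that is conditionally supported on an interval of length $\ell$ satisfies $\Ex{e^{sZ} \mid \mathcal{G}} \le e^{s^2 \ell^2 / 8}$ for every real $s$ (proved by convexity of $z \mapsto e^{sz}$ on the interval and a second-order estimate of the resulting log-moment-generating function). Applying this to $Z = D_k$ with $\mathcal{G}$ the $\sigma$-algebra generated by $X^1,\ldots,X^{k-1}$, and peeling off conditional expectations one coordinate at a time, yields $\Ex{e^{s(f - \Ex{f})}} \le \prod_{k=1}^{N} e^{s^2 c_k^2 / 8} = e^{s^2 \sum_{k} c_k^2 / 8}$. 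Markov's inequality then gives, for every $s > 0$, $\Pro{f - \Ex{f} \ge \lambda} \le e^{-s\lambda + s^2 \sum_{k} c_k^2 / 8}$, and optimizing with $s = 4\lambda / \sum_{k} c_k^2$ produces the claimed bound $\exp\!\big(-2\lambda^2 / \sum_{k} c_k^2\big)$.

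The main obstacle is the second step: making the conditional-range argument fully rigorous — in particular, carefully tracking which variables are held fixed versus averaged out at each stage and using independence to pass the pointwise Lipschitz gap through the integral against the later coordinates. Everything else (Hoeffding's lemma, the telescoping peel-off, and the choice of $s$) is routine. Since the statement is exactly Corollary~5.2 of~\cite{DP09}, one may of course alternatively just cite it rather than reproduce this argument.
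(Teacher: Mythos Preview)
Your argument is the standard and correct proof of McDiarmid's inequality via the Doob martingale and Hoeffding's lemma; it would certainly establish the statement. However, the paper does not actually prove this theorem: it is stated in the appendix as a tool and attributed directly to \cite[Corollary~5.2]{DP09} without proof. So there is no ``paper's own proof'' to compare against --- the authors simply cite the result, which is exactly the alternative you mention in your last sentence.
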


\begin{lem}[Azuma's Inequality for Super-Martingales {\cite[Problem 6.5]{DP09}}] \label{lem:azuma}
Let $X^0, \ldots, X^N$ be a super-martingale satisfying $|X^{i} - X^{i-1}| \leq c_i$ for any $i \in [N]$, then for any $\lambda > 0$,
\[
\Pro{X^N \geq X^0 + \lambda} \leq \exp\left(- \frac{\lambda^2}{2 \cdot \sum_{i=1}^n c_i^2} \right).
\]
\end{lem}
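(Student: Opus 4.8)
The plan is to run the standard Chernoff/exponential-moment argument, adapted to the one-sided (super-martingale) hypothesis. Write $Y^i := X^i - X^{i-1}$ for $i \in [N]$, so that the super-martingale property gives $\Ex{\left.Y^i \,\right|\, \mathfrak{F}^{i-1}} \leq 0$, while the hypothesis gives $|Y^i| \leq c_i$ almost surely. Fix any $s > 0$. By Markov's inequality applied to the nonnegative random variable $e^{s(X^N - X^0)}$,
\[
\Pro{X^N \geq X^0 + \lambda} = \Pro{e^{s(X^N - X^0)} \geq e^{s\lambda}} \leq e^{-s\lambda} \cdot \Ex{e^{s(X^N - X^0)}}.
\]

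The heart of the proof is to bound the moment generating factor by peeling off one step at a time. Conditioning on $\mathfrak{F}^{N-1}$ and using that $X^{N-1} - X^0$ is $\mathfrak{F}^{N-1}$-measurable,
\[
\Ex{e^{s(X^N - X^0)}} = \Ex{e^{s(X^{N-1} - X^0)} \cdot \Ex{\left. e^{s Y^N} \,\right|\, \mathfrak{F}^{N-1}}}.
\]
So it suffices to show, for a single increment $Y$ with $\Ex{\left.Y\,\right|\,\mathfrak{F}} \leq 0$ and $|Y| \leq c$, the bound $\Ex{\left.e^{sY}\,\right|\,\mathfrak{F}} \leq e^{s^2 c^2/2}$. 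This is the conditional Hoeffding lemma: let $\mu := \Ex{\left.Y\,\right|\,\mathfrak{F}} \in [-c,0]$ and set $W := Y - \mu$, which has conditional mean $0$ and lies in an interval of length $2c$. Convexity of $y \mapsto e^{sy}$ bounds $e^{sW}$ by the chord through the interval's endpoints; taking conditional expectations and writing the result as $e^{\psi(s)}$, one checks $\psi(0) = \psi'(0) = 0$ and $\psi''(\cdot) \leq (2c)^2/4$, whence $\Ex{\left.e^{sW}\,\right|\,\mathfrak{F}} \leq e^{s^2 c^2/2}$. Multiplying back by $e^{s\mu} \leq 1$ (here we use $s > 0$ and $\mu \leq 0$, which is exactly where the super-martingale hypothesis is enough) gives $\Ex{\left.e^{sY}\,\right|\,\mathfrak{F}} \leq e^{s^2 c^2/2}$. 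Iterating the peeling over $i = N, N-1, \ldots, 1$ yields
\[
\Ex{e^{s(X^N - X^0)}} \leq \prod_{i=1}^{N} e^{s^2 c_i^2 / 2} = \exp\!\Big( \tfrac{s^2}{2} \sum_{i=1}^{N} c_i^2 \Big).
\]

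Combining the two displays, $\Pro{X^N \geq X^0 + \lambda} \leq \exp\!\big( -s\lambda + \tfrac{s^2}{2}\sum_{i=1}^N c_i^2 \big)$ for every $s > 0$; choosing $s := \lambda / \sum_{i=1}^N c_i^2$ to minimize the exponent gives the claimed bound $\exp\!\big( -\lambda^2 / (2\sum_{i=1}^N c_i^2) \big)$. The only genuinely delicate point is the conditional Hoeffding lemma, and specifically the observation that replacing the zero-mean assumption by $\Ex{\left.Y\,\right|\,\mathfrak{F}} \leq 0$ costs nothing, since the extra factor $e^{s\mu}$ is at most $1$; the rest is routine calculus and a telescoping conditioning argument. (Since the result is classical, one could alternatively just cite \cite{DP09}, but the above is the self-contained derivation.)
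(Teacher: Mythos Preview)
Your proof is correct and is the standard Chernoff--Hoeffding derivation. The paper itself does not prove this lemma; it simply states it with a citation to \cite[Problem 6.5]{DP09}. So there is no paper proof to compare against, but your self-contained argument is exactly the expected one, and you correctly isolate the only place where the super-martingale (rather than martingale) hypothesis matters: the factor $e^{s\mu}\le 1$ coming from $\mu\le 0$ and $s>0$.
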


In order to state the concentration inequality for supermartingales conditional on a bad event not occurring, we introduce the following definitions from~\cite{CL06}. Consider any random variable $X$ (in our case it will be the $Z^t$, the adjusted quadratic potential in \cref{lem:quadratic_eps}) that can be evaluated by a sequence of decisions $Y^1, Y^2, \ldots ,Y^N$ of finitely many outputs (the allocated balls). We can describe the process by a \textit{decision tree} $T$, a complete rooted tree with depth $n$ with vertex set $V(T)$. Each edge $uv$ of $T$ is associated with a probability $p_{uv}$ depending on the decision made from $u$ to $v$. 

We say $f : V (T) \to \mathbb{R}$ satisfies an \textit{admissible condition} $P$ if $P = \{P_v\}$ holds for every vertex $v$. For an admissible condition $P$, the associated bad set $\mathcal{B}^i$ over the $X_i$ is defined to be
\[
\mathcal{B}^i = \{ v \mid \text{the depth of $v$ is $i$, and $P_u$ does not hold for some ancestor $u$ of $v$} \}.
\]

\begin{thm}[Theorem 8.3 in~\cite{CL06}]
\label{thm:chu_lu_thm_8_3}
For a filtration $\mathbf{F}$,
\[
\{\emptyset, \Omega \} = \mathfrak{F}^0 \subseteq \mathfrak{F}^1 \subseteq \ldots \subseteq \mathfrak{F}^N,
\]
suppose that the random variable $X^i$ is $\mathfrak{F}^i$-measurable for $0 \leq i \leq N$. Let $\mathcal{B} = \mathcal{B}^N$ denote the bad set with the following admissible condition:
\begin{align*}
    \Ex{X^i \mid \mathfrak{F}^{i-1}} & \leq X^{i-1},\\
    |X^i - X^{i-1}| & \leq c_i,
\end{align*}
for $1 \leq i \leq N$ and for $c_1, \ldots , c_N \geq 0$. Then, we have
\[
\Pro{X^N \geq X^0 + \lambda} \leq \exp\left( - \frac{\lambda^2}{2 \cdot \sum_{i = 1}^N c_i^2}\right) + \Pro{\mathcal{B}}.
\]
\end{thm}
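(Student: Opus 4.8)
The plan is to reduce the statement to the ordinary Azuma inequality for supermartingales (\cref{lem:azuma}) by replacing $(X^i)$ with a ``frozen'' version $(\tilde X^i)$ that is a genuine supermartingale whose increments are bounded by $c_i$ \emph{deterministically}, paying for the frozen paths with the additive term $\Pro{\mathcal{B}}$. This is the standard stopped-process (optional-stopping) trick applied to the decision tree $T$.

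First I would encode the admissible condition as a stopping time. For a vertex $v$ at depth $i-1$, let $P_v$ be the conjunction that $\Ex{X^i\mid\mathfrak{F}^{i-1}}\le X^{i-1}$ (a deterministic function of $v$) and $|X^i-X^{i-1}|\le c_i$ holds along every edge out of $v$ (a deterministic condition on $v$ and its children); since the tree is fixed, $\{P_v \text{ fails}\}$ is $\mathfrak{F}^{i-1}$-measurable. Define $\tau$ to be the least $i\in\{1,\dots,N\}$ such that $P_v$ fails at the depth-$(i-1)$ vertex $v$ of the current path, and $\tau:=N+1$ if no such $i$ exists. Then $\tau$ is a stopping time with respect to $(\mathfrak{F}^i)$, $\mathcal{B}=\{\tau\le N\}$ by the definition of the bad set, and on $\neg\mathcal{B}$ the entire path is admissible.

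Next I would set $\tilde X^i:=X^{\,i\wedge(\tau-1)}$ and verify the two properties needed by \cref{lem:azuma}. For the increments: if $i<\tau$, then $\tilde X^i-\tilde X^{i-1}=X^i-X^{i-1}$ and step $i$ was admissible, so $|\tilde X^i-\tilde X^{i-1}|\le c_i$; if $i\ge\tau$, then $i\wedge(\tau-1)=(i-1)\wedge(\tau-1)$, so the increment is $0\le c_i$. Hence $|\tilde X^i-\tilde X^{i-1}|\le c_i$ holds surely. For the supermartingale property I would split on the $\mathfrak{F}^{i-1}$-measurable events $\{\tau\le i\}$ and $\{\tau\ge i+1\}$: on $\{\tau\le i\}$ one has $i\wedge(\tau-1)=(i-1)\wedge(\tau-1)=\tau-1$, so $\tilde X^i=\tilde X^{i-1}$; on $\{\tau\ge i+1\}$ one has $\tilde X^{i-1}=X^{i-1}$, $\tilde X^i=X^i$, and $P_v$ holds at the depth-$(i-1)$ vertex, so $\Ex{X^i\mid\mathfrak{F}^{i-1}}\le X^{i-1}$. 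In both cases $\Ex{\tilde X^i\mid\mathfrak{F}^{i-1}}\le\tilde X^{i-1}$.

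Finally I would invoke \cref{lem:azuma} on $(\tilde X^i)$ to get $\Pro{\tilde X^N\ge\tilde X^0+\lambda}\le\exp\!\big(-\lambda^2/(2\sum_{i=1}^N c_i^2)\big)$, noting $\tilde X^0=X^0$. On $\neg\mathcal{B}$ we have $\tau=N+1$, hence $\tilde X^N=X^N$, so $\{X^N\ge X^0+\lambda\}\subseteq\{\tilde X^N\ge X^0+\lambda\}\cup\mathcal{B}$, and a union bound yields the claim. I expect the only delicate point to be the case analysis showing that $(\tilde X^i)$ is simultaneously a supermartingale and has increments surely bounded by $c_i$ — in particular, freezing exactly one step before the first inadmissible transition (rather than at it), so that no bad increment is ever retained, and checking carefully that all the relevant events are $\mathfrak{F}^{i-1}$-measurable because $P_v$ depends only on the (fixed) tree vertex $v$.
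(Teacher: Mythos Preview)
The paper does not give its own proof of this statement: it is quoted verbatim as Theorem~8.3 of~\cite{CL06} in the Tools appendix and used as a black box (in \cref{lem:quadratic_eps}). So there is no paper proof to compare against.

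That said, your argument is the standard one and is essentially how the result is established in~\cite{CL06}: freeze the process one step before the first inadmissible vertex, check that the frozen process is a genuine supermartingale with surely bounded increments, apply Azuma (\cref{lem:azuma}), and pay $\Pro{\mathcal{B}}$ for the frozen paths via a union bound. Your measurability check is the right one: because the decision tree is fixed and has finitely many children per vertex, the event ``$P_v$ fails at the current depth-$(i-1)$ vertex'' is $\mathfrak{F}^{i-1}$-measurable, so $\{\tau\le i\}\in\mathfrak{F}^{i-1}$ and the case split in the supermartingale verification is legitimate. The choice to stop at $\tau-1$ rather than $\tau$ is exactly the point that makes both the increment bound and the drift bound hold simultaneously, and you have identified it correctly.
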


We continue with three drift inequalities, whose proofs follow from some straightforward martingale techniques and the Optional Stopping Theorem.
\begin{lem}\label{lem:drift_prob}
Let $X^{t}$ be a stochastic process on the integers $\{ 0,1,\ldots , M\}$, for some finite $M > 0$. Let $D^{t}:=X^{t}-X^{t-1}$, and assume for all $t \geq 0$,
\begin{align*}
    \Ex{ D^{t} \, \left| \, \mathfrak{F}^{t-1}, X^{t-1} > 0 \right. } &\leq 0. %
\end{align*}
Let $X^0 > 0$ and define for some integer $k \geq 1$, $\tau:=\min\{ t \geq 0 \colon X^t = 0 \vee X^t \geq k\}$. Then, \[
\Pro { X^{\tau} = 0} \geq 1 - \frac{X^0}{k}.
\]
\end{lem}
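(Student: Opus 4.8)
The statement is a supermartingale (gambler's‑ruin style) bound, and the plan is to show that the process stopped at $\tau$ is a bounded supermartingale, apply the optional stopping theorem, and then read off the hitting probability from the fact that $X^\tau$ is either $0$ or at least $k$.

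First I would check that $(X^{t\wedge\tau})_{t\ge 0}$ is a supermartingale with respect to $(\mathfrak{F}^t)_{t\ge 0}$. On the event $\{t<\tau\}$ we have $0<X^t<k$, so in particular $X^t>0$, and the hypothesis gives $\Ex{X^{t+1}-X^t\mid\mathfrak{F}^t}=\Ex{D^{t+1}\mid\mathfrak{F}^t}\le 0$; on the event $\{t\ge\tau\}$ the stopped process is constant, so $\Ex{X^{(t+1)\wedge\tau}\mid\mathfrak{F}^t}\le X^{t\wedge\tau}$ in all cases. Taking expectations yields $\Ex{X^{t\wedge\tau}}\le \Ex{X^{0\wedge\tau}}=X^0$ for every finite $t$.

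Next I would let $t\to\infty$. Since $X^{t\wedge\tau}\ge 0$ always and $X^{t\wedge\tau}\ge k\cdot\mathbf{1}_{\{\tau\le t,\ X^\tau\ge k\}}$, the bound above gives $k\cdot\Pro{\tau\le t,\ X^\tau\ge k}\le X^0$, and monotone convergence (using that the event $\{X^\tau\ge k\}$ is contained in $\{\tau<\infty\}$) gives $k\cdot\Pro{X^\tau\ge k}\le X^0$, i.e. $\Pro{X^\tau\ge k}\le X^0/k$. By the definition of $\tau$, whenever $\tau<\infty$ exactly one of $X^\tau=0$ or $X^\tau\ge k$ holds; since $\tau<\infty$ almost surely in the settings where this lemma is invoked (the increments are bounded and, on $\{X^{t-1}>0\}$, have conditional variance bounded away from $0$, as in Lemmas~\ref{lem:idealized_one} and~\ref{lem:idealized_two}), we conclude $\Pro{X^\tau=0}=1-\Pro{X^\tau\ge k}\ge 1-X^0/k$.

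The only mildly delicate point is the passage to the limit, i.e. justifying the optional stopping step: one should either adopt almost‑sure finiteness of $\tau$ as a standing hypothesis, or — as above — argue directly with the truncated time $t\wedge\tau$ and let $t\to\infty$, which is legitimate because $0\le X^{t\wedge\tau}\le M$ is uniformly bounded. Everything else is routine.
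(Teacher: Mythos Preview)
Your proposal is correct and follows essentially the same route as the paper: stop the process at $\tau$, observe that the stopped process is a bounded supermartingale, apply optional stopping to obtain $\Ex{X^{\tau}}\le X^0$, and combine with $X^{\tau}\in\{0\}\cup[k,M]$ to extract the hitting probability. The paper invokes the Optional Stopping Theorem directly (using $|Y^t|\le M$) rather than your truncate-then-limit variant, and, like you, it tacitly relies on $\tau<\infty$ almost surely for the final step; your explicit remark that this holds in the intended applications is a fair way to handle it.
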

\begin{proof}
First, define $Y^{t}:=X^{t \wedge \tau}$. Then,
\[
 \Ex{ Y^{t} \, \left| \, \mathfrak{F}^{t-1} \right.} \leq Y^{t},
\]
so $Y^t$ is a super-martingale. Further, note that $|Y^{t}| \leq  M$ holds deterministically.
By the Optional Stopping Theorem,
\begin{align*}
    \Ex{ Y^{\tau} } &\leq Y^{0} = X^0,
\end{align*}
but also
\begin{align*}
    \Ex{ Y^{\tau} } &\geq \Pro { X^{\tau} = 0} \cdot 0 + \left(1 - \Pro { X^{\tau} = 0} \right) \cdot k.
\end{align*}
Combining these, yields
$
 \Pro{ X^{\tau} = 0 } \geq 1- \frac{X^0}{k}.
$
\end{proof}

\begin{lem}\label{lem:drift_walk}
Let $X^{t}$ be a stochastic process on the integers $\{0,1,\ldots, 2s\}$, where $X^{0}=s$.
Define $D^{t}:=X^{t}-X^{t-1}$, and further let $\tau:=\min\{ t \geq 0 \colon X^t = 0 \vee X^t = 2s \}$. Assume for all $t \geq 0$,
\begin{align*}
    \Ex{ D^{t} \,\left|\, \mathfrak{F}^{t-1}, X^{t-1} > 0 \right.} &\leq 0, \\
     \Ex{ (D^{t})^2 \,\left|\, \mathfrak{F}^{t-1} \right.} &\geq \sigma^2. %
\end{align*}
Then,
\[
 \Ex{\tau \, \left| \, X^0 = s \right.} \leq \frac{5s^2}{\sigma^2}.
\]
\end{lem}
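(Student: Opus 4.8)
The plan is to exhibit a ``drift function'' $g$ of the state so that $g(X^t)$ decreases by at least $\sigma^2$ in expectation per step until absorption, and then read off the hitting‑time bound. The natural first guess $g(x)=x(2s-x)$ (which vanishes at both barriers) does not quite work: expanding the quadratic exactly gives
\[
\Ex{g(X^t)\mid\mathfrak{F}^{t-1}} = g(X^{t-1}) + (2s-2X^{t-1})\cdot\Ex{D^t\mid\mathfrak{F}^{t-1}} - \Ex{(D^t)^2\mid\mathfrak{F}^{t-1}},
\]
and while the variance term contributes $\le -\sigma^2$, the slope $2s-2X^{t-1}$ changes sign at $x=s$, so on the upper half the non‑positive drift $\Ex{D^t\mid\mathfrak{F}^{t-1}}\le 0$ pushes $g$ \emph{up}. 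The fix is to use $g(x):=x(4s-x)$ instead: its derivative $4s-2x$ is strictly positive on the whole domain $(0,2s)$, so the non‑positive drift always helps, and $g$ is still non‑negative on $[0,2s]$ (now $g(2s)=4s^2>0$ rather than $0$, which turns out to be harmless).

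Concretely, I would first dispose of the trivial case $s=0$, and then record the exact identity $g(u+d)=g(u)+(4s-2u)d-d^2$, valid for all reals. Applying it with $u=X^{t-1}$ and $d=D^t$, taking conditional expectations, and using $\Ex{D^t\mid\mathfrak{F}^{t-1},X^{t-1}>0}\le 0$ together with $4s-2X^{t-1}>0$ on $\{0<X^{t-1}<2s\}$ and $\Ex{(D^t)^2\mid\mathfrak{F}^{t-1}}\ge\sigma^2$, yields
\[
\Ex{g(X^t)\mid\mathfrak{F}^{t-1}}\ \le\ g(X^{t-1})-\sigma^2 \qquad\text{whenever } 0<X^{t-1}<2s .
\]

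Next I would set $Y^t:=g(X^{t\wedge\tau})+\sigma^2\cdot(t\wedge\tau)$ and check that it is a non‑negative super‑martingale: on the event $\{\tau\le t-1\}\in\mathfrak{F}^{t-1}$ it is constant, and on $\{\tau\ge t\}$ (which lies in $\mathfrak{F}^{t-1}$ and forces $X^{t-1}\in(0,2s)$) the displayed inequality gives $\Ex{Y^t\mid\mathfrak{F}^{t-1}}\le g(X^{t-1})-\sigma^2+\sigma^2 t = Y^{t-1}$; non‑negativity is immediate since $X^{t\wedge\tau}\in\{0,\ldots,2s\}$ and $g\ge 0$ there. Iterating the super‑martingale property gives $\sigma^2\cdot\Ex{t\wedge\tau}\le\Ex{Y^t}\le\Ex{Y^0}=g(s)=3s^2$ for every $t$; letting $t\to\infty$ and invoking monotone convergence shows $\tau<\infty$ almost surely and $\Ex{\tau}\le 3s^2/\sigma^2\le 5s^2/\sigma^2$, as required (so in fact the constant $5$ has slack).

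The only real obstacle is the choice of $g$: one has to notice that insisting on a potential that vanishes at \emph{both} absorbing barriers is incompatible with a one‑sided (downward) drift hypothesis, and that moving the upper root from $2s$ to any value $\ge 4s$ restores positivity of the slope throughout the domain while costing nothing, since the remaining argument only uses $g\ge 0$ (not $g=0$) at the barriers. Everything after that step is a routine super‑martingale computation, with no need to invoke optional stopping explicitly.
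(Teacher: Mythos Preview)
Your argument is correct and uses essentially the same potential as the paper: your $g(x)=x(4s-x)$ equals $-(x^2-4sx)$, so the paper's $Z^t=(X^t)^2-4sX^t-\sigma^2 t$ is exactly $-Y^t$ in your notation, and showing $Z$ is a submartingale is equivalent to your supermartingale claim. The only difference is in the endgame: the paper applies Optional Stopping to $Z$, then separately invokes Lemma~\ref{lem:drift_prob} to bound $\Ex{(X^\tau)^2}\le 2s^2$ and drops the term $-4s\,\Ex{X^\tau}$, arriving at $5s^2/\sigma^2$; you instead use $g\ge 0$ on $[0,2s]$ directly to read off $\sigma^2\,\Ex{t\wedge\tau}\le g(s)=3s^2$, which avoids both Optional Stopping and the auxiliary lemma and yields the sharper constant $3$ (confirming your remark that $5$ has slack).
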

\begin{proof}
Define $Z^{t}:=(X^{t})^2 + \lambda X^t + \mu \cdot t $, and recall that $D^{t}:=X^{t}-X^{t-1}$. 

We now show that for $\mu := -\sigma^2$ and $\lambda := -4 s$, for any $t-1 < \tau$, $Z^t$ forms a submartingale,
\begin{align*}
 \Ex{ Z^{t}  \, \left| \, \mathfrak{F}^{t-1} \right.} 
 & = \Ex{ \left. (X^{t-1} + D^t)^2 + \lambda (X^{t-1} + D^t) + \mu t  \, \right| \, \mathfrak{F}^{t-1} }  \\
 & = (X^{t-1})^2 + \lambda X^{t-1} + \mu \cdot t + \left( 2 X^{t-1}+ \lambda \right) \cdot \Ex{ D^{t} \, \left| \, \mathfrak{F}^{t-1} \right. }  + \Ex{ (D^{t})^2 \, \left| \, \mathfrak{F}^{t-1} \right.} \\
 & = Z^{t-1} + \mu +  \left( 2 X^{t-1} + \lambda \right) \cdot \Ex{ D^{t} \, \left| \, \mathfrak{F}^{t-1} \right.}  + \Ex{ (D^{t})^2 \, \left| \, \mathfrak{F}^{t-1} \right.} \\
 & \geq Z^t,
\end{align*}
using in the last inequality that $2 X^{t-1} \leq 4s$ (for $t-1 < \tau$), $\Ex{ D^{t} \, \left| \, \mathfrak{F}^{t-1} \right.}$ and $\Ex{ (D^{t})^2 \,\mid\, \mathfrak{F}^{t-1} } \geq \sigma^2$.

The state space of $X$ is finite and using \cref{lem:drift_prob} for $k = 2 x^{\rho}$, $\Pro{X^t = 0 ~\Big|~X^\rho = x^\rho} \geq 1 - \frac{1}{2}$ for any $t \geq \rho$ and any $x^\rho \in [M+1]$, implying that $\ex{\tau} < \infty$. Further note that $|Z^{t+1} - Z^t| \leq M^2 + 4sM + \mu < \infty$. Hence, applying the Optional Stopping Theorem,
\[
    \Ex{ Z^{\tau} } \geq Z^0,
\]
and thus
\begin{align}\label{eq:x_tau_squared}
 \Ex{ (X^{\tau})^2 } - 4s \Ex{ X^{\tau} } - \sigma^2 \Ex{ \tau}   &\geq s^2 -4s^2 = -3s^2.
\end{align}

Further, by \cref{lem:drift_prob}, with $k=2\cdot s$,
\[
 \Ex{ (X^{\tau})^2 } =
 \Pro{ X^{\tau} = 0} \cdot 0^2 + \left(1- \Pro{X^{\tau}= 0} \right) \cdot (2s)^2 \leq 
 \frac{1}{2} \cdot (2s)^2.
\]
Dropping the negative term in the middle of \cref{eq:x_tau_squared} and rearranging, yields,
\[
    \Ex{\tau} \leq \frac{5s^2}{\sigma^2}. \qedhere
\]
\end{proof}

\begin{lem}\label{lem:drift_walk_two}
Let $X^{t}$ be a stochastic process on the integers $\{0,1,\ldots, M\}$, for some finite $M > 0$.
Define $D^{t}:=X^{t}-X^{t-1}$ and assume for all $t \geq 0$,
\begin{align*}
   \Ex{ D^{t} \, \mid \, \mathfrak{F}^{t-1}} &\geq 0 \\
      \Ex{ (D^{t})^2 \, \mid \, \mathfrak{F}^{t-1}} &\geq \sigma^2.
\end{align*}
Let $X^0=s$, and define for some $k$ with $s < k \leq M$, the stopping time $\tau:=\min\{t \geq 0 \colon X^t \geq k\}$. Then,
\[
 \Ex{ \tau \, \left| \, X^0 = s \right.} \leq \frac{ \Ex{ (X^{\tau})^2 \,\left|\, X^0 = s \right.} - s^2}{\sigma^2}.
\]
\end{lem}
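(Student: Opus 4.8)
The plan is to track the sequence $Z^{t} := (X^{t})^2 - \sigma^2 \cdot t$ and apply the Optional Stopping Theorem at time $\tau$, exactly as in the proof of \cref{lem:drift_walk}. First I would check that $Z^{t}$ is a submartingale: writing $X^{t} = X^{t-1} + D^{t}$ and using that $X^{t-1} \geq 0$ (the process lives on $\{0,1,\ldots,M\}$),
\[
\Ex{ Z^{t} \,\mid\, \mathfrak{F}^{t-1} } = (X^{t-1})^2 + 2 X^{t-1} \cdot \Ex{ D^{t} \,\mid\, \mathfrak{F}^{t-1} } + \Ex{ (D^{t})^2 \,\mid\, \mathfrak{F}^{t-1} } - \sigma^2 t \geq (X^{t-1})^2 + \sigma^2 - \sigma^2 t = Z^{t-1},
\]
where the inequality uses $\Ex{ D^{t} \mid \mathfrak{F}^{t-1} } \geq 0$ together with $X^{t-1} \geq 0$, and $\Ex{ (D^{t})^2 \mid \mathfrak{F}^{t-1} } \geq \sigma^2$. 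Hence $Z^{t}$, and therefore the stopped sequence $Z^{t \wedge \tau}$, is a submartingale.

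Next I would establish $\Ex{\tau} < \infty$, which is the one step requiring a little care. From the submartingale property of the stopped process, $\Ex{ Z^{t \wedge \tau} } \geq Z^{0} = s^2$ for every $t \geq 0$, i.e.
\[
\sigma^2 \cdot \Ex{ t \wedge \tau } \leq \Ex{ (X^{t \wedge \tau})^2 } - s^2 \leq M^2,
\]
since $X^{t \wedge \tau} \in \{0,\ldots,M\}$ deterministically. Letting $t \to \infty$ and invoking monotone convergence gives $\Ex{\tau} \leq M^2/\sigma^2 < \infty$.

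Finally I would pass to the limit. The increments of $Z^{t \wedge \tau}$ are bounded (on $\{t \leq \tau\}$ they equal $(X^{t})^2 - (X^{t-1})^2 - \sigma^2$, which is at most $M^2 + \sigma^2$ in absolute value), and $\Ex{\tau} < \infty$, so the Optional Stopping Theorem for submartingales applies (equivalently, $|Z^{t\wedge\tau}| \leq M^2 + \sigma^2 \tau$ is integrable, so dominated convergence lets us take $t\to\infty$ in $\Ex{Z^{t\wedge\tau}} \geq Z^0$). This yields $\Ex{ Z^{\tau} } \geq Z^{0} = s^2$, that is $\Ex{ (X^{\tau})^2 } - \sigma^2 \cdot \Ex{\tau} \geq s^2$, and rearranging gives the claimed inequality $\Ex{\tau \mid X^0 = s} \leq \bigl( \Ex{(X^{\tau})^2 \mid X^0 = s} - s^2 \bigr)/\sigma^2$. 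The only genuinely delicate point is the justification of this limit exchange / optional stopping, and the finiteness of the state space $\{0,\ldots,M\}$ is precisely what makes it routine; everything else is a direct computation.
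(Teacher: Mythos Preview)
Your proof is correct and follows essentially the same route as the paper: define $Z^{t}=(X^{t})^{2}-\sigma^{2}t$, verify it is a submartingale, and apply optional stopping. The only difference is in justifying $\Ex{\tau}<\infty$: the paper argues that the hypotheses force $\Pro{D^{t}>0\mid\mathfrak{F}^{t-1}}\geq\kappa(\sigma)>0$, whereas you extract $\Ex{t\wedge\tau}\leq M^{2}/\sigma^{2}$ directly from the stopped submartingale and pass to the limit by monotone convergence---your version is more self-contained and avoids the extra probabilistic step.
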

\begin{proof}
Similarly to \cref{lem:drift_walk}, define
\[
 Z^t := (X^t)^2 - \sigma^2 \cdot t.
\]
We now show that $Z^t$ is a submartingale for any $t - 1 < \tau$,
\begin{align*}
\Ex{Z^t \mid \mathfrak{F}^{t-1}} & = \Ex{(X^{t-1} + D^t)^2 - \sigma^2 \cdot t \mid \mathfrak{F}^{t-1}} \\
 & = \Ex{(D^t)^2 \mid \mathfrak{F}^{t-1}} + 2 \cdot X^{t-1} \cdot \Ex{D^t \mid \mathfrak{F}^{t-1}} + (X^{t-1})^2 - \sigma^2 \cdot t \\
 & \stackrel{(a)}{\geq} \sigma^2 + 0 + (X^{t-1})^2 - \sigma^2 \cdot t \\
 & = Z^{t-1},
\end{align*}
using in $(a)$ that $\Ex{ D^{t} \, \mid \, \mathfrak{F}^{t-1}} \geq 0$ and $\Ex{ (D^{t})^2 \, \mid \, \mathfrak{F}^{t-1}} \geq \sigma^2$. 

The preconditions imply that $\Pro{D^t > 0 \mid \mathfrak{F}^{t-1}} > \kappa$ for some $\kappa := \kappa(\sigma) > 0$, so $\ex{\tau} < \infty$. Further note that $|Z^{t+1} - Z^t| \leq M^2 + \sigma^2$. Hence, applying the Optional Stopping Theorem,
\begin{align*}
  \Ex{ Z^{\tau} } \geq Z^0 = s^2.
\end{align*}
Hence, 
\begin{align*}
  s^2 \leq \Ex{ \left. (X^{\tau})^2 \,\right|\, X^0 = s} - \sigma^2 \cdot \Ex{\tau \,\left|\, X^0 = s \right.}.
\end{align*}
By re-arranging we get the claim.
\end{proof}

\end{document}